\documentclass{article}

\usepackage{hyperref}
\usepackage{verbatim}
\usepackage{amsmath, amsthm, amssymb, graphics, mathtools, graphicx, xcolor}
\usepackage{enumitem}

\newtheorem{theorem}{Theorem}[section]
\newtheorem{lemma}[theorem]{Lemma}
\newtheorem{construction}[theorem]{Construction}
\newtheorem{corollary}[theorem]{Corollary}
\newtheorem{definition}[theorem]{Definition}
\newtheorem{conjecture}[theorem]{Conjecture}

\newtheorem{observation}[theorem]{Observation}
\newtheorem{claim}{Claim}

\DeclarePairedDelimiter\floor{\lfloor}{\rfloor}

\newcommand{\bpc}{\noindent {\em Proof of Claim~\theclaim. }}
\newcommand{\epc}{This proves Claim~\theclaim.}
\newcommand{\cpl}{This completes the proof.}
\newcommand{\sm}{\setminus}
\newcommand{\tri}{\Delta}
\newcommand{\seq}{\subseteq}
\newcommand{\myBox}{\textsc{Box}}
\newcommand{\Dom}{\textsc{Dom}}
\newcommand{\Scope}{\textsc{Scp}}

\newcommand{\bigO}{\mathcal{O}}

\DeclareMathOperator{\tw}{tw}
\DeclareMathOperator{\cw}{cw}
\DeclareMathOperator{\rw}{rw}
\DeclareMathOperator{\pw}{pw}
\DeclareMathOperator{\rank}{rank}

\makeatletter
\newtheorem*{rep@theorem}{\rep@title}
\newcommand{\newreptheorem}[2]{%
\newenvironment{rep#1}[1]{%
 \def\rep@title{#2 \ref{##1}}%
 \begin{rep@theorem}}%
 {\end{rep@theorem}}}
\makeatother

\newreptheorem{theorem}{Theorem}
\newreptheorem{corollary}{Corollary}
\newreptheorem{lemma}{Lemma}

\begin{document}

\title{(Theta, triangle)-free and (even hole, $K_4$)-free graphs. Part
  1: Layered wheels}

\author{Ni Luh Dewi Sintiari\thanks{Univ Lyon, EnsL, UCBL, CNRS, LIP,
    F-69342, LYON Cedex 07, France. \newline The authors are partially
    supported by the LABEX MILYON (ANR-10-LABX-0070) of Universit\'e
    de Lyon within the program ‘‘Investissements d'Avenir’’
    (ANR-11-IDEX-0007) operated by the French National Research Agency
    (ANR), and by Agence Nationale de la Recherche (France) under 
    research grant ANR DIGRAPHS ANR-19-CE48-0013-01.}~, 
    Nicolas Trotignon\footnotemark[1]}

\maketitle

\begin{abstract}
  We present a construction called layered wheel. Layered wheels are
  graphs of arbitrarily large treewidth and girth. They might be an
  outcome for a possible theorem characterizing graphs with large
  treewidth in terms of their induced subgraphs (while such a
  characterization is well-understood in terms of minors).  They also
  provide examples of graphs of large treewidth and large rankwidth in
  well-studied classes, such as (theta, triangle)-free graphs and
  even-hole-free graphs with no $K_4$ (where a hole is a chordless
  cycle of length at least four, a theta is a graph made of three
  internally vertex disjoint paths of length at least two linking two
  vertices, and $K_4$ is the complete graph on four vertices).
\end{abstract}

\section{Introduction}
\label{sec:introduction}

In this article, all graphs are finite, simple, and undirected. The
vertex set of a graph $G$ is denoted by $V(G)$ and the edge set by
$E(G)$.  A graph $H$ is an {\em induced subgraph} of a graph $G$ if
some graph isomorphic to $H$ can be obtained from $G$ by deleting
vertices. A graph $H$ is a {\em minor} of a graph $G$ if some graph
isomorphic to $H$ can be obtained from $G$ by deleting vertices,
deleting edges, and contracting edges.

When we say that {\em $G$ contains $H$} without specifying {\em as a
  minor} or {\em as an induced subgraph}, we mean that $H$ is an
induced subgraph of $G$.  A graph is {\em $H$-free} if it does not
contain $H$ (so, as an induced subgraph). For a family of graphs
${\mathcal H}$, $G$ is {\em ${\mathcal H}$-free} if for every
$H\in {\mathcal H}$, $G$ is $H$-free.  A class of graphs is {\em
  hereditary} if it is ${\mathcal H}$-free for some ${\mathcal H}$ or,
equivalently, if it is closed under taking induced subgraphs.  A {\em
  hole} in a graph is a chordless cycle of length at least four. It is
\emph{odd} or \emph{even} according to its length (that is its number
of edges). We denote by $K_\ell$ the complete graph on $\ell$
vertices. 

The present work is originally motivated by a question asked by
Cameron et al. in~\cite{CameronCH18}: is the
treewidth (or cliquewidth) of an even-hole-free graph bounded by a
function of its clique number? In this first part, we describe a
construction called \emph{layered wheel} showing that the answer is
no.  In the second part, we will show that under additional
restrictions, the treewidth is bounded. We postpone the formal
definition of a layered wheel to Section~\ref{sec:construction}
although we use the term several times until then. There are three
main motivations:

\begin{itemize}
\item When considering the induced subgraph relation (instead of the
  minor relation), is there a theorem similar to the celebrated
  grid-minor theorem of Robertson and Seymour?
\item A better understanding of the classes defined by excluding the
  so-called Truemper configurations, that play an important role in
  hereditary classes of graphs.
\item The structure of even-hole-free graphs.
\end{itemize}

We now give details on each of the three items.

\subsection*{The grid-minor theorem}

The {\em treewidth} of a graph is an integer measuring how far is the
graph from being a tree (far here means the difficulty of decomposing the graph
in a kind of tree-structure).  We give a formal definition of treewidth in
Section~\ref{sec:terminology}. 

The $(k \times k)$-{\em grid} is the graph on
$\{(i, j): 1\leq i, j \leq k\}$ where two distinct ordered pairs
$(i, j)$ and $(i', j')$ are adjacent whenever exactly one of the
following holds: $|i-i'| = 1$ and $j=j'$, or $i =i'$ and $|j-j'| =1$
(see Figure~\ref{fig:grid}). Robertson and
Seymour~\cite{RobertsonS86} proved that there exists
a function $f$ such that every graph with treewidth at least~$f(k)$
contains a $(k \times k)$-grid as a minor (see~\cite{chuzhoy:16} for
the best function known so far).  This is called the {\em grid-minor
  theorem}.  The $(k \times k)$-{\em wall} is the graph obtained from
the $(k \times k)$-grid by deleting all edges with form
$(2i+1, 2j)-(2i+1, 2j+1)$ and $(2i, 2j+1)-(2i, 2j+2)$.

\begin{figure}
  \begin{center}
    \includegraphics[height=2.5cm]{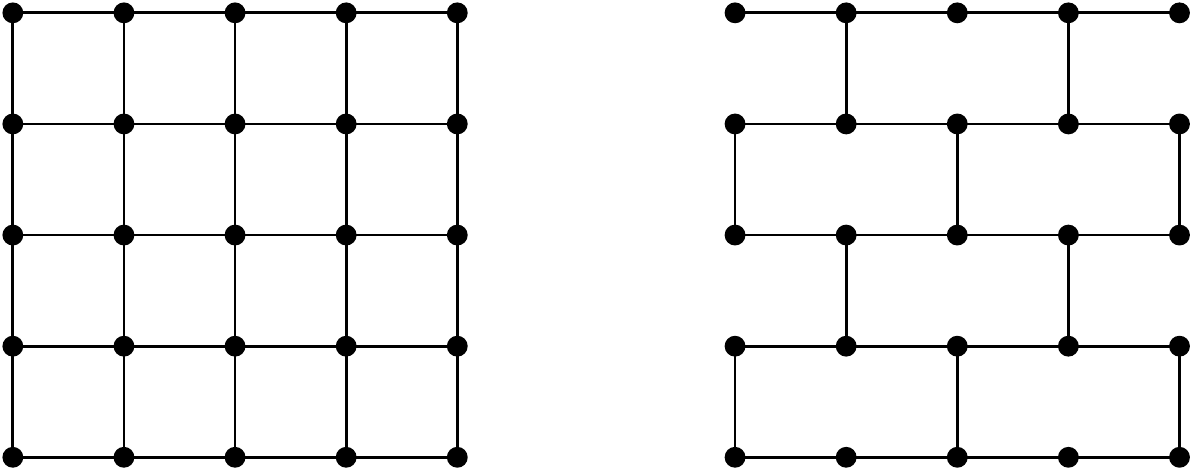}
  \end{center}  
  \caption{A grid and a wall\label{fig:grid}}
\end{figure}

{\em Subdividing $k$ times} an edge $e=uv$ of a graph, where
$k\geq 1$, means deleting $e$ and adding a path $u w_1\dots w_k v$.
The {\em $k$-subdivision} of a graph $G$ is the graph obtained from
$G$ by subdividing $k$-times all its edges (simultaneously).  Note
that replacing ``grid'' by a more specific graph in the grid-minor
theorem, such as $k$-subdivision of a $(k \times k)$-grid,
$(k \times k)$-wall, or $k$-subdivision of a $(k \times k)$-wall 
provides statements that are formally weaker (at the expense of a
larger function), because a large grid contains a large subdivision
of a grid, a large wall, and a large subdivision of a wall.  However,
these trivial corollaries are in some sense stronger, because walls,
subdivisions of walls, and subdivision of grids are graphs of large
treewidth that are more sparse than grids. So they somehow certify a
large treewidth with less information. Since one can always subdivide
more, there is no ``ultimate'' theorem in this direction.

\begin{figure}
  \begin{center}
    \includegraphics[height=2.5cm]{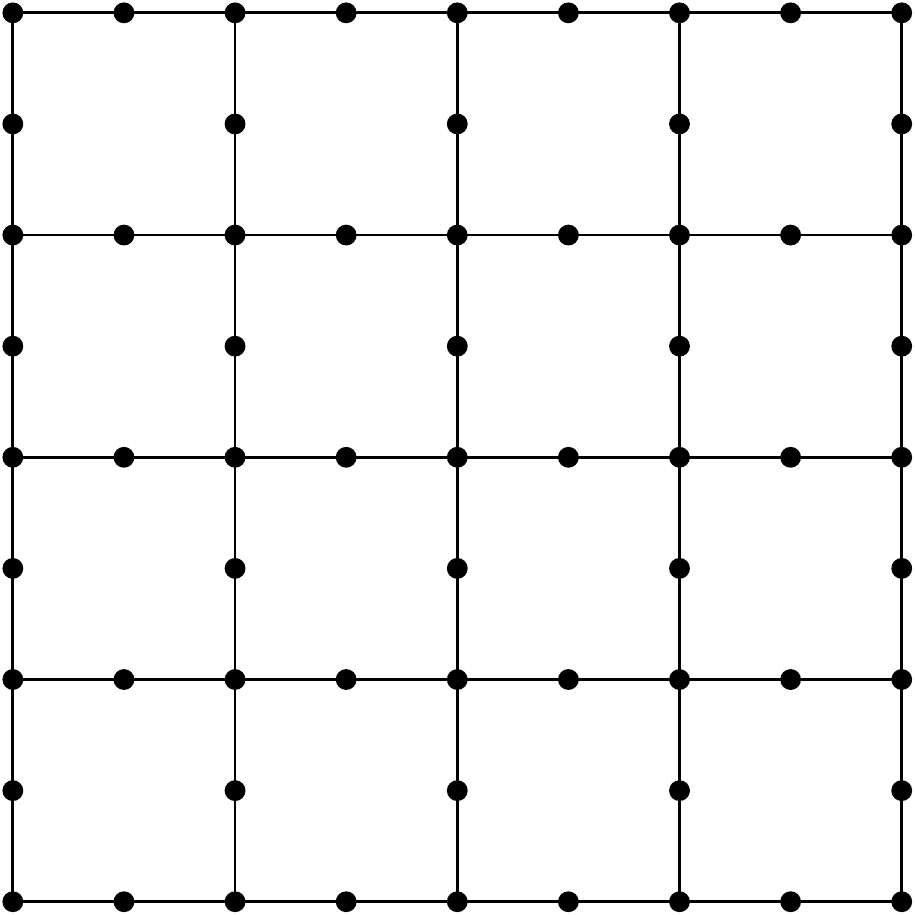}
    \hspace{.2em}
    \includegraphics[height=2.5cm]{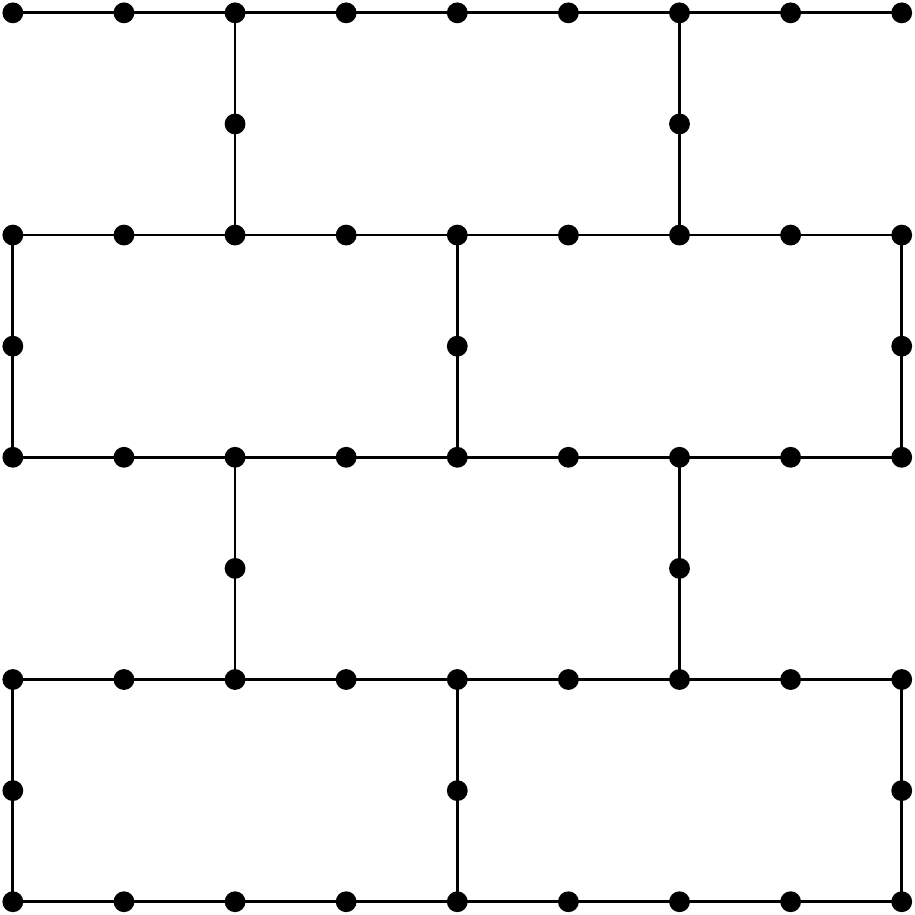}
    \hspace{.2em}
    \includegraphics[height=2.5cm]{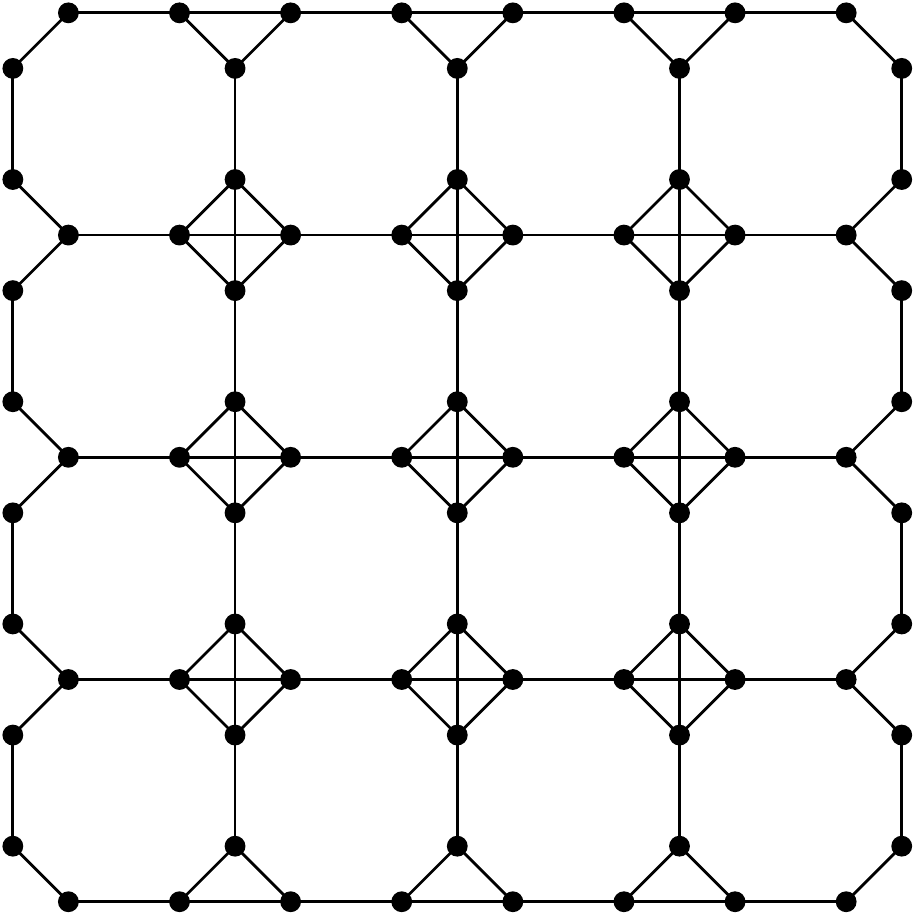}
    \hspace{.2em}
    \includegraphics[height=2.5cm]{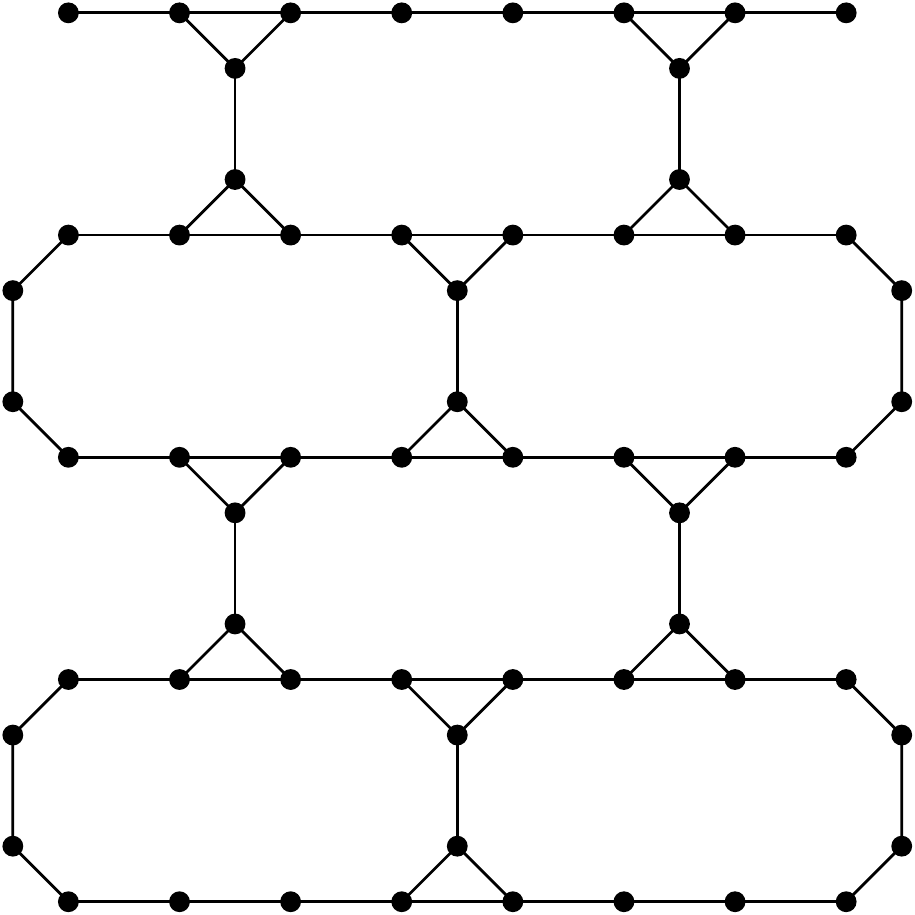}
  \end{center}
  \caption{A subdivision of a grid, of a wall, and the line graphs of the
    former\label{fig:grid-wall-subdiv-linegraph}}
\end{figure}

It would be useful to have a similar theorem with ``induced subgraph''
instead of ``minor''.  Simply replacing ``minor'' with ``induced
subgraph'' in the statement is trivially false, and here is a list of
known counter-examples: $K_k$, $K_{k,k}$, subdivisions of walls, line
graphs of subdivisions of walls (see Figure~\ref{fig:grid-wall-subdiv-linegraph}), where
$K_k$ denotes the complete graph on $k$ vertices, 
$K_{k, k}$ denotes the complete bipartite graph with each side of size $k$, and where the
{\em line graph} of a graph $R$ is the graph $G$ on $E(R)$ where two
vertices in $G$ are adjacent whenever they are adjacent edges of $R$.

One of our results is that the simple list above is not complete.  In
section~\ref{sec:layeredWheels}, we present a construction that we
call {\em layered wheel}.  Layered wheels have large treewidth and
large girth (the {\em girth} of a graph is the length of its
shortest cycle).  Large girth implies that they contain no $K_k$, no
$K_{k,k}$, and no line graphs of subdivisions of walls. Moreover,
layered wheels contain no subdivisions of $(3, 5)$-grids (this is 
explained after Lemma~\ref{lem:ttf-exist}).

We leave an open question asked by Zden\v ek Dvo\v r\'ak (personal
communication): is it true that for some function $f$ every graph with
treewidth at least~$f(k)$ contains either $K_k$, $K_{k, k}$, a
subdivision of the $(k \times k)$-wall, the line graph of some
subdivision of the $(k \times k)$-wall, or some variant of the layered
wheel with at least~$k$ layers?  In the next paragraphs, we give variants
of Dvo\v r\'ak's question.

\subsection*{Truemper configurations}

A {\em prism} is a graph made of three vertex-disjoint chordless paths
$P_1 = a_1 \dots b_1$, $P_2 = a_2 \dots b_2$, $P_3 = a_3 \dots b_3$ of
length at least 1, such that $a_1a_2a_3$ and $b_1b_2b_3$ are triangles
and no edges exist between the paths except those of the two
triangles.  Such a prism is also referred to as a
$3PC(a_1a_2a_3,b_1b_2b_3)$ or a $3PC(\Delta ,\Delta )$ (3PC stands for
{\em 3-path-configuration}).

A {\em pyramid} is a graph made of three chordless paths
$P_1 = a \dots b_1$, $P_2 = a \dots b_2$, $P_3 = a \dots b_3$ of
length at least one, two of which have length at least two, vertex-disjoint
except at $a$, and such that $b_1b_2b_3$ is a triangle and no edges
exist between the paths except those of the triangle and the three
edges incident to $a$.  Such a pyramid is also referred to as a
$3PC(b_1b_2b_3,a)$ or a $3PC(\Delta ,\cdot)$.

A {\em theta} is a graph made of three internally vertex-disjoint
chordless paths $P_1 = a \dots b$, $P_2 = a \dots b$,
$P_3 = a \dots b$ of length at least two and such that no edges exist
between the paths except the three edges incident to $a$ and the three
edges incident to $b$.  Such a theta is also referred to as a
$3PC(a, b)$ or a $3PC(\cdot ,\cdot)$.

\begin{figure}
  \begin{center}
    \includegraphics[height=2cm]{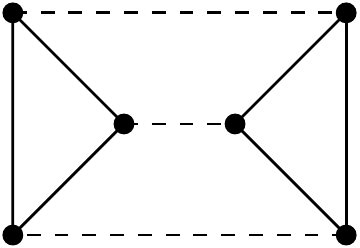}
    \hspace{.2em}
    \includegraphics[height=2cm]{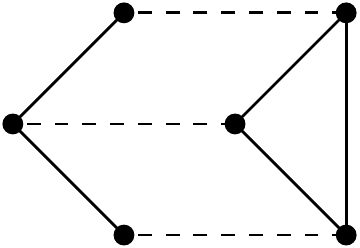}
    \hspace{.2em}
    \includegraphics[height=2cm]{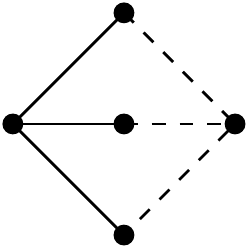}
    \hspace{.2em}
    \includegraphics[height=2cm]{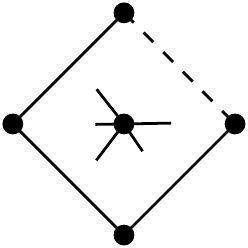}
  \end{center}
  \caption{Prism, pyramid, theta, and wheel (dashed lines represent
    paths)\label{fig:truemper-config}} 
\end{figure}

Observe that the lengths of the paths in the three definitions above
are designed so that the union of any two of the paths induces a hole.  A
{\em wheel} $W = (H, c)$ is a graph formed by a hole $H$ (called the
{\em rim}) together with a vertex $c$ (called the {\em center}) that
has at least three neighbors in the hole.

A {\em 3-path-configuration} is a graph isomorphic to a prism, a
pyramid, or a theta.  A~{\em Truemper configuration} is a graph
isomorphic to a prism, a pyramid, a theta, or a wheel.  They appear in
a theorem of Truemper~\cite{truemper} that characterizes graphs whose
edges can be labeled so that all chordless cycles have prescribed
parities (3-path-configurations seem to have first appeared in a
paper Watkins and Mesner~\cite{watkinsMesner:cycle}).

Truemper configurations play an important role in the analysis of
several important hereditary graph classes, as explained in a survey
of Vu\v skovi\'c~\cite{vuskovic:truemper}.  Let us simply mention here
that many decomposition theorems for classes of graphs are proved by
studying how some Truemper configurations contained in the graph
attaches to the rest of the graph, and often, the study relies on the
fact that some other Truemper configurations are excluded from the
class.  The most famous example is perhaps the class of {\em perfect
  graphs}.  In these graphs, pyramids are excluded, and how a prism
contained in a perfect graphs attaches to the rest of the graph is
important in the decomposition theorem for perfect graphs, whose
corollary is the celebrated {\em Strong Perfect Graph Theorem} due to
Chudnovsky, Robertson, Seymour, and
Thomas~\cite{chudnovsky.r.s.t:spgt}.  See also~\cite{nicolas:perfect}
for a survey on perfect graphs, where a section is specifically
devoted to Truemper configurations.  Many other examples exist, see
\cite{diotRaTrVu:15} for a long list of them.

Some researchers started to study systematically classes defined by
excluding some Truemper configurations~\cite{diotRaTrVu:15}.
We believe that among many classes that can be defined in that way,
the class of theta-free graphs is one of the most interesting classes.  This
is because it generalizes claw-free graphs (since a theta contains a
claw), and so it is natural to ask whether it shares the most interesting 
features of claw-free graphs: a structural description (see
\cite{chudnovsky.seymour:claw4}), a polynomial time algorithm for the
maximum stable set (see~\cite{faenzaOrioloStauffer:clawFree}), an
approximation algorithms for the chromatic number~(see
\cite{king:these}), a polynomial time algorithm for the induced
linkage problem (see~\cite{fialaKLP:12}), and a polynomial
$\chi$-bounding function (see~\cite{gyarfas:perfect}).

In the attempt of finding a structural description of theta-free
graphs, a seemingly easy case is when triangles are also excluded.
Because then, every vertex of degree at least~3 is the center of a
claw (therefore a possible start for a theta), so that excluding
theta and triangle should enforce some structure.  Supporting this
idea, Radovanovi\'c and Vu\v skovi\'c~\cite{radovanovicV:theta} proved
that every (theta, triangle)-free graph is 3-colorable.

Hence, we believed when starting this work that (theta, triangle)-free
graphs have bounded treewidth. But this turned out to be false:
layered wheels are (theta, triangle)-free graphs of arbitrarily large
treewidth.

However, on the positive side, we note that layered wheels need many
vertices to increase the treewidth.  More specifically, a layered
wheel $G$ is made of $l+1$ layers, where $l$ is an integer.  Each layer
is a path and $|V(G)|\geq 2^l$
(see Lemma~\ref{lem:neighborhood-ttf-layered-wheel}), $l \leq \tw(G) \leq 2l$
(see Theorems~\ref{th:largeTW} and~\ref{th:bounded-pw}).  So, the
treewidth of a layered wheel is ``small'' in the sense that it is
logarithmic in the size of its vertex set. We wonder whether such a
behavior is general in the sense of the following conjecture.

\begin{conjecture}
  \label{conj:log-treewidth-ttf}
  For some constant $c$, if $G$ is a (theta, triangle)-free graph,
  then the treewidth of $G$ is at most $c\log |V(G)|$.
\end{conjecture}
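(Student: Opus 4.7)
The plan is to prove this via balanced separators: show that every induced subgraph $H$ of a (theta, triangle)-free graph admits a balanced separator of size $O(\log |V(H)|)$, and invoke the standard link between the ``separation number'' of all induced subgraphs and the treewidth of the original graph (which, up to a constant factor, controls $\tw(G)$).

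The starting point is to distill the local consequences of the two exclusions. Triangle-freeness makes every open neighborhood a stable set. Theta-freeness, combined with triangle-freeness, forbids $K_{2,3}$ (two nonadjacent vertices with three common neighbors would, together with those neighbors, contain a theta), so any pair of nonadjacent vertices shares at most two common neighbors. Consequently $G$ is rather sparse (Kővári--Sós--Turán gives $|E(G)|=O(|V(G)|^{3/2})$), and neighborhoods are highly constrained. I would next pick an arbitrary vertex $v$ and look at a BFS layering $L_0,L_1,\dots,L_d$ from $v$, and try to prove that consecutive layer sizes $|L_i|$ grow geometrically: because of the $K_{2,3}$-free property, a layer $L_i$ of size $m$ cannot be ``supported'' by a previous layer of size much less than $\sqrt{m}$, and I would attempt to amplify such incidence bounds into a true exponential growth, at least up to the point where the union of the visited layers captures a constant fraction of $V(H)$. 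This would bound the relevant depth by $O(\log |V(H)|)$, making the smallest layer in the middle a balanced separator of size $O(\log |V(H)|)$.

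If a naive BFS does not yield the required growth (trees already show that radius can be linear), the fallback is to iterate: after a BFS from $v$, identify a ``moderate'' central layer, peel off the outer side which is now smaller, and recurse inward with a new BFS root. An alternative route, closer to the spirit of the paper, is to try to prove the contrapositive structurally: if a (theta, triangle)-free graph has treewidth much larger than $\log |V(G)|$, then it must contain a configuration resembling the layered wheel, and one would try to derive from the extremal behaviour of that configuration a lower bound on $|V(G)|$ exponential in the treewidth.

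The main obstacle, and the reason this is stated as a conjecture rather than a theorem, is the last passage above: transferring the extremely rigid combinatorics of layered wheels to an arbitrary (theta, triangle)-free graph. Exploiting only the $K_{2,3}$-free property seems to yield separators of size $n^{\Omega(1)}$ rather than $O(\log n)$, so one would need to use theta-freeness in a way that controls not just pairs of vertices but long ``parallel paths''---ruling out chains of nested wheel-like gadgets that could in principle inflate the treewidth super-logarithmically. I would expect this step to require a Truemper-configuration-style analysis of how holes through a vertex attach to the rest of the graph, in the spirit of the decomposition theorems cited in the introduction.
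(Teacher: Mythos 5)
The statement you were asked to prove is Conjecture~\ref{conj:log-treewidth-ttf}, and the paper does not prove it: it is stated as an open problem, supported only by the layered-wheel construction itself, which shows that the logarithmic bound would be asymptotically tight if true (an $(l,k)$-ttf-layered-wheel has treewidth at least $l$ by Theorem~\ref{th:largeTW} and at least $3^l$ vertices by Lemma~\ref{lem:neighborhood-ttf-layered-wheel}). So there is no proof of the authors' to compare yours with; the only question is whether your argument settles the conjecture, and---as you candidly admit in your final paragraph---it does not. Two ingredients of your plan are sound: the reduction to showing that every induced subgraph admits a balanced separator of size $O(\log n)$ (treewidth and separation number do agree up to a constant factor, by a theorem of Dvo\v r\'ak and Norin), and the observation that a (theta, triangle)-free graph contains no $K_{2,3}$ subgraph, since two nonadjacent vertices with three common neighbours would give an induced $K_{2,3}$ (triangle-freeness forces the three common neighbours to be pairwise nonadjacent), which is a theta.

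The gap is everything after that. The claim that BFS layers must grow geometrically is unsupported and cannot be extracted from $K_{2,3}$-freeness alone: bipartite incidence graphs of projective planes are $C_4$-free, hence $K_{2,3}$-free, have $\Theta(n^{3/2})$ edges, and have treewidth polynomial in $n$ (they are expanders), so no argument that uses only this local consequence of theta-freeness can produce separators of size $O(\log n)$---a limitation you yourself point out. Such graphs of course contain thetas, which is precisely the issue: the conjecture requires exploiting theta-freeness globally, on long parallel paths, and neither your iterated-peeling fallback nor the proposed contrapositive (extracting a layered-wheel-like substructure from large treewidth and bounding its order from below) is developed beyond a statement of intent. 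What you have written is an accurate diagnosis of why the conjecture is hard and of where the difficulty lies, not a proof of it; as far as the paper (and the literature it cites) is concerned, no proof is known.
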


This conjecture reflects our belief that constructions similar to the
layered wheel must have an exponential number of vertices (exponential
in the treewidth). It suggests the following variant of Dvo\v r\'ak's
question: is it true that for some constant $c>1$ and some function
$f$, every graph with treewidth at least~$f(k)$ contains either
$K_k$, $K_{k, k}$, a subdivision of the $(k \times k)$-wall, the line
graph of some subdivision of the $(k \times k)$-wall, or has at
least~$c^{f(k)}$ vertices?

Kristina Vu\v skovi\'c observed that $K_{k, k}$ is a (prism,
pyramid, wheel)-free graph, or equivalently an \emph{only-theta graph}
(because the theta is the only Truemper configuration contained in
it). Moreover, walls are only-theta graphs, line graphs of subdivisions of
walls are only-prism graphs, and triangle-free layered wheels are only-wheel
graphs. Observe that complete graphs contain no Truemper
configuration, so they are simultaneously only-prism, only-wheel,
and only-theta. One may wonder whether a graph with large treewidth
should contain an induced subgraph of large treewidth with a
restricted list of induced subgraphs isomorphic to Truemper
configurations.

\subsection*{Even-hole-free graphs}

Our last motivation for this work is a better understanding of
even-hole-free graphs.  These are related to Truemper configurations
because thetas and prisms obviously contain even holes (to see this,
consider two paths of the same parity among the three paths that form the
configuration). Also, call {\em even wheel} a wheel $W= (H, c)$ where
$c$ has an even number of neighbors in $H$.  It is easy to check that
every even wheel contains an even hole.

Even-hole-free graphs were originally studied to experiment techniques
that would help to settle problems on perfect graphs. This has
succeeded, in the sense that the decomposition theorem for
even-hole-free graphs (see~\cite{vuskovic:evensurvey}) is in some
respect similar to the one that was later on discovered for perfect
graphs (see~\cite{chudnovsky.r.s.t:spgt}).  However, classical
problems such as graph coloring or maximum stable set, are polynomial
time solvable for perfect graphs, while they are still open for
even-hole-free graphs.  This is a bit strange because the
decomposition theorem for even-hole-free graphs is in many respect
simpler than the one for perfect graphs.  Moreover, it is easy to
provide perfect graphs of arbitrarily large treewidth (or even
rankwidth), such as bipartite graphs, or their line graphs.  On the other hand,
for even-hole-free graphs, apart from complete graphs, it is not so easy.
Some constructions are known, see~\cite{adlerLMRTV:rwehf}.

But so far, every construction of even-hole-free graphs of arbitrarily
large treewidth (or rankwidth) contains large cliques.  Moreover, it
is proved in~\cite{CameronSHV18} that (even hole,
triangle)-free graphs have bounded treewidth.  This is based on a
structural description of the class
from~\cite{ConfortiCKV00}.  Hence, Cameron et
al.~\cite{CameronCH18} asked whether (even hole,
$K_4$)-free graphs have bounded treewidth.  We prove in this article
that it is not the case, by a variant of the layered wheel
construction (see Theorem~\ref{th:layeredWheel-EHF}). 
As for (theta, triangle)-free, we need a large number of
vertices to grow the treewidth, so we propose the following
conjecture.

\begin{conjecture}
  \label{conj:log-treewidth-ehf}
  There exists a constant $c$ such that for any (even hole, $K_4$)-free graph $G$,
  the treewidth of $G$ is at most $c\log |V(G)|$.
\end{conjecture}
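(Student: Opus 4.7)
The plan is to combine the decomposition theorem for even-hole-free graphs of Conforti, Cornu\'ejols, Kapoor, and Vu\v skovi\'c with the additional structural constraints coming from $K_4$-freeness, and then to convert the resulting recursive decomposition into a balanced one of logarithmic depth. This parallels the approach used in~\cite{CameronSHV18} to bound the treewidth of (even hole, triangle)-free graphs, but now we must accommodate triangles.

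First I would analyze the basic classes appearing in the even-hole-free decomposition theorem under the additional assumption that $K_4$ is forbidden. For each such class I would aim to show that its treewidth is already $O(\log|V(G)|)$: this should be routine for the sparse basic classes and, for the denser ones, should follow from the observation that in a $K_4$-free graph the neighborhood of any vertex induces a triangle-free graph, which in turn limits how a large clique-like object can attach to the rest.

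Next I would handle the cutsets (star cutsets, 2-joins, and similar) produced by the decomposition theorem. For each cutset the idea would be to decompose recursively on the two sides, placing the cutset (or a small transversal of it) inside the topmost bags, and then to glue the two tree decompositions along a shared interface. For a star cutset $N[v]$, $K_4$-freeness forces $N(v)$ to be triangle-free, which one may hope to exploit via Ramsey-type or $\chi$-bounding arguments to control the effective width of the interface even when $|N(v)|$ itself is large.

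The main obstacle, and the reason the statement remains a conjecture rather than a theorem, is that the decomposition theorem does not produce balanced cutsets: a single star cutset or 2-join may split $G$ into highly unbalanced pieces, and a depth-$n$ recursion would inflate the treewidth well beyond $\log n$. Overcoming this would most likely require an auxiliary ``inverse'' of Theorem~\ref{th:layeredWheel-EHF}, asserting that every (even hole, $K_4$)-free graph of treewidth at least~$t$ contains, as an induced subgraph, a layered-wheel-like structure with $\Omega(t)$ layers and therefore at least $2^{\Omega(t)}$ vertices. Proving such a structural obstruction theorem, which would essentially certify that the layered wheel of Section~\ref{sec:construction} is the unique obstacle to a logarithmic bound, is where I would expect the bulk of the effort to lie.
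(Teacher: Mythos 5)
The statement you are addressing is Conjecture~\ref{conj:log-treewidth-ehf}: the paper states it as an open conjecture, motivated by the fact that the layered-wheel construction of Theorem~\ref{th:layeredWheel-EHF} needs exponentially many vertices to reach treewidth $l$, and offers no proof. Your text is likewise not a proof but a programme, and you yourself name the step that is missing, so the honest verdict is that there is a genuine gap --- in fact two.

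First, the cutsets. The decomposition theorem for even-hole-free graphs uses star cutsets (and 2-joins), and a star cutset $N[v]$ can have size linear in $|V(G)|$. Observing that $K_4$-freeness makes $N(v)$ triangle-free does not bound $|N(v)|$ or any transversal of it: triangle-free graphs can be arbitrarily large and even have arbitrarily large treewidth, so Ramsey-type or $\chi$-bounding arguments give no control over the width of the interface bags. Without a bound of order $\log|V(G)|$ on the portion of the cutset that must be replicated when gluing the two tree decompositions, the recursion yields nothing. Second, the balance issue you flag is real and unresolved: even with small cutsets, a recursion of depth $\Theta(n)$ would accumulate width far beyond $c\log n$, and the decomposition theorem provides no balanced cutsets. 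Your proposed remedy --- an ``inverse'' of Theorem~\ref{th:layeredWheel-EHF} asserting that every (even hole, $K_4$)-free graph of treewidth $t$ contains an induced layered-wheel-like structure on $2^{\Omega(t)}$ vertices --- is not a lemma you establish; it is essentially a restatement of the conjecture, since such an obstruction theorem would immediately imply the logarithmic bound. The proposal therefore reduces the conjecture to an at-least-as-hard open problem rather than proving it.
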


Our construction of even-hole-free layered wheels contains diamonds (a
\emph{diamond} is a graph obtained for $K_4$ by removing an edge).  We
therefore propose the following conjecture.

\begin{conjecture}
  \label{conj:diamond}
  Even-hole-free graphs with no $K_4$ and no diamonds have bounded
  treewidth. 
\end{conjecture}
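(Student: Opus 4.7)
The plan is to combine the decomposition theorem for even-hole-free graphs due to Conforti, Cornu\'ejols, Kapoor, and Vu\v skovi\'c~\cite{ConfortiCKV00} with the techniques used in~\cite{CameronSHV18} to bound the treewidth of (even hole, triangle)-free graphs. The key structural observation to exploit is that if $G$ is both $K_4$-free and diamond-free, then every edge of $G$ lies in at most one triangle, so the triangles of $G$ are edge-disjoint and, informally, behave like isolated local obstructions rather than as building blocks of dense substructures.

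First I would study how an individual triangle $abc$ attaches to the rest of the graph. In a (diamond, $K_4$, even hole)-free graph, no vertex outside $\{a,b,c\}$ has two neighbors in $\{a,b,c\}$, so each triangle is nearly simplicial in its incidence with the rest of $G$. I would use this to define a local reduction that replaces every triangle by a small triangle-free gadget, for instance by subdividing each edge of the triangle or by contracting two well-chosen vertices, producing a triangle-free graph $G'$. Then I would verify that $G'$ remains even-hole-free and that the treewidths of $G$ and $G'$ differ by at most a constant factor, either by a direct argument or by lifting a tree-decomposition of $G'$ to one of $G$ while paying only a bounded additive cost per triangle expanded.

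Once the problem is reduced to the triangle-free case, the bound from~\cite{CameronSHV18} on (even hole, triangle)-free graphs can be invoked, and the resulting bound transported back to $G$. An alternative, possibly cleaner route is to work through balanced separators: show that every induced subgraph of a (diamond, $K_4$, even hole)-free graph admits a small balanced separator, either directly or by showing that any large ``wheel-like'' or ``theta-like'' substructure forced by large treewidth must create either a diamond, a $K_4$, or an even hole.

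The main obstacle is controlling the reduction or the separator argument without losing the even-hole-free property. Naively subdividing or contracting inside an even-hole-free graph can easily create new even holes or a new diamond far from the local change, so the reduction must be defined with great care and its correctness will probably rely on detailed case analysis of how triangles sit inside $G$ relative to the cutsets supplied by~\cite{ConfortiCKV00}. On the positive side, the layered wheel construction of this paper uses diamonds in an essential way to link consecutive layers without creating even holes, which gives some intuitive reason to believe that forbidding diamonds destroys the mechanism producing unbounded treewidth; turning this intuition into a proof that rules out every hypothetical construction is precisely why the statement is posed here as a conjecture rather than a theorem.
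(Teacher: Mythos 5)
The statement you are addressing is Conjecture~\ref{conj:diamond}, and the paper offers no proof of it: it is posed as an open problem, motivated only by the observation that the ehf-layered-wheel construction relies on diamonds (the type~2 vertices and their two adjacent ancestors) to link layers without creating even holes. So there is no argument in the paper to compare yours against, and your text, read as a proof attempt, does not close the gap either --- it is a research plan whose decisive steps are exactly the ones left unexecuted. Concretely: (i) the local observation that in a (diamond, $K_4$)-free graph no outside vertex has two neighbours in a triangle is correct, but it does not by itself yield a reduction; (ii) the proposed triangle-elimination gadget is the crux, and as you yourself note, subdividing an edge of a triangle immediately creates a $C_4$, i.e.\ an even hole, while contractions can create even holes or diamonds non-locally, so no candidate gadget is actually exhibited, let alone shown to preserve even-hole-freeness and to change treewidth by only a bounded amount; (iii) the appeal to~\cite{CameronSHV18} only applies once a genuinely (even hole, triangle)-free graph has been produced, which is the whole difficulty; and (iv) the ``balanced separator'' alternative is a restatement of the goal rather than an argument. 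Your closing paragraph concedes all of this, so the honest assessment is that you have correctly identified why the statement is plausible and where the obstruction lies, but you have not proved it --- which is consistent with its status in the paper as a conjecture.
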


(Even hole, pyramid)-free graphs attracted some attention 
(see~\cite{chudnovsky.t.t.v:ehfpyramidf}). 
It is therefore worth noting that even-hole-free layered wheels are pyramid-free
(see Theorem~\ref{th:layeredWheel-EHF_noPyramid}). 
We remark that it is also possible to obtain a variant of even-hole-free 
layered wheel that does contain pyramids. We omit giving all details about 
this construction that is still of interest because it might give indications 
about how an even-hole-free graph can be decomposed (or not) around a pyramid. 

We note that for the classes where we prove unbounded treewidth, the
cliquewidth (and therefore the rankwidth), to be defined later, is
also large (see Theorems~\ref{th:largeCW} and~\ref{th:lowerbound-rankwidth}).


\subsection*{Outline of the article}
 
In Section~\ref{sec:terminology}, we introduce the terminology used in
our proofs.

In Section~\ref{sec:construction}, we describe the construction of
layered wheels for two classes of graphs: (theta, triangle)-free
graphs and (even hole, $K_4$)-free graphs (in fact, we prove it for a
more restricted class namely (even hole, $K_4$, pyramid)-free graphs). We prove that the
constructions actually yield graphs in the corresponding classes (this is
non-trivial, see Theorems~\ref{th:layeredWheel-TTF},~\ref{th:layeredWheel-EHF},
and~\ref{th:layeredWheel-EHF_noPyramid}). We then prove that layered wheels have
unbounded treewidth (see Theorem~\ref{th:largeTW}) and cliquewidth
(see Theorem~\ref{th:largeCW}).

In Section~\ref{sec:lower-bound-rankwidth}, we recall the definition
of rankwidth. We exhibit (theta, triangle)-free graphs
and (even hole, $K_4$)-free graphs with large rankwidth. This is a
trivial corollary of Theorem~\ref{th:largeCW}, but the computation is
more accurate (see Theorem~\ref{th:lowerbound-rankwidth}).

In Section~\ref{sec:upper-bound}, we give an upper bound on the
treewidth of layered wheels. We prove a stronger result: the
so-called \emph{pathwidth} of layered wheels is bounded by some linear
function of the number of its layers (see Theorem~\ref{th:bounded-pw}).


\section{Summary of the main results and terminology}
\label{sec:terminology}

The treewidth, cliquewidth, rankwidth, and pathwidth of
a graph $G$ are denoted by $\tw(G)$, $\cw(G)$, $\rw(G)$, and $\pw(G)$
respectively.  The following lemma is well-known.

\begin{lemma}[See~\cite{CorneilR05} and~\cite{OumS06}]
  \label{lem:width-comparation}
  For every graph $G$, the followings hold:
  \begin{itemize}
  	\item $\rw(G) \leq \cw(G) \leq 2^{\rw(G)+1}$;
  	\item $\cw(G) \leq 3 \cdot 2^{\tw(G)} - 1$;
  	\item $\tw(G) \leq \pw(G)$.
  \end{itemize}
\end{lemma}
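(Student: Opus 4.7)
The plan is to treat the three inequalities independently, since each has a different source and flavor; the lemma is essentially a compilation of standard results, so the bulk of the work is citing rather than constructing.

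The third inequality $\tw(G) \leq \pw(G)$ is immediate from the definitions: a path decomposition is, by definition, a tree decomposition whose underlying tree is a path. Hence the infimum of the widths of tree decompositions is at most the infimum of the widths of path decompositions, giving the claim with no real argument.

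For $\rw(G) \leq \cw(G) \leq 2^{\rw(G)+1}$, I would follow Oum and Seymour~\cite{OumS06}. For the lower bound, given a $k$-expression producing $G$, its syntax tree, after contracting unary operations, becomes a subcubic tree whose leaves correspond to the vertices of $G$; every edge of this tree induces a bipartition of $V(G)$, and one checks that the cut-rank function of $G$ on either side is bounded by $k$ because the two sides interact only through the $k$ labels available at the moment of disjoint union. For the upper bound, given a rank decomposition of width $r$, process the tree bottom-up: at any subtree $T'$ with vertex set $A$, the vertices of $A$ can be grouped into at most $2^r$ equivalence classes according to their neighborhood pattern in $V(G) \setminus A$, and each class can be carried as a single clique-width label, yielding a clique-width expression using at most $2^{r+1}$ labels once the two siblings are combined.

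For $\cw(G) \leq 3 \cdot 2^{\tw(G)} - 1$, I would invoke Corneil and Rotics~\cite{CorneilR05}. Given a (nice) tree decomposition of width $k$, root it and build a clique-width expression bottom-up. At each bag $B$ of size at most $k+1$, the vertices already introduced in the subtree are partitioned according to their trace in $B$; since there are $2^{k+1}$ possible subsets of $B$, this many labels suffice, and the constant $3$ absorbs the extra labels needed to implement introduce, forget, and join operations of a nice tree decomposition before the corresponding relabelling is applied.

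The only step that is not fully routine is the exponential upper bound $\cw(G) \leq 2^{\rw(G)+1}$, whose bookkeeping on label classes is slightly delicate; but since the statement is exactly the one proved in~\cite{OumS06}, no new argument is needed, and citing the source is sufficient for the purposes of this paper.
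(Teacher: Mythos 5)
Your proposal is correct and takes essentially the same approach as the paper: the paper gives no proof either, simply citing \cite{OumS06} and \cite{CorneilR05} for the first two items and noting that the third is immediate because a path decomposition is a tree decomposition whose tree is a path. (Your attribution --- Oum--Seymour for $\rw(G) \leq \cw(G) \leq 2^{\rw(G)+1}$ and Corneil--Rotics for $\cw(G) \leq 3 \cdot 2^{\tw(G)} - 1$ --- is the standard one; the paper's accompanying sentence actually swaps the two references, so your version is if anything more accurate.)
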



The first item of the lemma is proved in~\cite{CorneilR05}, and the second item is proved in~\cite{OumS06}.
The third item follows because pathwidth is a special case of treewidth (see Section~\ref{sec:upper-bound}). 
All results presented in this article can be summarized in the next two theorems.

\begin{theorem}
For every integers $l\geq 1$ and $k\geq 4$, there exists a graph
$G_{l, k}$ such that the followings hold:

\begin{itemize}
  \item $G_{l, k}$ is theta-free and has girth at least~$k$ (in particular, $G_{l, k}$ is triangle-free);
  \item $l \leq \tw(G_{l, k}) \leq \pw(G_{l,k}) \leq 2l$;
  \item $l \leq \rw(G_{l, k}) \leq \cw(G_{l, k}) \leq 3 \cdot 2^{\tw(G)} - 1 \leq 3 \cdot 2^{2l} - 1 \leq |V(G_{l, k})|$.
\end{itemize}

\end{theorem}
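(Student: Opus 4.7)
The plan is to take $G_{l,k}$ to be the (theta, triangle)-free layered wheel with $l+1$ layers, built using a subdivision parameter large enough to enforce girth at least $k$, as constructed in Section~\ref{sec:construction}. Once such a construction is in hand, every item in the statement is obtained by invoking a later result and then combining them with Lemma~\ref{lem:width-comparation}. So my proof is essentially a bookkeeping assembly; the real content lives in the theorems it cites.

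First I would verify the structural properties. The fact that $G_{l,k}$ is theta-free and triangle-free is exactly the statement of Theorem~\ref{th:layeredWheel-TTF}, and by choosing the subdivision parameter of the construction sufficiently large in terms of $k$, one can push the girth above $k$ (the construction has a single parameter controlling how long the paths between layers are, and long paths force long shortest cycles). This disposes of the first bullet.

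Next I would handle the width inequalities. The lower bound $l \leq \tw(G_{l,k})$ is Theorem~\ref{th:largeTW}. The upper bound $\pw(G_{l,k}) \leq 2l$ is Theorem~\ref{th:bounded-pw}, and combined with the third item of Lemma~\ref{lem:width-comparation} it gives $\tw(G_{l,k}) \leq \pw(G_{l,k}) \leq 2l$, completing the second bullet. For the third bullet, the lower bound $l \leq \rw(G_{l,k})$ is Theorem~\ref{th:lowerbound-rankwidth}; the inequalities $\rw(G_{l,k}) \leq \cw(G_{l,k}) \leq 3 \cdot 2^{\tw(G_{l,k})} - 1$ are the first two items of Lemma~\ref{lem:width-comparation}, and substituting the already-proved bound $\tw(G_{l,k}) \leq 2l$ yields $3 \cdot 2^{\tw(G_{l,k})} - 1 \leq 3 \cdot 2^{2l} - 1$. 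Finally, $3 \cdot 2^{2l} - 1 \leq |V(G_{l,k})|$ follows from the vertex-count lower bound for layered wheels from Lemma~\ref{lem:neighborhood-ttf-layered-wheel}, since the construction can be made to yield at least this many vertices once the subdivision parameter (dictated by $k$) and the number of layers are fixed; concretely, each layer contributes a path whose length grows (at least) geometrically with the layer index, so the total vertex count dominates any polynomial in $2^l$.

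The main obstacle is not the assembly itself but the ingredients. Of the cited results, the genuinely hard ones are Theorem~\ref{th:layeredWheel-TTF} (checking that no theta can exist in the layered wheel requires a careful case analysis of how three internally disjoint paths could route through the layers), Theorem~\ref{th:largeTW} (which must exhibit a $K_{l+1}$-minor or use a bramble-type argument inside a triangle-free graph), and Theorem~\ref{th:bounded-pw} (producing a path decomposition whose bags use at most $2l+1$ vertices is delicate because the layers interleave). Once those are available, the present theorem is a routine corollary, and my write-up of it would amount to a few lines of citation.
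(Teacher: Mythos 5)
Your overall plan coincides with the paper's: this theorem is stated as a summary, and its proof is precisely the assembly of Theorems~\ref{th:layeredWheel-TTF}, \ref{th:largeTW}, \ref{th:bounded-pw}, \ref{th:lowerbound-rankwidth} and Lemma~\ref{lem:width-comparation}. Two of your steps, however, have concrete gaps.

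First, the inequality $3 \cdot 2^{2l} - 1 \leq |V(G_{l, k})|$ does not follow from Lemma~\ref{lem:neighborhood-ttf-layered-wheel} as you claim. That lemma yields $|V(P_l)| \geq 3^l$, and since $3^l = (2^l)^{\log_2 3}$ with $\log_2 3 < 2$, this is asymptotically \emph{smaller} than $3 \cdot 2^{2l} - 1 = 3\cdot 4^l - 1$; your assertion that the geometric growth of the layers ``dominates any polynomial in $2^l$'' is quantitatively false. The inequality holds only because the construction is flexible: axiom~\ref{axi:A6} gives merely a lower bound on the lengths of the paths between consecutive type-1 vertices, so one must explicitly choose $G_{l,k}$ with enough vertices --- for instance the $m$-uniform layered wheel of Lemma~\ref{lem:existM} with $m \geq \max(15, 4l^2)$ has $|V(P_l)| = m^l \geq 4^l l^{2l}$, which suffices. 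You gesture at this flexibility but then ground the bound in the wrong lemma.

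Second, the lower bound $l \leq \rw(G_{l, k})$ is not available for an arbitrary $(l,k)$-ttf-layered-wheel: Theorem~\ref{th:lowerbound-rankwidth} is proved only for $m$-uniform layered wheels with $m$ sufficiently large. Since the statement asserts the existence of a \emph{single} graph satisfying all three bullets simultaneously, you must fix $G_{l,k}$ to be such an $m$-uniform wheel from the outset --- a choice that conveniently also repairs the vertex-count issue above. (Two smaller edge cases deserve a word: Theorems~\ref{th:bounded-pw} and~\ref{th:lowerbound-rankwidth} are stated for $l \geq 2$, whereas the present statement allows $l = 1$; both are easily verified directly in that case. Also, the girth bound needs no extra tuning of a ``subdivision parameter'': $k$ is already a parameter of Construction~\ref{cons:G1(l,k)}, and Theorem~\ref{th:layeredWheel-TTF} gives girth at least $k$ for every such wheel.) With these corrections the assembly goes through exactly as in the paper.
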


\begin{theorem}
For every integers $l\geq 1$ and $k\geq 4$, there exists a graph
$G_{l, k}$  such that the followings hold:

\begin{itemize}
  \item $G_{l, k}$ is (even hole, $K_4$, pyramid)-free and every hole in
  	$G_{l, k}$ has length at least~$k$;
  \item $l \leq \tw(G_{l, k}) \leq \pw(G_{l,k}) \leq 2l$;
  \item $l \leq \rw(G_{l, k}) \leq \cw(G_{l, k}) \leq 3 \cdot 2^{\tw(G)} - 1 \leq 3 \cdot 2^{2l} - 1 \leq |V(G_{l, k})|$.
  \end{itemize}
\end{theorem}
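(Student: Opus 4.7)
The plan is to take $G_{l,k}$ to be the (even hole, $K_4$, pyramid)-free layered wheel of Section~\ref{sec:construction}, with the length of the subdivided paths forming each layer chosen large enough (at least roughly $k$) so that the girth condition is automatic. With this choice the first bullet splits cleanly: membership in the class of (even hole, $K_4$, pyramid)-free graphs is Theorems~\ref{th:layeredWheel-EHF} and~\ref{th:layeredWheel-EHF_noPyramid}, and the lower bound of $k$ on the length of every hole is immediate from how each layer is parameterised.

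For the width bounds I would prove the treewidth/pathwidth chain first and then bootstrap to rank- and cliquewidth. The lower bound $l \leq \tw(G_{l,k})$ is Theorem~\ref{th:largeTW}, the upper bound $\pw(G_{l,k}) \leq 2l$ is Theorem~\ref{th:bounded-pw}, and the sandwiching inequality $\tw \leq \pw$ is the third item of Lemma~\ref{lem:width-comparation}. For the second chain, the lower bound $l \leq \rw(G_{l,k})$ is Theorem~\ref{th:lowerbound-rankwidth}, the inequalities $\rw \leq \cw$ and $\cw \leq 3\cdot 2^{\tw}-1$ are the first two items of Lemma~\ref{lem:width-comparation}, and the substitution $\tw \leq 2l$ gives $3\cdot 2^{\tw}-1 \leq 3\cdot 2^{2l}-1$. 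The final inequality $3\cdot 2^{2l}-1 \leq |V(G_{l,k})|$ is a size estimate on the construction: the layered wheel grows exponentially in $l$, and one may if necessary inflate the paths in each layer (they can be taken arbitrarily long) to enforce this without disturbing any of the other bounds, which are all topological in nature.

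The real work sits in the ingredients, not in this assembly. The hardest part is proving that a graph built by stacking wheels in layers is actually (even hole, $K_4$, pyramid)-free: one must rule out even holes and pyramids that thread across several layers in unexpected ways, which is the content of Theorems~\ref{th:layeredWheel-EHF} and~\ref{th:layeredWheel-EHF_noPyramid}. The other non-trivial ingredient is the lower bound $l \leq \tw(G_{l,k})$ of Theorem~\ref{th:largeTW}, and the analogous $l \leq \rw(G_{l,k})$ of Theorem~\ref{th:lowerbound-rankwidth}, both of which require combinatorial arguments tailored to how successive layers interact. Once those ingredients are in hand, the present theorem is just a concatenation of inequalities.
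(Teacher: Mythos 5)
Your proposal is correct and assembles the theorem exactly as the paper does: the statement is a summary whose proof is distributed over Theorems~\ref{th:layeredWheel-EHF}, \ref{th:layeredWheel-EHF_noPyramid}, \ref{th:largeTW}, \ref{th:bounded-pw}, \ref{th:lowerbound-rankwidth} and Lemma~\ref{lem:width-comparation}, with the construction's flexibility in path lengths used to guarantee both the vertex-count inequality and the $m$-uniformity needed for the rankwidth lower bound. Your remark that one may need to inflate the layers is not merely a precaution --- it is genuinely needed both for $3\cdot 2^{2l}-1\leq |V(G_{l,k})|$ at small $l,k$ and to obtain the $m$-uniform instance required by Theorem~\ref{th:lowerbound-rankwidth} --- and you correctly identify it.
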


A graph $H$ is a {\em subgraph} of a graph $G$, denoted by
$H \seq G$, if $V(H) \seq V(G)$ and $E(H) \seq E(G)$. For a graph $G$
and a subset $X \seq V(G)$, we let $G[X]$ denote the subgraph of $G$
{\em induced} by $X$, i.e.\ $G[X]$ has vertex set $X$, and
$E(G[X])$ consists of the edges of $G$ that have both ends in $X$. 

For simplicity, sometimes we do not distinguish between a vertex set
and the graph induced by the vertex set.  So we write $G \sm H$
instead of $G[V(G) \sm V(H)]$. Also for a vertex $v \in V(G)$, we
write $G \sm v$ (instead of $G[V(G) \sm \{v\}]$) and similarly, we
write $G \sm S$ for some $S \seq V(G)$.  For $v \in V(G)$, we denote
by $N_H(v)$, the set of neighbors of $v$ in $H$ that is called the
{\em neighborhood} of $v$, and $N_G(v)$ is also denoted by $N(v)$.

A {\em path} in $G$ is a sequence $P$ of distinct vertices
$p_1 \dots p_n$, where for $i,j \in \{1,\dots,n\}$,
$p_ip_{j} \in E(G)$ if and only if $|i-j|= 1$.  For two vertices
$p_i, p_j \in V(P)$ with $j > i$, the path $p_ip_{i+1}\dots p_j$ is
a {\em subpath of $P$} that is denoted by $p_iPp_j$. The subpath
$p_2\dots p_{n-1}$ is called the {\em interior} of $P$. The vertices
$p_1,p_n$ are the \emph{ends} of the path, and the vertices in the
interior of $P$ are called the \emph{internal} vertices of $P$.
 
A {\em cycle} is defined similarly, with the additional properties
that $n\geq 4$ and $p_1 = p_n$.  The {\em length} of a path $P$ is the number of edges of $P$.  The
length of cycle is defined similarly.

We now give a formal definition of treewidth.
A {\em tree decomposition} of a graph $G$ is a pair 
$(T, \{X_t\}_{t \in V(T)})$, where 
$T$ is a tree whose every node $t$ is assigned a vertex
subset $X_t \subseteq V(G)$, called a {\em bag}, 
such that the following three conditions hold:

\begin{enumerate}[label=(T\arabic*)]
  \item $\bigcup_{t \in V(T)} X_t = V(G)$, 
    i.e., every vertex of $G$ is in at least one bag.
  \item For every $uv \in E(G)$, there exists a node $t$ of $T$ such that bag $X_t$
	contains both $u$ and $v$.
  \item For every $u \in V(G)$, the set $T_u = \{t \in V(T) : u \in X_t\}$, i.e., the set
	of nodes whose corresponding bags contain $u$, induces a connected subtree of $T$.
\end{enumerate}

The {\em width} of tree decomposition $(T, \{X_t\}_{t \in V(T)})$ equals
$\max_{t \in V(T)} |X_t| - 1$, that is, the maximum size of its bag minus 1. The {\em treewidth} of a graph $G$,
denoted by $\tw(G)$, is the minimum possible width of a tree decomposition of $G$.

\section{Construction and treewidth}
\label{sec:construction}
\label{sec:layeredWheels}

In this section, we describe the construction of layered wheels for
two classes of graphs, namely the class of (theta, triangle)-free
graphs and the class of (even hole, $K_4$)-free graphs. We also
give a lower bound on their treewidth.

\subsection*{(Theta, triangle)-free layered wheels}
\label{subsec:tff-LW}

We now present {\em ttf-layered-wheels} which are theta-free graphs
of girth at least~$k$, containing $K_{l+1}$ as a minor, for all integers
$l \geq 1, k\geq 4$ (see Figure~\ref{fig:ttf-layered-wheel}).

\begin{construction} \label{cons:G1(l,k)}

  Let $l \geq 0$ and $k \geq 4$ be integers. An
  $(l,k)$-ttf-layered-wheel, denoted by $G_{l,k}$, is a graph consisting
  of $l+1$ layers, which are paths $P_0, P_1,\dots, P_l$.  The graph
  is constructed as follows.
  	
\begin{enumerate}[label=(A\arabic*)]
\item \label{axi:A1} 
  $V(G_{l, k})$ is partitioned into $l+1$
  vertex-disjoint paths $P_0,P_1,\dots, P_l$.  So,
  $V(G_{l, k}) = V(P_0) \cup \cdots \cup V(P_l)$.  The paths are
  constructed in an inductive way.
  
\item 
  The path $P_0$ consists of a single vertex.

\item \label{axi:A3} 
  For every $0 \leq i \leq l$ and every vertex $u$ in
  $P_i$, we call \emph{ancestor of $u$} any neighbor of $u$ in
  $V(P_0) \cup \cdots \cup V(P_{i-1}) $.  The
  \emph{type of $u$} is the number of its ancestors (as we will see,
  the construction implies that every vertex has type~0 or 1).
  Observe that the unique vertex of $P_0$ has type~0. We will see that
  the construction implies that for every $1 \leq i \leq l$, the ends
  of $P_i$ are vertices of type~1.
 		
\item \label{axi:A4} 
  Suppose inductively that $l \geq 1$ and layers $P_0,P_1,\dots, P_{l-1}$ are
  constructed. The $l^{\textrm{th}}$-layer
  $P_{l}$ is built as follows.

  For any $u \in P_{l-1}$ we define a path
  $\myBox_u$ (that will be a subpath of $P_l$), in 
  the following way:
  
  \begin{itemize}
  \item if $u$ is of type~0, $\myBox_u$ contains three
    neighbors of $u$, namely $u_1, u_2, u_3$, in such way that
    $\myBox_u = u_1 \dots u_2 \dots u_3$.

  \item if $u$ is of type~1, let $v$ be its unique
    ancestor.  $\myBox_u$ contains six neighbors of $u$, namely
    $u_1,\dots, u_6$, and three neighbors of $v$, namely
    $v_1,v_2,v_3$, in such a way that $$\myBox_u = u_1 \dots u_2 \dots
    u_3 \dots v_1 \dots v_2 \dots v_3 \dots u_4 \dots u_5 \dots u_6.$$
  \end{itemize}

  The neighbors of $u$ and the neighbors of $v$ in $\myBox_u$
  are of type~1, the other vertices
  of $\myBox_u$ are of type~0. We now specify the lengths of the
  boxes and how they are connected to form $P_l$.
  		
\item 
  The path $P_{l}$ goes through the boxes of $P_{l}$ in the same
  order as vertices in $P_{l-1}$. For instance, if $uvw$ is a subpath
  of $P_{l-1}$, then $P_l$ goes through $\myBox_u$, $\myBox_v$, and
  $\myBox_w$, in this order along $P_{l}$. Note that the vertices of
  $P_l$ that are in none of the boxes are of type~0.  Note that for
  $u\neq v$, we have $\myBox_u \cap \myBox_v = \emptyset$. 

\item \label{axi:A6} 
  Let $w,w'$ be vertices of type~1 in $P_{l}$ (so
  vertices from the boxes), and consecutive in the sense that the
  interior of $wP_{l}w'$ contains no vertex of type~1. Then $wP_{l}w'$
  is a path of length at least~$k-2$.

\item 
  Observe that every vertex in $P_l$ has type~0 or 1.
  
\item \label{axi:A8} 
  There are no other vertices or edges apart from the ones specified
  above.  
    	
\end{enumerate}
\end{construction}

\begin{figure}[ht]
  \begin{center}
    \includegraphics[width=10cm]{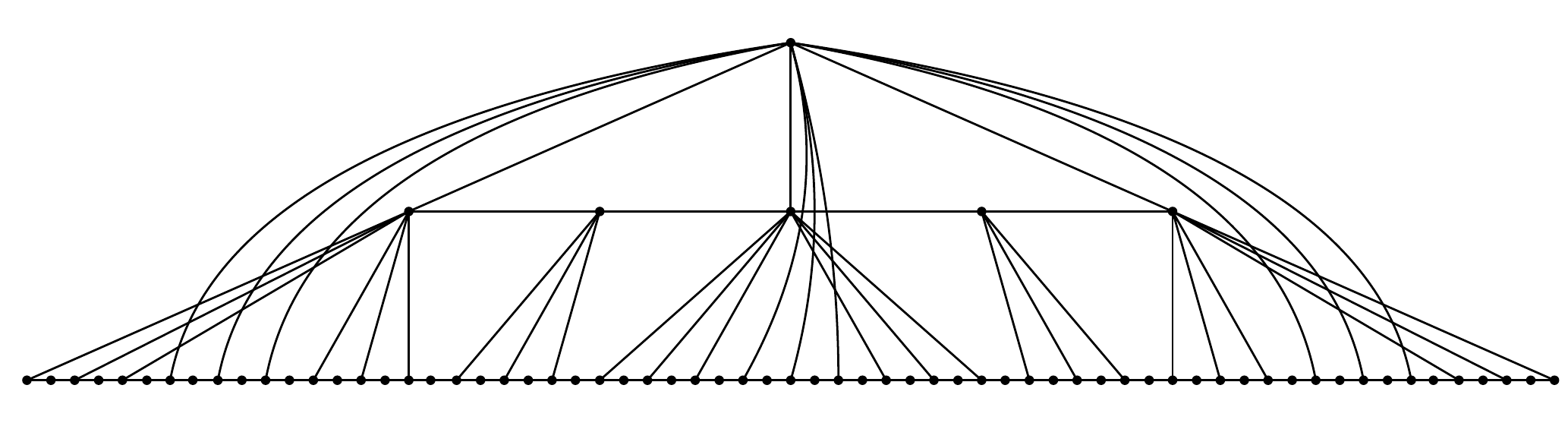}
  \end{center}
  \caption{A ttf-layered-wheel $G_{2,4}$ \label{fig:ttf-layered-wheel}}
\end{figure}

Observe that the construction is not fully deterministic because
in~\ref{axi:A6}, we just indicate a lower bound on the length of
$wP_lw'$, so there may exist different ttf-layered-wheels $G_{l, k}$.
This flexibility will be convenient below to exhibit
ttf-layered-wheels of arbitrarily large rankwidth.

\begin{lemma}
  \label{lem:neighborhood-ttf-layered-wheel}
  For $0 \leq i \leq l-1$ and $i+1 \leq j \leq l$, every vertex $u \in V(P_i)$
  has at least~$3^{j-i}$ neighbors in $P_j$.
\end{lemma}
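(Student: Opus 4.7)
The plan is to argue by induction on $d = j - i$. For the base case $d = 1$, I would invoke the definition of $\myBox_u \subseteq P_{i+1}$ directly: the construction places at least three neighbors of $u$ inside $\myBox_u$ (namely $u_1, u_2, u_3$ if $u$ has type~0, and $u_1, \ldots, u_6$ if $u$ has type~1), so $u$ has at least $3 = 3^1$ neighbors in $P_{i+1}$.

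For the inductive step, suppose every vertex of $P_i$ has at least $3^{d-1}$ neighbors in $P_{j-1}$. The key intermediate fact I would establish is that every neighbor $w \in P_{j-1}$ of $u$ has $u$ as its unique ancestor. Indeed, by clause~\ref{axi:A8} edges across different layers exist only through the box construction: the edge $uw$ must have been introduced either when $w$ was placed as one of the $u_r$'s in $\myBox_u$ itself (if $j - 1 = i + 1$), or when $w$ was placed as one of the $v_r$'s in $\myBox_z$ for some $z \in P_{j-2}$ whose ancestor is $u$ (if $j - 1 > i + 1$). In either case $w$ acquires $u$ as an ancestor, and since every vertex has type at most~1 by~\ref{axi:A3}, this ancestor is unique. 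In particular $w$ has type~1, so by~\ref{axi:A4} the box $\myBox_w \subseteq P_j$ contains three vertices playing the role of $v_1, v_2, v_3$, each adjacent to $u$. Because the boxes indexed by distinct vertices of $P_{j-1}$ are pairwise disjoint, these contributions do not overlap, so summing over the $3^{d-1}$ neighbors $w$ of $u$ in $P_{j-1}$ yields at least $3 \cdot 3^{d-1} = 3^d$ distinct neighbors of $u$ in $P_j$, closing the induction.

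The only real obstacle is the ancestor bookkeeping: one has to rule out that $u$ could be adjacent to some $w \in P_{j-1}$ by any mechanism other than $u$ being its ancestor. This follows from the ``no other edges'' clause~\ref{axi:A8} combined with the fact that vertices have type at most~$1$, but it deserves to be stated explicitly, since it is exactly what licenses multiplying the neighbor count by~$3$ at each step of the induction.
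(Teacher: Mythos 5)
Your proof is correct and takes essentially the same route as the paper's: induction with base case given by the box definition in~\ref{axi:A4}, and an inductive step that multiplies by three because each neighbor $w$ of $u$ in $P_{j-1}$ has $u$ as its unique ancestor, so $\myBox_w$ contributes three fresh neighbors of $u$ in $P_j$, with disjointness of boxes preventing double counting. The paper states this in two lines and leaves the ancestor bookkeeping implicit; your explicit justification of that step is a refinement of the same argument, not a different one.
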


\begin{proof}
  We prove the lemma by induction on $j$. If $j=i+1$, 
  then~\ref{axi:A4} implies that for every $0\leq i \leq l-1$ and every
  vertex $u$ in $P_i$, $u$ has three or six neighbors in $P_{i+1}$. If
  $j>i+1$, then by the induction hypothesis, every vertex
  $u\in V(P_i)$ has at least~$3^{j-1-i}$ neighbors in $P_{j-1}$. Hence
  by~\ref{axi:A4}, it has at least~$3 \cdot 3^{j-1-i} = 3^{j-i}$
  neighbors in $P_j$. 
\end{proof}

Lemma~\ref{lem:neighborhood-ttf-layered-wheel} implies in particular that
every vertex of layer $i$ has neighbors in all layers $i+1,\dots, l$.
Construction~\ref{cons:G1(l,k)} is in fact the description of an
inductive algorithm that constructs $G_{l, k}$. So, the next lemma is
clear.

\begin{lemma}
\label{lem:ttf-exist}
  For every integers $l \geq 0$ and $k \geq 4$, there exists an
  $(l,k)$-ttf-layered-wheel.
\end{lemma}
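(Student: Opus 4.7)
The plan is to prove the lemma by induction on $l$, essentially verifying that Construction~\ref{cons:G1(l,k)} really does specify a well-defined algorithm and that at each layer the conditions needed by the next inductive step are preserved. The statement the authors make before the lemma, that the construction \emph{is} an inductive algorithm, suggests the proof should be short and mechanical; the only nontrivial thing to check is that the invariants implicit in axioms~\ref{axi:A1}--\ref{axi:A8} are maintained.

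For the base case $l=0$, the single-vertex path $P_0$ trivially satisfies every axiom (there are no type-1 vertices, no edges to previous layers, no constraints to verify). For the inductive step, I assume that $P_0,\dots,P_{l-1}$ have been built so that all vertices have type~0 or~1 and (if $l-1 \geq 1$) the ends of $P_{l-1}$ are of type~1. I then construct $P_l$ by first producing, for each $u\in V(P_{l-1})$, a path $\myBox_u$ of the shape prescribed in~\ref{axi:A4}: I place three or six designated neighbors of $u$ (and, in the type-1 case, three designated neighbors of its unique ancestor $v$) in the specified order, and insert between any two consecutive designated vertices an internally type-0 path of length exactly $k-2$. This is clearly realizable: we simply introduce fresh vertices and attach them in a path, with each ``designated'' vertex being adjacent only to $u$ (or $v$) among the vertices of earlier layers, which preserves the property that every vertex has type~0 or~1.

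Next I form $P_l$ by visiting the boxes $\myBox_u$ in the order in which their centers $u$ appear along $P_{l-1}$, and joining the last vertex of one box to the first vertex of the next by a fresh internally type-0 path of length exactly $k-2$. Since each $\myBox_u$ begins and ends with a designated neighbor of $u$ (i.e.\ a type-1 vertex), the bridging paths ensure that any two consecutive type-1 vertices along $P_l$ are separated by a subpath of length $\geq k-2$, giving~\ref{axi:A6}; the boxes for distinct $u$'s use disjoint fresh vertices, so $\myBox_u \cap \myBox_v = \emptyset$; and the first and last vertices of $P_l$ are type-1 vertices from the extreme boxes, which maintains the invariant needed for layer $l+1$. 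Axiom~\ref{axi:A8} is enforced simply by not adding any further vertices or edges.

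The potentially delicate point is checking that the inductive invariants used in the next round remain intact, namely that (i) every vertex of $P_l$ has type~0 or~1, and (ii) the ends of $P_l$ are type~1; both follow directly from the construction above, since the only edges between $P_l$ and earlier layers are the ones explicitly introduced to $u$ or to $v$. There is no real combinatorial obstruction because we are free to choose the lengths of the bridging paths to be exactly $k-2$, and the construction puts no upper bound on lengths. Hence $P_l$ exists and satisfies~\ref{axi:A1}--\ref{axi:A8}, completing the induction and producing the desired $(l,k)$-ttf-layered-wheel.
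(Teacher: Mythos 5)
Your proposal is correct and matches the paper's approach: the paper simply remarks that Construction~\ref{cons:G1(l,k)} is the description of an inductive algorithm and declares the lemma clear, and your argument is exactly that inductive algorithm written out with the invariants (types 0/1 only, type-1 ends, disjoint boxes, lengths $\geq k-2$) verified explicitly. No issues.
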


We now prove that Construction~\ref{cons:G1(l,k)} produces a theta-free 
graph with arbitrarily large girth and treewidth.
Observe that any subdivision of the (3,5)-grid contains a theta.
Thus, Theorem~\ref{th:layeredWheel-TTF} implies that a ttf-layered-wheel 
does not contain any subdivision of (3,5)-grid as mentioned in the introduction.

The next lemma is useful to prove Theorem~\ref{th:layeredWheel-TTF}.
For a theta consisting of three paths $P_1,P_2,P_3$, the common ends
of those paths are called the \emph{apexes} of the theta.  Let $G$ be
graph containing a path $P$.  The path $P$ is {\em special} if
\begin{itemize}
	\item there exists a vertex $v \in V(G \sm P)$ such that $|N_P(v)| \geq 3$; and
	\item in $G \sm v$, every vertex of $P$ has degree at most 2.
\end{itemize}
    
Note that in the next lemma, we make no assumption on $G$, that in
particular may contain triangles.

\begin{lemma}
  \label{lem:apex-in-path}
  Let $G$ be a graph containing a special path $P$.
  For any theta that is contained in $G$ (if any), every apex of
  the theta is not in $P$.
\end{lemma}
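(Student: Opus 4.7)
The plan is to suppose for contradiction that some apex of a theta $T$ in $G$ lies in $V(P)$; call this apex $a$, the other apex $b$, and the three internally disjoint theta-paths $P_1,P_2,P_3$. First I would observe that since $a$ has three pairwise distinct theta-neighbors in $G$ but degree at most $2$ in $G\sm v$ by specialness, the vertex $v$ itself must be a theta-neighbor of $a$; in particular $v\in V(T)$. A quick count also rules out $v$ being the other apex $b$: otherwise the three theta-paths would each need a distinct starting neighbor of $a$ in $N_G(a)\sm\{v\}$, a set of size at most $2$. Hence, up to relabeling, $v$ is the second vertex along $P_1$ starting from $a$, and the other two theta-neighbors $\alpha_2,\alpha_3$ of $a$ exhaust $N_{G\sm v}(a)$: when $a$ is internal to $P$ they are the two $P$-neighbors of $a$, and when $a$ is an endpoint of $P$ they are the unique $P$-neighbor of $a$ together with an external vertex (which must exist, else $a$ has only two neighbors in $G$).

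Next I would establish the key geometric claim $V(P)\seq V(P_2)\cup V(P_3)$ by tracing $P_2$ and $P_3$ through $P$. At every internal $P$-vertex $p_k$ visited by $P_2$, specialness forces the only neighbors of $p_k$ in $G\sm v$ to be the two $P$-neighbors of $p_k$; since $v$ is internal to $P_1$ and therefore not on $P_2$, and since the predecessor of $p_k$ on $P_2$ is already used, $P_2$ is compelled to continue to the next $P$-vertex in the direction fixed by $\alpha_2$. Thus $P_2$ hugs $P$ until it either terminates at $b$ (when $b\in V(P)$) or reaches an endpoint of $P$ and leaves via the single external edge that specialness allows at an endpoint; $P_3$ behaves symmetrically. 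A short case analysis (on whether $b\in V(P)$ and on whether $a$ is internal or an endpoint of $P$) then shows that $V(P_2)\cup V(P_3)$ always contains every vertex of $V(P)$. The subtlest case is $b=p_j\in V(P)$ with $P_2$ taking the direct $P$-route from $a$ to $b$: then the theta-neighbor $\beta_3$ of $b$ on $P_3$ is forced to be the $P$-neighbor of $b$ opposite to $P_2$, and iterating the tracing argument backwards from $b$ shows that $P_3$ sweeps through every $P$-vertex on that opposite side of $b$, filling in precisely the piece missing from $V(P_2)$.

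With $V(P)\seq V(T)$ in hand, the contradiction is a pigeonhole on $v$'s neighborhood. The vertex $v$ has at least three neighbors in $V(P)$, but in $T$ it has only two incident edges (its two neighbors on $P_1$); thus some $P$-neighbor $x$ of $v$ is not a theta-neighbor of $v$. Since $T$ is an induced subgraph of $G$, any neighbor of $v$ lying in $V(T)$ would have to be a theta-neighbor of $v$, forcing $x\notin V(T)$. This contradicts $x\in V(P)\seq V(P_2)\cup V(P_3)\seq V(T)$. The main obstacle is the tracing/case analysis in the middle step; the guiding intuition is that in $G\sm v$ the path $P$ is essentially isolated from the rest of $G$ except at its two endpoints, so induced paths through $V(P)$ are extremely rigid, and the two theta-paths that must avoid $v$ are squeezed into covering all of $V(P)$.
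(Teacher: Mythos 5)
Your argument is correct, and the route through its middle is genuinely different from the paper's, although both converge on the same punchline: the special vertex $v$ ends up with at least three neighbours inside the theta (which is an \emph{induced} subgraph of $G$) while being an internal vertex of one of its three paths, hence of degree~2 in it. The paper argues locally: it picks three consecutive neighbours $u_1,u_2,u_3$ of $v$ along $P$ with the apex among them, shows that one of the two cycles $vu_1Pu_2v$, $vu_2Pu_3v$ must be a hole of the theta, deduces that $u_1$ and $u_2$ are the two apexes, and then uses the rigidity of $P$ to drag $u_2Pu_3$ into the theta as well, so that $v$ sees all of $u_1,u_2,u_3$. You argue globally: after placing $v$ as the second vertex of one theta-path, you trace the other two paths from the apex and show they are forced to sweep out all of $V(P)$, so every neighbour of $v$ on $P$ lies in the theta. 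Your covering claim $V(P)\seq V(P_2)\cup V(P_3)$ is stronger than what is strictly needed and costs a longer case analysis (position of the two apexes relative to the ends of $P$, and the wrap-around of the third path through an end of $P$ when the second apex lies on $P$ -- all of which you handle correctly), but it buys you two things: you never need to locate the second apex, and you never need to justify that a specific short cycle is a hole of the theta, a step the paper leaves rather terse. Both treatments ultimately rest on the same two structural facts: the theta is induced, and in $G\sm v$ the path $P$ communicates with the rest of $G$ only through its two ends.
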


\begin{proof}
  Let $v$ be a vertex satisfying the properties as in the definition
  of special path.  For a contradiction, suppose that $P$ contains
  some vertex $u$ which is an apex of some theta $\Theta$ in $G$. Note
  that $u$ must have degree 3, and is therefore a neighbor of $v$.
  Consider two subpaths of $P$, $u_1Pu_2$ and $u_2Pu_3$ such that
  $u\in \{u_1, u_2, u_3\} \seq N(v)$ and both $u_1Pu_2$, $u_2Pu_3$
  have no neighbors of $v$ in their interior.  This exists since
  $|N_P(v)| \geq 3$.  Since $u$ is an apex, either $H_1 = vu_1Pu_2v$
  or $H_2 = vu_2Pu_3v$ is a hole of $\Theta$.  Without loss of
  generality suppose that $V(H_1) \seq V(\Theta)$. Hence the other
  apex of $\Theta$ must be also contained in $H_1$. Since
  $u_1v, u_2v \in E(G)$ and all vertices of $H_1 \sm \{u_1,v,u_2\}$
  have degree 2, $u_1, u_2$ must be the two apexes of $\Theta$. Since
  $d(u_2) = 3$, $V(u_2Pu_3) \seq \Theta$. But then $v$ has degree~3 in
  $\Theta$ while not being an apex, a contradiction. \cpl
\end{proof}

\begin{theorem}
  \label{th:layeredWheel-TTF}
  For every integers $l \geq 0$ and $k \geq 4$, every
  $(l, k)$-ttf-layered-wheel $G_{l,k}$ is theta-free
  graph with girth at least~$k$.
\end{theorem}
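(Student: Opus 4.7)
The plan is to prove both assertions simultaneously by induction on $l$, with trivial base case $l=0$ (a single vertex).

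For the girth bound in the inductive step, assume $G_{l-1,k} := G_{l,k} \setminus P_l$ has girth at least~$k$ and consider any cycle $C$ in $G_{l,k}$. If $C \subseteq G_{l-1,k}$ the hypothesis applies. Otherwise, decompose $C$ into its maximal sub-paths alternating between $P_l$ and $G_{l-1,k}$. No such maximal $P_l$-sub-path reduces to a single vertex: that would require two $C$-neighbors outside $P_l$, but every vertex of $P_l$ has at most one ancestor. Hence every maximal $P_l$-sub-path has two distinct type-1 endpoints, and these are at $P_l$-distance at least $k-2$ by iterating~\ref{axi:A6} over consecutive type-1 pairs. Together with the two flanking ancestor edges, each $P_l$-crossing contributes at least $k$ to $|C|$, so $|C| \geq k$.

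For theta-freeness, suppose for contradiction that $\Theta$ is a theta in $G_{l,k}$ with apexes $a, b$ and paths $R_1, R_2, R_3$. The first step is to show $a, b \notin P_l$ via Lemma~\ref{lem:apex-in-path}. The strategy is to exhibit, for every type-1 vertex $w \in P_l$, a sub-path of $P_l$ that is special in $G_{l,k}$ and contains $w$. The candidates come from~\ref{axi:A4}: if $u \in P_{l-1}$ has type~0 then $\myBox_u$ itself is special in $G_{l,k}$ with auxiliary vertex $u$; if $u$ has type~1 with ancestor $v$ then the sub-paths $u_1 P_l u_3$ and $u_4 P_l u_6$ are special with auxiliary vertex $u$, while $v_1 P_l v_3$ is special with auxiliary vertex $v$. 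Indeed, in every case, deletion of the auxiliary vertex erases the unique non-$P_l$ edge at every type-1 vertex of the sub-path, leaving every sub-path vertex with degree exactly~$2$. Every type-1 vertex of $P_l$ therefore lies in such a special sub-path, and Lemma~\ref{lem:apex-in-path} rules it out as an apex; a type-0 vertex of $P_l$ has degree~$2$ in $G_{l,k}$ and hence is not an apex either. Hence $a, b \in G_{l-1,k}$.

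The remaining step is to show $\Theta$ uses no vertex of $P_l$, so that $\Theta \subseteq G_{l-1,k}$ and the inductive hypothesis closes the argument. Suppose for contradiction that some path $R_i$ enters $P_l$ and let $S$ be a maximal $P_l$-sub-path of $R_i$ with endpoints $w, w'$. Then $w \neq w'$ are type-1 with ancestors $v, v' \in G_{l-1,k}$; the sub-path of $R_i$ from $v$ to $v'$ is $v$, $w$, $S$, $w'$, $v'$; and $v \neq v'$ because $R_i$ is a path. A key observation is that, as $\Theta$ is induced, for every type-1 interior vertex $x$ of $S$ with ancestor $y$, the vertex $y$ is \emph{not} in $V(\Theta)$: the two $\Theta$-edges at $x$ are both $P_l$-edges (since $x$ is internal), so the edge $xy$ would otherwise be a chord. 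Now choose $\Theta$ minimising $|V(\Theta) \cap V(P_l)|$ and shortcut $v$, $w$, $S$, $w'$, $v'$ through $G_{l-1,k}$: when $w, w'$ lie in a common box $\myBox_u$, the only way $v \neq v'$ is $\{v, v'\} = \{u, v_u\}$ with $v_u$ the ancestor of $u$, and $vv'$ is an edge of $G_{l-1,k}$; when they lie in distinct boxes $\myBox_u, \myBox_{u'}$, the natural replacement is the $P_{l-1}$ sub-path joining $u$ and $u'$ (possibly extended by one ancestor edge on each side), whose interior is disjoint from $V(\Theta)$ by the previous observation. The hard part is verifying in every sub-case that the resulting three paths are still internally disjoint, of length at least~$2$, and form an induced theta in $G_{l-1,k}$, contradicting the minimality of $|V(\Theta) \cap V(P_l)|$.
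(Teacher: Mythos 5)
Your girth argument and your exclusion of the apexes from $P_l$ are both correct and essentially identical to the paper's: the paper likewise bounds a short cycle via a maximal subpath of $P_l$ and~\ref{axi:A6}, and likewise applies Lemma~\ref{lem:apex-in-path} after observing that every degree-$3$ vertex of $P_l$ lies in a special path (your explicit list $\myBox_u$, $u_1P_lu_3$, $u_4P_lu_6$, $v_1P_lv_3$ is a correct justification of that observation). Your ``key observation'' that ancestors of interior vertices of $S$ avoid $V(\Theta)$ is also correct and is used in the paper.

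The gap is the entire last paragraph. You reduce theta-freeness to the claim that the segment $v\,w\,S\,w'\,v'$ can be rerouted through $P_{l-1}$ so as to produce a theta meeting $P_l$ in fewer vertices, and you then write that ``the hard part is verifying in every sub-case that the resulting three paths are still internally disjoint, of length at least~$2$, and form an induced theta.'' That verification is not a routine check: it is the substance of the theorem, it occupies most of the paper's proof (its Claims~\ref{cl:ancestors}--\ref{cl:descriptionP} plus a further page of analysis), and in some sub-cases the naive reroute simply fails, so that a different argument is needed. Concretely, when $w$ is one of the three neighbors of the ancestor of $u$ inside $\myBox_u$ (so $v$ is that ancestor rather than $u$ itself), your replacement path must pass through $u$; but your key observation only controls ancestors of \emph{interior} vertices of $S$, and it does not prevent $u$, or the three neighbors of $u$ in $\myBox_u \sm V(S)$, from already lying on $\Theta$ --- in which case the rerouted configuration is not internally disjoint or not induced. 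The paper's resolution of exactly this situation is to pin down the configuration completely (one ancestor is a type-$0$ vertex $x'$ of $P_{l-1}$, the other is a vertex $y'$ of a lower layer whose $P_{l-1}$-neighbor is adjacent to $x'$), then to prove that the two apexes must coincide with $x'$ and $y'$, and finally to exhibit a hole of $\Theta$ through $x'$ that avoids the apex $y'$, a contradiction. None of this has a counterpart in your outline, so as written the proof is incomplete precisely where the real difficulty lies.
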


\begin{proof}
  \setcounter{claim}{0}
  
  We first show by induction on $l$ that $G_{l,k}$ has girth at least
  $k$.  This is clear for $l\leq 1$, so suppose that $l\geq 2$ and let
  $H$ be a cycle in $G_{l,k}$ whose length is less than~$k$.  We may
  assume that layer $P_l$ contains some vertex of $H$, for otherwise
  $H$ is a cycle in $G_{l-1, k}$, so it has length at least~$k$ by the
  induction hypothesis.  Let $P = u \dots v$ be a path such that
  $V(P) \seq V(H) \cap V(P_l)$ and with the maximum length among such
  possible paths.  Note that $P$ contains at least two vertices.
  Indeed, if $P$ contains a single vertex, then such a vertex must
  have at least two ancestors, since it has degree 2 in $H$, which is
  impossible by the construction of $G_{l,k}$. So $u \neq v$.
  Moreover, note that as $P$ is contained in a cycle, both $u$ and $v$
  must have an ancestor. Let $u'$ and $v'$ be the ancestor of $u$ and
  $v$ respectively. By~\ref{axi:A6} of Construction~\ref{cons:G1(l,k)}
  $P$ has length at least~$k-2$. Hence $u'uPvv'$ has length at least
  $k-1$, so $H$ has length at least~$k$. \cpl
  
  \vspace{1 ex}
  
  Now we show that $G_{l,k}$ is theta-free. For a
  contradiction, suppose that it contains a theta. Let $\Theta$ be a
  theta with minimum number of vertices, and having $u$ and $v$ as
  apexes.  As above, without loss of generality, we may assume that $P_l$ contains some
  vertex of $\Theta$. Note that every vertex of $P_l$ is contained in
  a special path of $G_{l, k}$.  Hence, by Lemma~\ref{lem:apex-in-path},
  $u,v \notin V(P_l)$.  In particular, every vertex of
  $V(P_l) \cap V(\Theta)$ has degree 2 in $\Theta$.

  Let $P = x \dots y$ for some $x, y \in P_l$, be a path such that
  $V(P) \subseteq V(\Theta) \cap V(P_l)$ and it is inclusion-wise
  maximal with respect to this property.  Since every vertex of $P_l$ has at
  most one ancestor, $x \neq y$.  Moreover, both $x$ and $y$ must have
  an ancestor, because every vertex of $\Theta$ has degree 2 or 3 in
  $\Theta$.  Let $x'$ and $y'$ be the ancestor of $x$ and $y$
  respectively. By the maximality of $P$, both $x'$ and $y'$ are also
  in $\Theta$. Note that no vertex in the interior of $P$ is adjacent
  to $x'$ or $y'$, since otherwise such a vertex would have degree 3
  in $\Theta$, meaning that it is an apex, a contradiction.
  
  \vspace{1 ex}

  \begin{claim}
    \label{cl:ancestors}
    We have $x' \neq y'$, $x'y' \notin E(G_{l, k})$, and
    some internal vertex of $P$ is of type~1.
  \end{claim}
  
  \bpc Otherwise, $x' = y'$ or $x'y' \in E(G_{l, k})$, or every
  internal vertex of $P$ is of type~0. In the last case, we also have
  $x' = y' \in V(P_{l-1})$ or $x'y' \in E(G_{l, k})$ by the
  construction of $G_{l, k}$. Hence, in all cases,
  $V(P) \cup \{x',y'\}$ induces a hole in $\Theta$, that must contain
  both $u$ and $v$. Since $u,v \notin V(P_l)$, we have
  $u,v \in \{x',y'\}$.  But this is not possible as $x' = y'$ or
  $x'y' \in E(G_{l, k})$. \epc
  
  \vspace{1 ex}
  
  We now set $P' = x'xP_lyy'$ (that is a path by
  Claim~\ref{cl:ancestors}).

  \vspace{1 ex}
  
  \begin{claim}
    \label{cl:property-of-P}  
    There exists no vertex of type~0 in $P_{l-1}$  that has a neighbor
    in the interior of $P$. 
  \end{claim}

  \bpc For a contradiction, let $t \in V(P_{l-1})$ be of type~0 that
  has neighbors in the interior of $P$. Note that
  $t \notin V(\Theta)$ because internal vertices of $P$ have degree~2 in
  $\Theta$. Let $Q$ be the shortest path from $x'$ to $y'$ in
  $G_{l, k}[V(P') \cup \{t\}]$.  Note that $Q$ is shorter than $P'$,
  because it does not go through one vertex of $N_P(t)$. So, $P'$ can
  be substituted for $Q$ in $\Theta$, which provides a theta from $u$
  to $v$ with less vertices, a contradiction to the minimality of
  $\Theta$. \epc
  
  \vspace{1 ex}

  \begin{claim}
    \label{cl:descriptionP}
    We may assume that:
    \begin{itemize}
    \item $x'\in V(P_{l-1})$ and $x'$ has type 0.
    \item $y'\notin V(P_{l-1})$.
    \item $y'$ has a neighbor $w$ in $P_{l-1}$ and $x'w\in E(G_{l, k})$.
    \item Every vertex in $P$ has type~0, except $x$, $y$, and three
      neighbors of $w$. Observe that $w$ has type~1 and has three more
      neighbors in $P_l$ that are not in~$P$.
    \end{itemize}
  \end{claim}

  \bpc Suppose first that $x', y'$ are both in $P_{l-1}$. Then by
  Claim~\ref{cl:ancestors}, the path $x'P_{l-1}y'$ has length at
  least two.  Moreover, by Claim~\ref{cl:property-of-P}, all its internal
  vertices are of type~1, because they all have neighbors in the
  interior of $P$.  It follows that $x'P_{l-1}y'$ has length
  exactly two. We denote by $z$ its unique internal vertex. Substituting
  $x'zy'$ for $P'$, we obtain a theta that contradicts the minimality
  of $\Theta$.  Observe that the ancestor of $z$ is not in
  $V(\Theta)$, because it has three neighbors in $P$. This proves that
  $x', y'$ are not both in $P_{l-1}$.

  So up to symmetry, we may assume that $y'\notin V(P_{l-1})$.  Since $y'$
  has neighbor in $P_l$, it must be that $y'$ has a neighbor
  $w\in V(P_{l-1})$, and that along $P_l$, one visits in order three
  neighbors of $w$, then $y$ and two other neighbors of $y'$, and then
  three other neighbors of $w$.

  Let $w'$ be the neighbor of $w$ in $P_{l-1}$, chosen so that $w'$
  has neighbors in $P$. Since $w'$ has type 0, by
  Claim~\ref{cl:property-of-P}, we have $w'=x'$.  Hence, as claimed,
  $x'\in V(P_{l-1})$ and $x'w\in E(G)$.  \epc

  \vspace{1 ex}
  
  Let $a$, $b$, $c$, $a'$, $b'$, $c'$ be the six neighbors of $w$ in
  $P_l$ appearing in this order along $P_l$, in such a way that
  $a, b, c \in V(P)$ and $a', b', c' \notin V(P)$. We have
  $\{a', b', c'\} \cap V(\Theta) \neq \emptyset$, since otherwise we
  obtain a shorter theta from $u$ to $v$ by replacing $P'$ with
  $x'wy'$, a contradiction to the minimality of $\Theta$. Let $y''$ be
  the neighbor of $y'$ in $yP_la'$ closest to $a'$ along $yP_la'$. Since
  $w \notin V(\Theta)$, $V(y' y'' P_l c') \seq V(\Theta)$.

  If $y'\notin \{u, v\}$, then by replacing $x' P' y' y'' P_l c'$ with
  $x' w c'$, we obtain a theta, a contradiction to the minimality of
  $\Theta$. So, $y'\in \{u, v\}$. Without loss of generality, we may assume that
  $y' = v$.

  If $u \neq x'$, then by replacing $V(x'P'y' y'' P_l c')$ with
  $\{x', w, y', c'\}$ in $\Theta$, we obtain a theta from $w$ to $u$
  which contains less vertices than $\Theta$, a contradiction to the
  minimality of $\Theta$. So, $u = x'$.
  
  Recall that $x'$ has type~0.  Let $z \neq w$ be the neighbor of $x'$
  in $P_{l-1}$. Moreover, let $z'$ and $z''$ be the neighbor of $z$
  and $x'$ in $P_l$ respectively, such that all vertices in the interior of
  $z' P_l z''$ have degree 2.  Since $\Theta$ goes through $P$,
  $w \notin V(\Theta)$. Therefore $z, z', z'' \in V(\Theta)$.  This
  implies the hole $z x' z'' P_l z' z$ is a hole of $\Theta$, a
  contradiction because the other apex $v = y'$ is not in the hole.
  This completes the proof that $G_{l,k}$ is theta-free.
\end{proof}

\subsection*{Even-hole-free layered wheels}

\label{subsec:layeredEHFWheels}

Recall that (even hole, triangle)-free graphs have treewidth at most 5
(see~\cite{CameronSHV18}), and as we will see,
ttf-layered-wheels of arbitrarily large treewidth exist.  Hence, some
ttf-layered-wheels contain even holes (in fact, it can be checked that
they contain even wheels).  We now provide a construction of layered
wheel that is (even hole, $K_4$)-free, but that contains triangles (see
Figure~\ref{fig:ehf-layered-wheel}).  Its structure is similar to
ttf-layered-wheel, but slightly more complicated. 

The construction of ehf-layered-wheel that we are going to discuss emerges from the 
structure of wheels that may exist in a graph of the studied class (namely, even-hole-free graphs
with no $K_4$).
In the class of even-hole-free graphs, a wheel may have several centers
while having the same rim. Those centers may be adjacent or not. In Figure~\ref{fig:ehf-LW-wheel},
we give examples of wheels that may exist in an even-hole-free graph.
Formally, we do not need to prove that these wheels are even-hole-free, and therefore we 
omit the (straightforward) proof.

\begin{figure}[ht]
  \centering \includegraphics[scale=0.5]{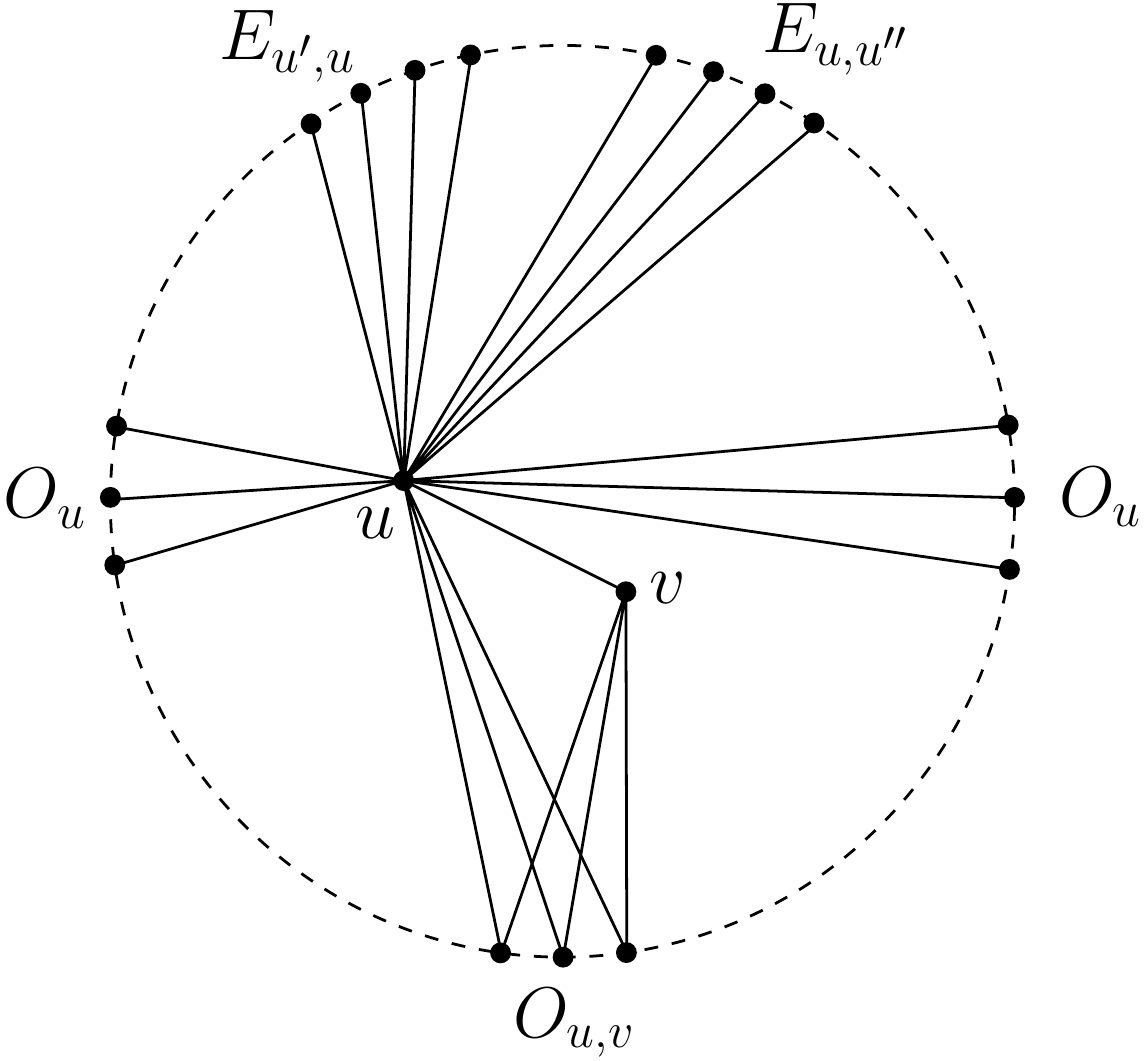} 
  \vspace{3ex}
  \centering \includegraphics[scale=0.5]{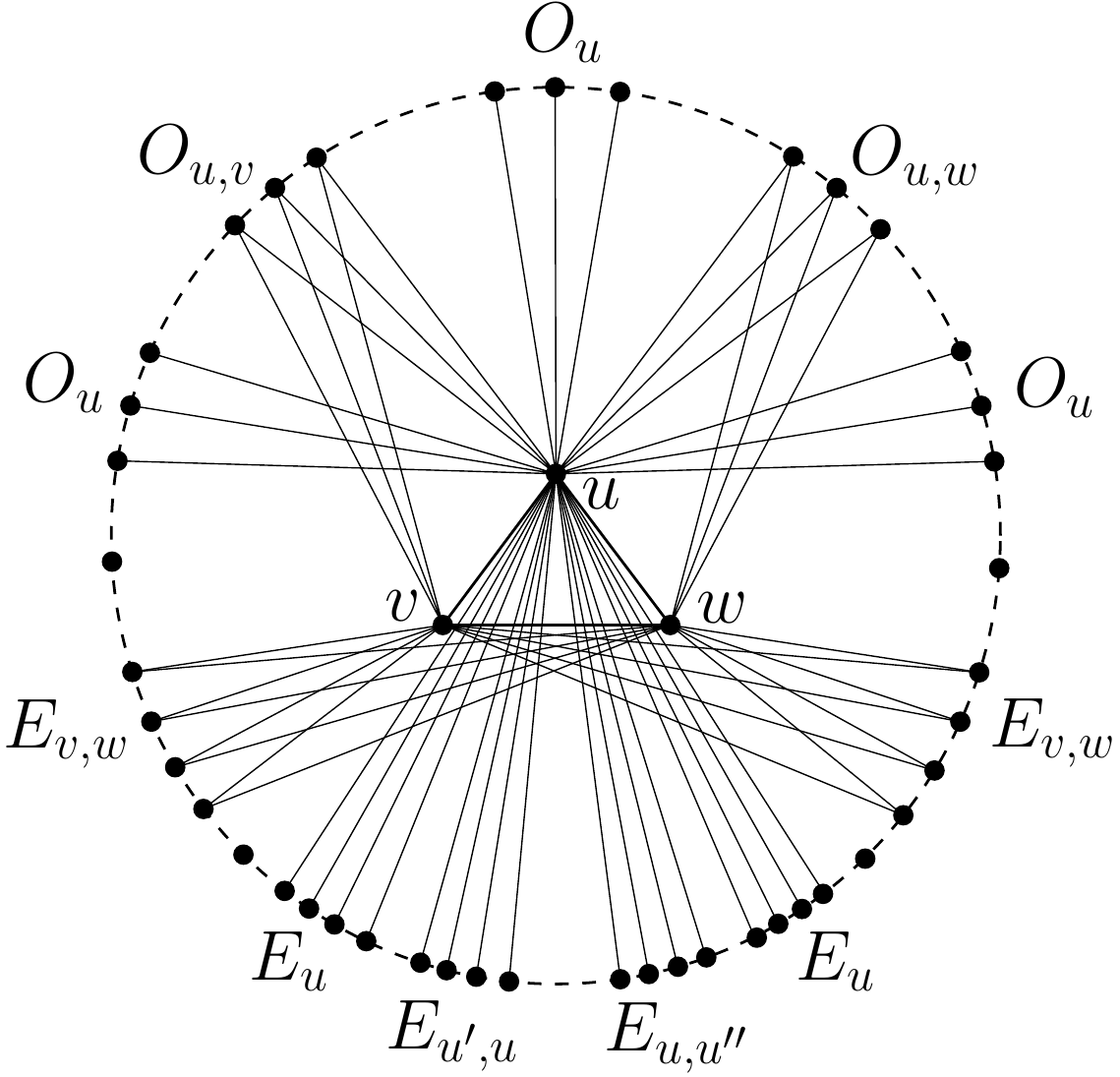}
  \caption{Wheels in an even-hole-free graph whose centers induce an edge or a triangle, 
  with the corresponding zones as described in Construction~\ref{cons:G2(l,k)} 
  (dashed lines between two vertices represent paths of odd length)}
  \label{fig:ehf-LW-wheel}
\end{figure}

Now we are ready to describe the construction of ehf-layered-wheel.

\begin{construction} \label{cons:G2(l,k)}

  Let $l \geq 1$ and $k \geq 4$ be integers. An
  $(l,k)$-ehf-layered-wheel, denoted by $G_{l,k}$, consists of $l+1$
  layers, which are paths $P_0, P_1,\dots, P_l$.  We view these paths
  as oriented from left to right. The graph is constructed as follows.
  	
\begin{enumerate}[label=(B\arabic*)]

\item \label{axi:B1} 
  $V(G_{l, k})$ is partitioned into $l+1$
  vertex-disjoint paths $P_0, P_1,\dots, P_l$. So,
  $V(G_{l, k}) = V(P_0) \cup \cdots \cup V(P_l)$.  The paths are
  constructed in an inductive way.
  
\item 
  The first layer $P_0$ consists of a single vertex $r$.  The
  second layer $P_1$ is a path such that
  $P_1= r_1P_1r_2P_1r_3$, where
  $\{r_1, r_2, r_3\} = N_{P_1}(r)$ and for $j = 1,2$,
  $r_jP_1r_{j+1}$ is of odd length at least~$k-2$.
    	
\item \label{axi:B3} 
  For every $0 \leq i \leq l$ and every vertex $u$ in
  $P_i$, we call \emph{ancestor of $u$} any neighbor of $u$ in
  $G_{l, k} \left[ P_0 \cup \cdots \cup P_{i-1} \right]$.  The
  \emph{type of $u$} is the number of its ancestors (as we will see,
  the construction implies that every vertex has type~0, 1, or 2).
  Observe that the unique vertex of $P_0$ has type~0, and $P_1$
  consists only of vertices of type~0 or type~1.  Moreover, we will
  see that if $u$ is of type~2, then its ancestors are adjacent. Also,
  the construction implies that for every $1 \leq i \leq l$, the ends
  of $P_i$ are vertices of type~1. 
    	
\item \label{axi:B4} 
  Suppose inductively that $l \geq 2$ and
  $P_0,P_1,\dots, P_{l-1}$ are constructed. The $l^{\textrm{th}}$-layer
  $P_{l}$ is built as follows.

  For all $0 \leq i \leq l-1$, any vertex $u \in V(P_i)$ has an odd
  number of neighbors in $P_{l}$, that are into subpaths of $P_l$ that
  we call \emph{zones}.  These zones are labeled by $E_u$ or $O_u$
  according to their parity: a zone labeled $E_u$ contains four
  neighbors of $u$, and a zone labeled $O_u$ contains three neighbors
  of $u$.  All these four or three neighbors are of type~1, and all
  the other vertices of the zone are of type~0.

  There are also zones that contain common neighbors of two vertices
  $u, v$. We label them $E_{u, v}$ (or $O_{u, v}$).  A zone $E_{u, v}$
  (resp.\ $O_{u, v}$) contains four (resp.\ three) common neighbors of
  $u$ and $v$.  All these four or three neighbors are of type~2, and
  all the other vertices of the zone are of type~0.

  The ends of a zone $E_u$ (resp.\ $O_u$) are neighbors of $u$. The
  ends of a zone $E_{u, v}$ (resp.\ $O_{u, v}$) are common neighbors
  of $u$ and $v$. Distinct zones are disjoint.

\item \label{axi:B5}  
  For any $u \in P_{l-1}$, we define the box $\myBox_u$, 
  that is a subpath of $P_l$, as follows:
  		
  \begin{itemize}
  \item If $u$ is of type~0 (so it is an internal vertex of
    $P_{l-1}$), then let $u'$ and $u''$ be the neighbors of $u$ in
    $P_{l-1}$, so that $u'uu''$ is a subpath of $P_{l-1}$.  In this
    case, $\myBox_u$ goes through three zones $E_{u', u}$, $O_u$,
    $E_{u, u''}$ that appear in this order along $P_{l}$ (see
    Figure~\ref{fig:ehf-layered-wheel}).

  \item  If $u$ is of type~1, then let $v \in P_i$,
    $i < l-1$ be its ancestor.

    If $u$ is an internal vertex of $P_{l-1}$, then let $u'$ and $u''$
    be the neighbors of $u$ in $P_{l-1}$, so that $u'uu''$ is a
    subpath of $P_{l-1}$.  In this case, $\myBox_u$ is made of five
    zones $E_{u', u}$, $O_u$, $O_{u, v}$, $O_u$, $E_{u, u''}$ (see
    Figure~\ref{fig:ehf-layered-wheel}).

    If $u$ is the left end of $P_{l-1}$, then let $u''$ be the
    neighbor of $u$ in $P_{l-1}$. In this case, $\myBox_u$ is made of
    four zones $O_u$, $O_{u, v}$, $O_u$, $E_{u, u''}$.

    If $u$ is the right end of $P_{l-1}$, then let $u'$ be the
    neighbor of $u$ in $P_{l-1}$. In this case, $\myBox_u$ is made of
    four zones $E_{u', u}$, $O_u$, $O_{u, v}$, $O_u$.
    		
  \item If $u$ is of type~2 (so it is an internal vertex
    of $P_{l-1}$), then let $v \in P_i$ and $w \in P_{j}$, $j \leq i$
    be its ancestors.  If $i=j$, we suppose that $v$ and $w$ appear in
    this order along $P_i$ (viewed from left to right).  It turns out
    that either $w$ is an ancestor of $v$, or $v, w$ are consecutive
    along some path $P_{i}$ (because as one can check, all vertices of
    type~2 that we create satisfy this statement).  In this case,
    $\myBox_u$ is made of 11 zones, namely $E_{u', u}$, $E_u$, $E_{v, w}$, $O_u$,
    $O_{u, v}$, $O_{u}$, $O_{u, w}$, $O_u$, $E_{v, w}$, $E_u$, and
    $E_{u, u''}$ (see Figure~\ref{fig:ehf-layered-wheel}). 
    		
  \end{itemize}
  
  Note that for any two adjacent vertices $u,v \in P_{l-1}$, 
  $\myBox_u$ and $\myBox_v$ are not disjoint.
  		
\item \label{axi:B6} 
  The path $P_{l}$ visits all the boxes
  $\myBox_{-}$ of $P_{l}$ in the same order as vertices in
  $P_{l-1}$. For instance, if $uvw$ is a subpath of $P_{l-1}$, then
  $\myBox_u$, $\myBox_v$, and $\myBox_w$ appear in this order along
  $P_{l}$.

\item \label{axi:B7} 
  Let $u$ and $v$ be two vertices of $P_l$, both of
  type~1 or 2, and consecutive in the sense that every vertex in the
  interior of $uP_{l}v$ is of type~0.  If $u$ and $v$ have a common
  ancestor, then $uP_{l}v$ has odd length, at least~$k-2$. If $u$ and
  $v$ have no common ancestor, then $uP_{l}v$ has even length, at
  least~$k-2$.

\item 
  Observe that every vertex in $P_l$ has type~0, 1, or 2. Moreover, as
  announced, every vertex of type~2 has two adjacent ancestors. 
  
\item \label{axi:B9} 
  There are no other vertices or edges apart from the ones
  specified above.
  
\end{enumerate}  	
\end{construction}

\begin{figure}[ht]
  \begin{center}
    \includegraphics[width=10cm]{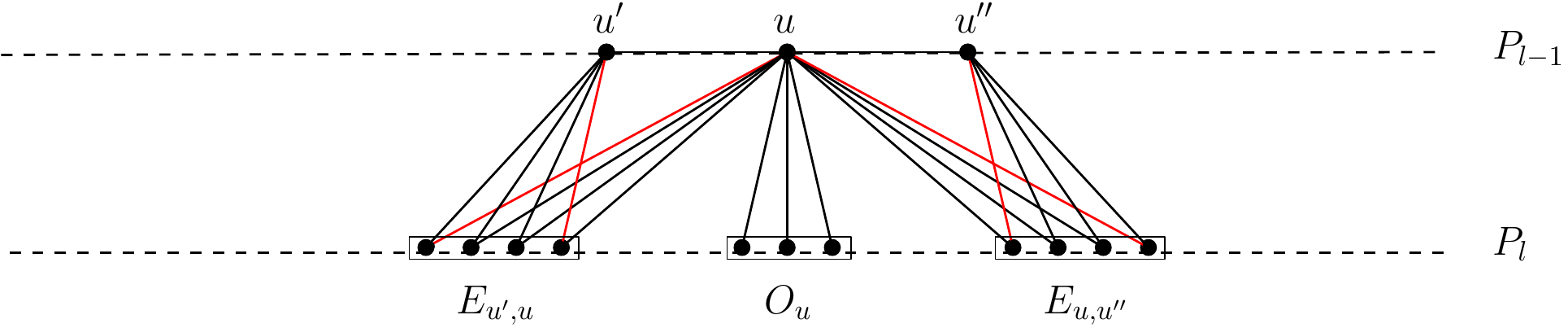}
  \end{center}
  \vspace{1ex} 
  \begin{center}
    \includegraphics[width=10cm]{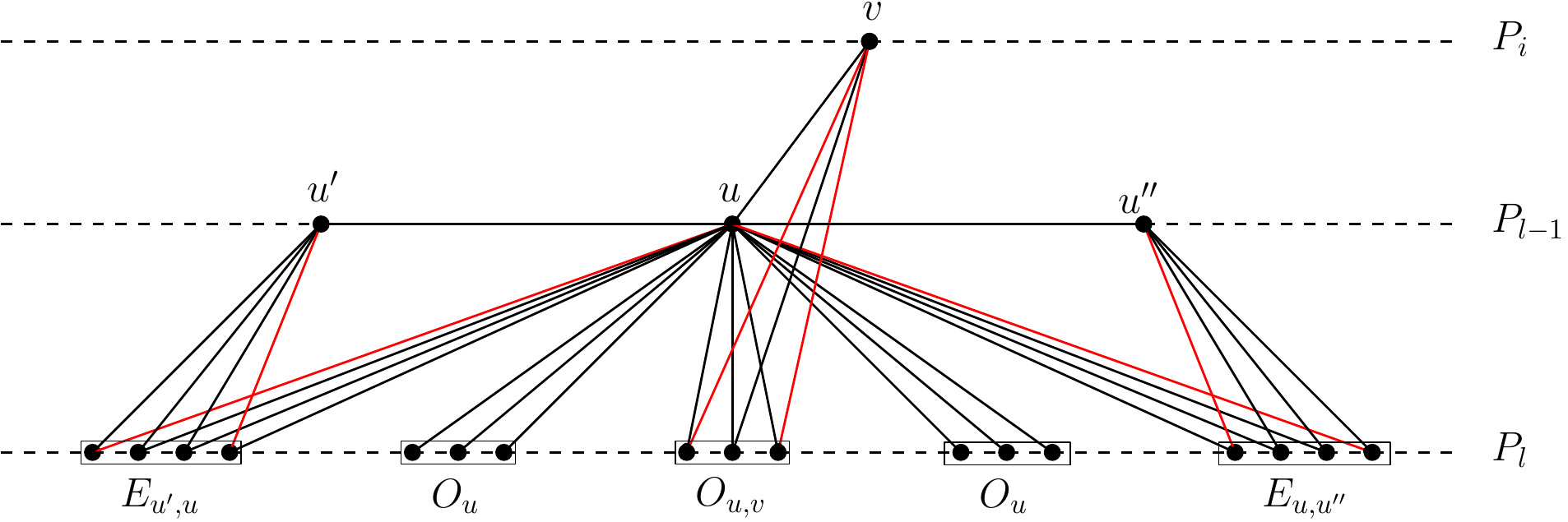}
  \end{center}
  \vspace{1ex} 
  \begin{center}
    \includegraphics[width=10cm]{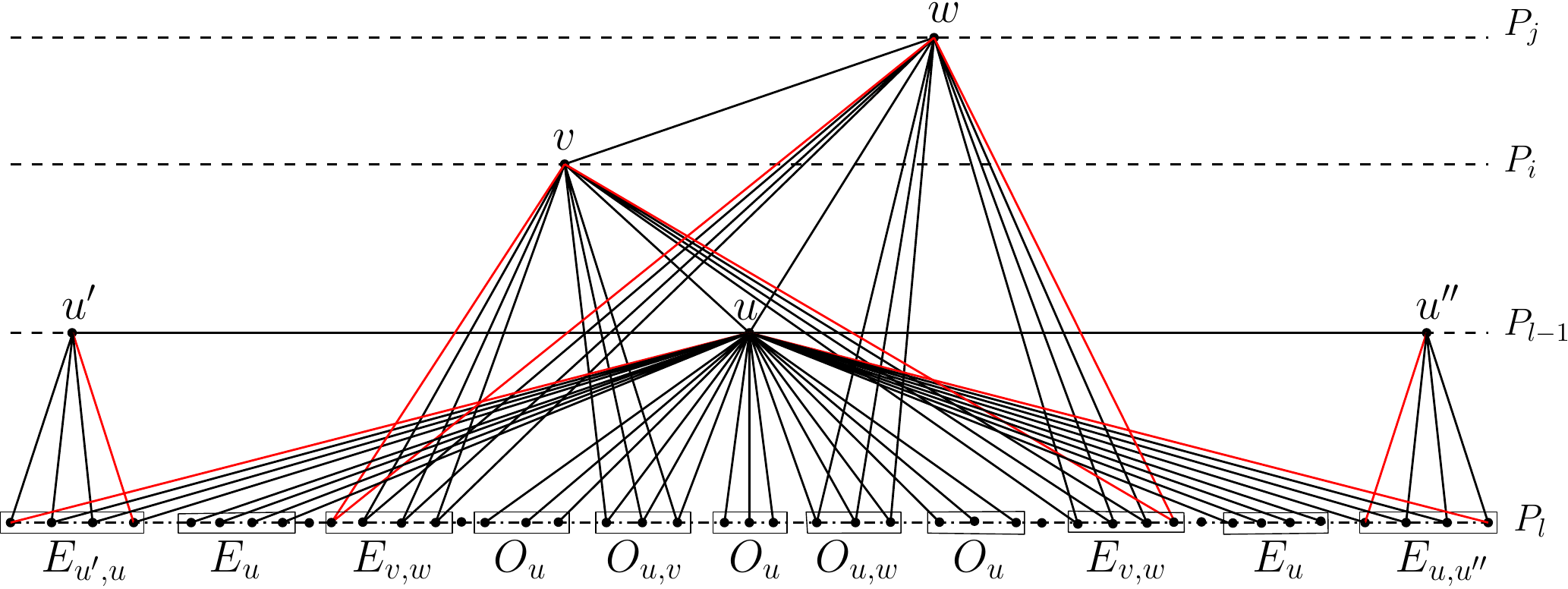}
  \end{center}
  \caption{The neighborhood of a type~0, type~1, or type~2 vertex 
  $u \in V(P_{l-1})$ in $P_l$ (dashed lines between two vertices in $P_l$ represent paths of odd length, and
  red edges represent non-internal edges as in the proof of Theorem~\ref{th:layeredWheel-EHF})
  \label{fig:ehf-layered-wheel}}
\end{figure}

For the same reason as for ttf-layered-wheels, we allow flexibility in
Construction~\ref{cons:G2(l,k)}, by just giving lower bounds for the
lengths of paths described in~\ref{axi:B7}.  So there may exist different
ehf-layered-wheels $G_{l, k}$ for the same value of $l$ and~$k$.

\begin{lemma}
  \label{lem:neighborhood-ehf-layered-wheel}
  For $0 \leq i \leq l-1$ and $i+1 \leq j \leq l$, every vertex $u \in V(P_i)$
  has at least~$3^{j-i}$ neighbors in $P_j$.
\end{lemma}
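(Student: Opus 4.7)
The plan is to imitate the proof of Lemma~\ref{lem:neighborhood-ttf-layered-wheel} almost verbatim, by induction on $j$, reading off the required counts directly from~\ref{axi:B4} and~\ref{axi:B5}. For the base case $j = i+1$, I would verify that every $u \in V(P_i)$ has at least three neighbors in $P_{i+1}$: when $i = 0$ the unique vertex of $P_0$ has exactly three neighbors in $P_1$ by the initial description of $P_1$; and when $i \geq 1$, each of the cases of~\ref{axi:B5} (type~0, type~1 including the end-of-path subcases, and type~2) places at least one zone $O_u$ inside $\myBox_u$, which by~\ref{axi:B4} supplies three type~1 neighbors of $u$ in $P_{i+1}$.

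For the inductive step $j > i+1$, I would fix $u \in V(P_i)$ and assume it already has at least $3^{j-1-i}$ neighbors in $P_{j-1}$. For each such neighbor $w \in V(P_{j-1})$, the vertex $u$ is an ancestor of $w$, so $w$ has type~1 or type~2 with $u$ among its ancestors, and in either subcase the description of $\myBox_w$ in~\ref{axi:B5} contains a zone $O_{w,u}$ of three common neighbors of $w$ and $u$ in $P_j$. Hence each such $w$ contributes at least three neighbors of $u$ to $P_j$.

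The step that needs care, and is the only real obstacle beyond bookkeeping, is to ensure that the contributions from distinct neighbors $w, w'$ of $u$ in $P_{j-1}$ are pairwise disjoint, so that the multiplication is legitimate. By~\ref{axi:B5}, the only overlap between $\myBox_w$ and $\myBox_{w'}$ occurs when $ww' \in E(P_{j-1})$, and is then confined to a single shared zone of the form $E_{w,w'}$, whose vertices are by construction common neighbors of $w$ and $w'$ only. By~\ref{axi:B9} no other edges exist in $G_{l,k}$, so none of these shared vertices is adjacent to $u$ (whose layer has index $i < j-1$). Consequently the zones $O_{w,u}$ for distinct $w$'s are pairwise disjoint subpaths of $P_j$ and jointly provide at least $3 \cdot 3^{j-1-i} = 3^{j-i}$ neighbors of $u$ in $P_j$, closing the induction.
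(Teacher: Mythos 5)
Your proof is correct and follows essentially the same route as the paper, which omits the argument and simply points to the induction used for ttf-layered-wheels: induct on $j$, read the base case off the construction of $P_{i+1}$, and in the inductive step attribute to each neighbor $w$ of $u$ in $P_{j-1}$ the three vertices of the zone $O_{w,u}$ inside $\myBox_w$. Your extra care about disjointness is the right ehf-specific detail to check (and follows even more directly from~\ref{axi:B4}, which states that distinct zones are disjoint), so the multiplication $3\cdot 3^{j-1-i}=3^{j-i}$ is legitimate.
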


\begin{proof}
  We omit the proof since it is similar to the proof 
  of~Lemma~\ref{lem:neighborhood-ttf-layered-wheel}.
\end{proof}

\vspace{1 ex} Lemma~\ref{lem:neighborhood-ehf-layered-wheel} implies
that every vertex of layer $i$ has neighbors in all layers
$i+1,\dots, l$.  The next lemma is clear.

\begin{lemma}
  For every integers $l \geq 1$ and $k \geq 4$, there exists an
  $(l,k)$-ehf-layered-wheel.
\end{lemma}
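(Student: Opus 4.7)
The plan is to execute Construction~\ref{cons:G2(l,k)} as an inductive algorithm, using induction on $l$, in exact parallel with the proof of Lemma~\ref{lem:ttf-exist} for the ttf case. For the base case $l=1$, take $P_0$ to be the single vertex $r$, and realize $P_1$ as a path that passes through three prescribed neighbors $r_1,r_2,r_3$ of $r$, inserting sufficiently many type-0 vertices between consecutive $r_j$'s so that each subpath $r_jP_1r_{j+1}$ has odd length at least~$k-2$. This is always possible, since parity can be freely adjusted by padding with extra type-0 vertices.

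For the inductive step, assume that $P_0,\dots,P_{l-1}$ have already been constructed in accordance with~\ref{axi:B1}--\ref{axi:B9}. Walk along $P_{l-1}$ from left to right, and for each vertex $u\in P_{l-1}$ determine its type from the edges already created in previous layers; then concatenate the box $\myBox_u$ onto $P_l$ as prescribed by~\ref{axi:B5}, where consecutive boxes share their common $E$-zone. Each zone is simply a subpath of $P_l$ containing three or four prescribed neighbors of a single vertex $u$ (for $O_u$, $E_u$) or of a pair $(u,v)$ (for $O_{u,v}$, $E_{u,v}$); all remaining internal vertices of a zone are taken to be type~0. Between consecutive type-1 or type-2 vertices on $P_l$ one then pads with enough type-0 vertices to satisfy the parity and length requirements of~\ref{axi:B7}; since only a lower bound $k-2$ is imposed, any parity can be arranged.

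The only condition of the construction whose preservation is not completely routine is the announced structural property of type-2 vertices: whenever a vertex $x\in P_l$ has two ancestors $v,w$, these must be adjacent, and moreover either $w$ is an ancestor of $v$ (or vice versa) or $v,w$ are consecutive along some common $P_i$. Inspection of~\ref{axi:B5} shows that the only zones creating type-2 vertices inside $\myBox_u$ are $E_{u',u}$, $E_{u,u''}$, $O_{u,v}$, $O_{u,w}$, and $E_{v,w}$; in each case the corresponding pair of ancestors is adjacent. Indeed, $\{u',u\}$ and $\{u,u''\}$ are consecutive on $P_{l-1}$; $\{u,v\}$ and $\{u,w\}$ are adjacent because $v$ (respectively $w$) is by definition an ancestor of the type-1 (respectively type-2) vertex $u$; and for the zone $E_{v,w}$ one invokes the inductive hypothesis applied to $u$ itself, which guarantees that $v$ and $w$ are adjacent with the required structural relationship.

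The main obstacle, therefore, is the bookkeeping in the last paragraph: one must check that every zone of the form $E_{v,w}$ (or $O_{v,w}$) scheduled by Construction~\ref{cons:G2(l,k)} corresponds to an edge already present in the graph built so far, which is precisely where the inductive hypothesis on the structure of type-2 vertices is used in a self-referential way. Once this is verified, all clauses~\ref{axi:B1}--\ref{axi:B9} are preserved, the algorithm terminates after producing $P_l$, and the lemma follows.
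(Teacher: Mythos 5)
Your proposal is correct and matches the paper's intent: the paper simply declares this lemma ``clear'' (as it does for the ttf analogue, on the grounds that Construction~\ref{cons:G2(l,k)} is itself an inductive algorithm), and your write-up is exactly the elaboration of that algorithm, including the two points that genuinely need checking --- that the parity/length constraints of~\ref{axi:B7} can always be met by padding with type-0 vertices, and that the invariant that every type-2 vertex has adjacent ancestors (with the ancestor-or-consecutive relationship) is preserved so that the zones $E_{v,w}$ and $O_{u,v}$, $O_{u,w}$ are well defined at the next step.
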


\vspace{1 ex} We need some properties of lengths of some paths in
ehf-layered-wheel. It is convenient to name specific subpaths of boxes
first (see Figure~\ref{fig:ehf-layered-wheel}). 

\begin{itemize}
\item
Suppose that $u$ is a vertex in $P_{l-1}$ (of any type).

If $u$ is not an end of $P_{l-1}$, then a subpath of $\myBox_u$ is a
\emph{shared part of $\myBox_u$} if it is either the zone $E_{u', u}$
or the zone $E_{u, u''}$.  The \emph{private part of $\myBox_u$} is
the path from the rightmost vertex of $E_{u', u}$ to the leftmost
vertex of $E_{u, u''}$.

Otherwise, if $u$ is the left end of $P_{l-1}$ (and therefore of
type~1), then $u$ has only one shared part, that is the zone
$E_{u, u''}$, where $u'' \in N_{P_{l-1}}(u)$.  The private part of $u$
is the path from the leftmost vertex of the leftmost zone $O_u$ 
to the leftmost vertex of $E_{u, u''}$.

Similarly, if $u$ is the right end of $P_{l-1}$, then $u$ has only one
shared part, that is the zone $E_{u', u}$, where
$u' \in N_{P_{l-1}}(u)$.  The private part of $u$ is the path from the
rightmost vertex of $E_{u', u}$ to the rightmost vertex of the
rightmost zone $O_u$.

Observe that $\myBox_u$ is edgewise partitioned into a private part
and some shared parts (namely zero if $l=1$ and $u$ is the unique vertex of
layer $P_0$, one if $l>1$ and $u$ is an end of $P_{l-1}$, two
otherwise).

\item 
  Suppose that $u$ is of type~1 and $v$ is its ancestor.

  If $u$ is not the left end of $P_{l-1}$, then the \emph{left escape
    of $v$ in $\myBox_u$} is the subpath of $\myBox_u$ from the
  rightmost vertex of $E_{u', u}$ to the leftmost vertex of
  $O_{u, v}$.

  If $u$ is not the right end of $P_{l-1}$, then the \emph{right
    escape of $v$ in $\myBox_u$} is the subpath of $\myBox_u$ from the
  rightmost vertex of $O_{u, v}$ to the leftmost vertex of
  $E_{u, u''}$.

\item Suppose that $u$ is of type~2 and $v, w$ are its ancestors as in
  Construction~\ref{cons:G2(l,k)}.  Note that $u$ is not an end of
  $P_{l-1}$.

  The \emph{left escape of $v$ (resp.\ of $w$) in $\myBox_u$} is the
  subpath of $\myBox_u$ from the rightmost vertex of $E_{u', u}$ to
  the leftmost vertex of the zone $E_{v, w}$ that is the closest to $E_{u', u}$.

  The \emph{right escape of $v$ (resp.\ of $w$) in $\myBox_u$} is the
  subpath of $\myBox_u$ from the rightmost vertex of the zone
  $E_{v, w}$ that is the closest to $E_{u, u''}$, to the leftmost vertex of
  $E_{u, u''}$.
\end{itemize}


\begin{lemma} \label{lem:length-of-boxes}
    Let $G_{l, k}$ be an ehf-layered-wheel with $l\geq 1$ and $u$
    be a vertex in the layer $P_{l-1}$.  Then the following hold:
    \begin{itemize}
      \item Shared parts of $\myBox_u$ are paths of odd length. 
      \item The private part of $\myBox_u$ is a path of even length
      if $u$ is not an end of $P_{l-1}$; and it is of odd length
      otherwise.
      \item If $u$ has type 1 or 2, then all the left and right
        escapes of its ancestors in $\myBox_u$ are paths of even length.  
    \end{itemize}
\end{lemma}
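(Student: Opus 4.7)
The plan is to extract a single parity principle from axiom (B7) and then carry out a case analysis on the type of $u$. The key observation is that in $P_l$, two consecutive vertices of type 1 or 2 are at odd distance in $P_l$ if their ancestor sets share an element, and at even distance otherwise. This is precisely (B7), combined with the fact that all vertices of $P_l$ outside the zones are of type 0 (they have no ancestor in earlier layers) and that, inside a zone, only the prescribed $3$ or $4$ designated neighbors of the ancestor(s) are of type 1 or 2. Hence between any two consecutive type-1/2 vertices of $P_l$, every intermediate vertex is of type 0, so each such subpath is a single application of (B7).

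Applying this to a single zone yields the parities of the zones themselves: $E_u$ contains $4$ consecutive type-1 neighbors of $u$ with $3$ intervening gaps, each sharing the ancestor $u$ and therefore odd, so $E_u$ has odd length; likewise $E_{u,v}$ has odd length, while $O_u$ and $O_{u,v}$ each produce $2$ odd gaps and hence even length. Since a shared part of $\myBox_u$ is by definition a zone $E_{u',u}$ or $E_{u,u''}$, this establishes the first bullet.

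For the private part, I then do a case analysis on the type of $u$ (and whether $u$ is internal or an end of $P_{l-1}$). In each case I list the type-1/2 vertices along the private part in left-to-right order together with their ancestor sets, and I count how many consecutive pairs share an ancestor (odd gap) and how many do not (even gap). When $u$ has type 0 or 1, \emph{every} consecutive pair of type-1/2 vertices within the private part shares the ancestor $u$, so the parity of the length is just the parity of the number of gaps: internal type-0 $u$ gives $5$ special vertices and $4$ odd gaps (even); internal type-1 $u$ gives $11$ special vertices and $10$ odd gaps (even); a type-1 end of $P_{l-1}$ drops one type-2 endpoint, leaving $9$ odd gaps (odd), as required. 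The type-2 case is the only one producing even-length gaps: they occur exactly at the two transitions between an $E_u$ zone (ancestor set $\{u\}$) and an $E_{v,w}$ zone (ancestor set $\{v,w\}$), whose intersection is empty; among the $32$ gaps of the private part, $4$ are even and $28$ are odd, so the total length is again even.

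For the escapes the argument is analogous and shorter. When $u$ has type 1, both the left and right escapes contain $5$ type-1/2 vertices (the $E_{\cdot,u}$ endpoint, the $3$ neighbors of $u$ in the adjacent $O_u$, and an endpoint of $O_{u,v}$), all of whose consecutive pairs share the ancestor $u$; this gives $4$ odd gaps and even total length. When $u$ has type 2, each escape crosses exactly one $E_u$ zone before reaching an $E_{v,w}$ zone, yielding $4$ odd gaps (inside the $E_u$ zone and at the $E_{u',u}$–$E_u$ or $E_u$–$E_{u,u''}$ transition) plus one even gap (the $E_u$–$E_{v,w}$ transition whose ancestor sets $\{u\}$ and $\{v,w\}$ are disjoint); the total length is again even. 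The main obstacle is simply the bookkeeping in the type-2 case, where $\myBox_u$ has $11$ zones, but once the ancestor sets are laid out in order the arithmetic is mechanical.
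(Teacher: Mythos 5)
Your proposal is correct and follows essentially the same route as the paper: both arguments decompose the private part, shared parts, and escapes of $\myBox_u$ into the subpaths between consecutive type-1/2 vertices, observe via~\ref{axi:B7} that each such subpath is odd exactly when the two endpoints share an ancestor (which fails only at the $E_u$--$E_{v,w}$ and $E_{v,w}$--$O_u$ type transitions in the type-2 case), and then count parities case by case; your gap counts (4, 10, 9, and 28 odd plus 4 even for the private parts, and 4 odd or 4 odd plus 1 even for the escapes) match the paper's subpath counts exactly.
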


\begin{proof}
  To check the lemma, it is convenient to follow the path $\myBox_u$
  on Figure~\ref{fig:ehf-layered-wheel} from left to right.
  Along this proof, we refer to Construction~\ref{cons:G2(l,k)},
  and we follow the notation given in Figure~\ref{fig:ehf-layered-wheel}.

  By~\ref{axi:B7}, shared parts of $\myBox_u$ have obviously odd length.
  
  If $u$ has type~0, then along the private part of $\myBox_u$, one
  meets 1 common neighbor of $u$ and $u'$, then 3 private neighbors of
  $u$, and then 1 common neighbor of $u$ and $u''$.  In total, from
  the leftmost neighbor of $u$ to its rightmost neighbor, one goes
  through 4 subpaths of $\myBox_u$, each of odd length by~\ref{axi:B7}
  (2 of the paths are in zones, while 2 of them are between
  zones). The private part of $\myBox_u$ has therefore even length.

  If $u$ has type~1, then the proof is similar.
  If it is not an end of $P_{l-1}$, then along the private part 
  of $\myBox_u$, one visits 10 subpaths
  (6 in zones, 4 between zones), each of odd length by~\ref{axi:B7}.
  Otherwise, one visits 9 subpaths
  (6 in zones, 3 between zones), each of odd length by~\ref{axi:B7}.

  If $u$ has type 2 then $u$ is not an end of $P_{l-1}$.
  Now there are more details to check. 
  Along the private part of $\myBox_u$, one visits 32
  subpaths. Among them, 22 are in zones and have odd length
  by~\ref{axi:B7}, and 10 are between zones. But 4 of the subpaths
  between zones have even length by~\ref{axi:B7}, namely, the paths
  linking $E_u$ to $E_{v, w}$ (because
  $\{u\} \cap \{v, w\} = \emptyset$), $E_{v,w}$ to $O_u$, 
  $O_{u}$ to $E_{v,w}$, and $E_{v,w}$ to $E_u$.  
  The 6 remaining subpaths between zones have odd
  length by~\ref{axi:B7}.  In total, the private part of $\myBox_u$ has even
  length as claimed.

  For the left and right escapes, the proof is similar.
  If $u$ is of type~1, then the escape is made of 4 paths each of odd length.
  If $u$ is of type~2, then the escape is made of 
  the path between zones $E_{v,w}$ and $E_u$ that is of even length,
  three paths in zone $E_u$ each of an odd length, and the path
  between zone $E_u$ and $E_{u',u}$ or $E_{u,u''}$ that is of odd length.
  So, every left and every right escape is of even length.
\end{proof}

\vspace{1 ex}


\begin{theorem} \label{th:layeredWheel-EHF}
  For every integers $l \geq 1$ and $k \geq 4$, every
  $(l, k)$-ehf-layered-wheel $G_{l,k}$ is (even hole, $K_4$)-free and
  every hole in $G_{l,k}$ has length at least~$k$.
\end{theorem}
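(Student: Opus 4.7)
I would establish the three properties --- every hole has length at least~$k$, $K_4$-freeness, and absence of even holes --- in increasing order of difficulty, by induction on $l$.

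For the hole-length bound, I would imitate the first paragraph of the proof of Theorem~\ref{th:layeredWheel-TTF}. If a hole $H$ in $G_{l,k}$ avoids $P_l$, it is a hole of $G_{l-1,k}$ and induction applies. Otherwise, let $P=u\dots v$ be a maximal subpath of $H$ in $P_l$; its ends both have type $\geq 1$ because they send an edge off $P_l$, and by~\ref{axi:B7} the subpath of $P_l$ between any two consecutive vertices of type~$\geq 1$ has length at least~$k-2$. Hence $P$ has length at least~$k-2$, the two ancestor edges contribute~$2$ more edges, and $|H| \geq k$.

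For $K_4$-freeness I would argue directly. Suppose $\{a,b,c,d\}$ induces a $K_4$ and let $P_i$ be the highest layer meeting the clique, with $a\in P_i$. Then $b$, $c$, $d$ are neighbors of $a$ lying in layers $\leq i$, each being either an ancestor of $a$ or a neighbor of $a$ on $P_i$. Since $a$ has type at most~$2$ and at most two neighbors along $P_i$, the only nontrivial cases are: (i) two ancestors and one $P_i$-neighbor~$d$, which forces $d$ to be a type-$2$ vertex with the same ancestor pair $\{v,w\}$ as $a$; both $a$ and $d$ then lie in a common zone $E_{v,w}$ or $O_{v,w}$, but within such a zone consecutive common neighbors of $v$ and $w$ are separated by type-$0$ vertices, contradicting the edge $ad$; (ii) one ancestor and two $P_i$-neighbors, which would require two vertices consecutive to $a$ on the path $P_i$ to be adjacent. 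The remaining cases are ruled out by cardinality.

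The main step is the absence of even holes. Let $H$ be a shortest even hole in $G_{l,k}$; the base case $l=1$ is a direct check on the two holes through~$r$. If $V(H)\cap V(P_l)=\emptyset$, $H$ is an even hole of $G_{l-1,k}$, contradicting induction. Otherwise, write $H$ as an alternating concatenation of maximal subpaths $Q_1,\dots,Q_s$ of $H\cap P_l$ joined by connecting paths $R_1,\dots,R_s$ in $G_{l-1,k}$. Mimicking the shortcut arguments used in the proof of Theorem~\ref{th:layeredWheel-TTF}, I would first reduce to the case $s=1$: any configuration with $s\geq 2$ admits a shortcut through a common ancestor in $P_{l-1}$ that yields a strictly shorter even hole, contradicting minimality.

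With $s=1$, $H\cap P_l$ is a single path $Q=u\dots v$ whose ends have ancestors in $H$, and $R$ is a path in $G_{l-1,k}$ from $v$ back to $u$ through those ancestors. The plan is then to compute the parity of $|H|$ by edge-partitioning $Q$ into shared parts, private parts, escapes, and subpaths of $P_l$ joining consecutive type-$1$ or type-$2$ vertices: Lemma~\ref{lem:length-of-boxes} together with~\ref{axi:B7} assigns a definite parity to each piece. The parity of $R$ is handled by applying the induction hypothesis to a hole obtained by closing $R$ through a short subpath of $P_l$. Adding the parities together will show that $|H|$ is odd, the desired contradiction. The main obstacle is the subcase where $u$ or $v$ has type~$2$, since the closing walk $R$ may then exit through either adjacent ancestor or cross the triangle they form; it is precisely the fact that escapes have \emph{even} length (Lemma~\ref{lem:length-of-boxes}) that makes the parity tally come out correctly in this case.
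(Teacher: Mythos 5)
Your treatment of the hole-length bound and of $K_4$-freeness is fine (the paper leaves both essentially to the reader, and your case analysis for $K_4$ is a correct expansion of ``clear from the construction''), but the even-hole argument --- which is the real content of the theorem --- has two genuine gaps. First, the reduction to $s=1$ is asserted, not proved, and I do not see why it should hold: two maximal subpaths $Q_i$, $Q_{i+1}$ of $H\cap P_l$ can lie in boxes of far-apart vertices of $P_{l-1}$, and the connecting path $R_i$ can climb through several lower layers, so there is no reason for ``a shortcut through a common ancestor in $P_{l-1}$'' to exist. The paper never performs such a reduction; it fixes a single maximal subpath $P=s\dots t$ of $H\cap P_l$ and derives a contradiction from the local structure around $P$ alone, whatever the rest of $H$ looks like, so the number of such subpaths never enters the argument.

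Second, even in the $s=1$ case your parity bookkeeping for the connecting path $R$ is not actually available. The induction hypothesis says $G_{l-1,k}$ has no even hole; it does not determine the parity of an arbitrary induced path $R$ in $G_{l-1,k}$. ``Closing $R$ through a short subpath of $P_l$'' produces a cycle of $G_{l,k}$, not of $G_{l-1,k}$, so induction on $l$ does not apply, and invoking the minimality of $H$ instead requires showing that the closed-up cycle is chordless and strictly shorter --- which is exactly the hard part. The paper's proof spends most of its effort on precisely this: it first proves that $H$ contains no \emph{internal edge} (an edge from a lower layer into the interior of a box or zone, Claim~\ref{cl:intEdge}) and that no vertex of $P_{l-1}$ has all of its $P_l$-neighbors inside $P$ (Claim~\ref{cl:not-all-in-P}); these two facts are what guarantee that the parity-preserving replacements ($usPtv$ by $uwv$, $uww'v$, $uv'v$, $uv't''$, and so on) yield chordless cycles. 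Your outline contains no analogue of these claims, so none of the proposed shortcuts can be certified to produce a hole. To repair the proof you would essentially have to reintroduce the paper's local analysis, at which point the global decomposition into $Q_1,\dots,Q_s$ and the reduction to $s=1$ become unnecessary.
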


\begin{proof}
  \setcounter{claim}{0}
  
  It is clear from the construction that $G_{l,k}$ does not
  contain~$K_4$.  Moreover, it follows from~\ref{axi:B7} that apart from
  triangles, any chordless cycle in $G_{l,k}$ is of length at least~$k$ (we omit
  the formal proof that is similar to the proof that
  ttf-layered-wheels have girth at least~$k$).

  For a contradiction, consider an ehf-layered-wheel $G_{l, k}$ that
  contains an even hole $H$. Suppose that $l$ is minimal, and under
  this assumption that $H$ has minimum length.  Hence, layer $P_l$
  contains some vertex of $H$, for otherwise
  $G_{l, k}[P_0 \cup \cdots \cup P_{l-1}]$ would be a counterexample.
  Let us start by the following claim.
    \vspace{1ex}
  
  Let $x$ be a vertex in $P_{i}$ where $0\leq i<l$, and $y$ be a
  neighbor of $x$ in $P_{l}$.  We say that $xy$ is an \emph{internal
    edge} (see Figure~\ref{fig:ehf-layered-wheel}) if one of the
  following holds:
  \begin{itemize}
    \item $i=l-1$ and $y$ is an internal vertex of $\myBox_x$.
    \item $i<l-1$, $x$ is an ancestor of $x'\in V(P_{l-1})$, $x'$ has
    type~1 or~2, $y$ is in $\myBox_{x'}$ and $y$ is neither the
    leftmost neighbor of $x$ in $\myBox_{x'}$ nor the rightmost
    neighbor of $x$ in $\myBox_{x'}$.
  \end{itemize}

  \begin{claim}
    \label{cl:intEdge}
    $H$ contains no internal edge. 
  \end{claim}
  
  \bpc 
Let $xy$ be an internal edge as in the definition and suppose for a
contradiction that $xy$ is an edge of $H$.  Let $Q=y\dots z$ be the
path of $H$ that is included in $P_l$ and that is maximal with respect
to this property.  Let $z'$ be the ancestor of $z$ that is in $H$ (it
exists by the maximality of $Q$).

Suppose first that $x$ is in $P_{l-1}$. We then set $x=u$ and observe that
$u$ has type~0,~1 or~2 (see Figure~\ref{fig:ehf-layered-wheel}).  If $u$ has type~0, then
since $uy$ is internal edge, $y$ is either in the zone $O_u$, or is among
the three rightmost vertices of zone $E_{u', u}$, or is among the
three leftmost vertices of zone $E_{u, u''}$ (where $u'$ and $u''$ are the left
and the right neighbors of~$u$ respectively in~$P_{l-1}$ as shown in 
Figure~\ref{fig:ehf-layered-wheel}). Since no internal vertex
of $Q$ is adjacent to $u$ because $H$ is a hole, we have $zu\in E(G)$
and $z'=u$.  So, $H=uyQzu$ and $H$ has odd length by the axiom~\ref{axi:B7}, a
contradiction.  If $u$ has type~1 (and ancestor $v$ as represented in
Figure~\ref{fig:ehf-layered-wheel}), the proof is similar (note in this case that $z'\neq v$ for
otherwise the triangle $uvy$ would be in $H$, a contradiction).

If $u$ has type~2 (and ancestors $v, w$ as represented in Figure~\ref{fig:ehf-layered-wheel})
the proof is similar with some additional situations.  For instance,
it can be that $y$ is the rightmost vertex of the leftmost zone
$Z=E_u$.  In this case, $z$ can be either the leftmost vertex of the
zone $E_{v, w}$ that is next to $Z$, or the leftmost vertex of the zone
$O_u$ that is closest to $Z$.  In the first case, $z'=v$ or $z'=w$
(say $z'=v$ up to symmetry), so $H=uyQzvu$ and $H$ has odd length by~\ref{axi:B7}; 
in the second case, $H=uyQzu$ and $H$ has again odd length by~\ref{axi:B7}, 
a contradiction.  Similar situations are when $y$ is the
leftmost vertex of the leftmost zone $O_u$, when $y$ is the rightmost
vertex of the rightmost zone $O_u$ and when $y$ is the leftmost vertex
of the rightmost zone $E_u$. We omit the details of each situation. 

Suppose now that $x$ is not in $P_{l-1}$.  Since $x$ has neighbor in
$P_l$, $x$ is the ancestor of some vertex $u$ from $P_{l-1}$.  If $u$ is
of type~1 with ancestor $v$, then $x=v$. We observe that $y$ must be
the middle vertex of the zone $O_{u, v}$. Hence, $H=vyQzv$ and $H$ has
odd length by~\ref{axi:B7}, a contradiction.

So, $u$ has type~2 and ancestors $v, w$.  Up to symmetry, we may
assume that $x=v$.  As in the previous cases, whatever the place of
$y$ in~$\myBox_u$, we must have either $H=vyQzv$, or $H=vyQzuv$, or $H=vyQzwv$
(when $y$ is the rightmost vertex of $O_{u, v}$ and $z$ is the leftmost vertex of $O_{u, w}$).
In all cases, $H$ has odd length, a contradiction.
\epc  

\vspace{1 ex}

  Now let $P = s \dots t$ be a subpath of $H$ in $P_l$ such that $P$ is
  inclusion-wise maximal.  So both $s$ and $t$ have an ancestor that is in~$H$.  If
  $P$ contains a single vertex (i.e., $s=t$), then $s$ must have two
  ancestors, say, $s_1$ and $s_2$, which are adjacent by \ref{axi:B3} of
  Construction~\ref{cons:G2(l,k)}.  Thus $\{s,s_1,s_2\}$ forms a triangle
  in $H$, which is not possible.  So $P$ contains at least two
  vertices and $s \neq t$.  Let $u$ and $v$ be ancestors of $s$ and~$t$ 
  respectively, such that $u,v \in V(H)$ (possibly $u = v$, or
  $uv\in E(G)$).

  Recall that all layers are viewed as oriented from left to right. We
  suppose that $s$ and $t$ appear in this order, from left to right,
  along $P_l$.
  
  \vspace{1 ex}
  
  \begin{claim} \label{cl:not-all-in-P}
    For every vertex $p \in V(P_{l-1})$, $N(p) \cap V({P_l}) \not\seq V(P)$.
  \end{claim}
  
  \bpc 
  Suppose that $p \in V(P_{l-1})$ and
  $N(p) \cap V({P_l}) \seq V(P)$.  So, $p\notin V(H)$.  Note that $p$
  is an internal vertex of $P_{l-1}$, for otherwise, $s$ or $t$ is an
  end of $P_l$ and has degree~2, while having two neighbors in
  $V(H) \cap V(G_{l, k}\sm p)$, a contradiction.

  By~\ref{axi:B5}, ancestors of $p$ (if any) and the neighbors of $p$
  in $P_{l-1}$ must also have neighbors in $P$. Thus, all of such
  vertices do not belong to $H$ because $P$ is a subpath of~$H$. By
  Lemma~\ref{lem:length-of-boxes}, the path $\myBox_p = p'\dots p''$
  has an even length.  Indeed $\myBox_p$ consists of two shared parts
  (each of odd length) and one private part (of even length).  It
  yields that $\myBox_p$ and $p'pp''$ have the same parity, and hence
  replacing $\myBox_p$ in $H$ with $p'pp''$ yields an even hole with
  length strictly less than the length of $H$, a contradiction to the
  minimality of $H$. 
  \epc
  
  \vspace{1ex}
  
  \begin{claim} \label{cl:where-is-u,v}
    Exactly one of $u$ and $v$ is in $P_{l-1}$.
  \end{claim}
  
  \bpc 
  Suppose that both $u$ and $v$ are not in $P_{l-1}$. Since $u$
  and $v$ have neighbors in $P$, each of them has a neighbor in
  $P_{l-1}$ (where such neighbors also have some neighbor in $P$).  Let
  $u'$ and $v'$ be the respective neighbors of $u$ and $v$ in $P_{l-1}$.
  
  If $u' = v'$, then $u'$ is a type~2 vertex in~$P_{l-1}$, with ancestors 
  $u$ and $v$, hence $u$ and $v$ are adjacent. 
  So, $H$ is a hole of form $usPtvu$, and it has odd length by construction.
  Therefore $u' \neq v'$, and by construction, the interior of
  $u'P_{l-1}v'$ must contain a vertex $w$ of type~0. It yields that
  $N_{P_l}(w)$ is all contained in $P$, a contradiction to
  Claim~\ref{cl:not-all-in-P}.

  Suppose now that both $u$ and $v$ are in $P_{l-1}$.  By
  Claim~\ref{cl:not-all-in-P}, no vertex of $P_{l-1}$ has all its
  neighbors in $P$.  So the interior of $uP_{l-1}v$ contains at most
  two vertices.

  If $u=v$, then by~\ref{axi:B7} $P$ is of odd length, and since
  $V(H) = \{u\}\cup V(P)$, $H$ is also of odd length, a contradiction.
  Similarly if $uv\in E(G)$, then by~\ref{axi:B7}, $P$ is of even length,
  $V(H) = \{u, v\}\cup V(P)$, and $H$ has odd length, again a contradiction.
  
  If the interior of $uP_{l-1}v$ contains a single vertex, then let
  $w$ be this vertex.  Let $w_1$ (resp.\ $w_2$) be the neighbor of $w$
  in $P$ that is closest to $s$ (resp.\ $t$).  Note that by
  \ref{axi:B5}, $s=w_1, t=w_2$ because both $u$ and $v$ are adjacent to
  $w$ in $P_{l-1}$. So, $sPt$ is the private part of~$\myBox_w$, and
  by Lemma~\ref{lem:length-of-boxes}, it has even length, as
  $uwv$. Moreover, by Claim~\ref{cl:intEdge},
  $\{s\} = V(E_{u, w}) \cap V(H)$ and $\{t\} = V(E_{w, v}) \cap V(H)$.
  Also, if $w$ has an ancestor, then such an ancestor must have
  neighbors in $P$, and hence it does not belong to $H$.  Altogether,
  we see that $N_H(w)\subseteq V(usPtv)$.  So, replacing $usPtv$ in
  $H$ with $uwv$ returns an even hole with length strictly less than
  the length of $H$, a contradiction to the minimality of $H$.

  So the interior of $uP_{l-1}v$ contains two vertices.  We let
  $uP_{l-1}v = uww'v$, and $w_1$ (resp.\ $w'_2$) be the neighbor of
  $w$ (resp.\ $w'$) in $P$ that is closest to $s$ (resp.\
  $t$). By~\ref{axi:B5}, $s=w_1, t=w'_2$.  So, $sPt$ is edgewise
  partitioned into the private part of $w$, the part shared between
  $w$ and $w'$, and the private part of~$w'$.  By
  Lemma~\ref{lem:length-of-boxes}, $sPt$ has therefore odd length. In
  particular, the length of $usPtv$ has the same parity as the length
  of $uww'v$.  Moreover, by Claim~\ref{cl:intEdge},
  $\{s\} = V(E_{u, w}) \cap V(H)$ and
  $\{t\} = V(E_{w', v}) \cap V(H)$.  Also, if $w$ or $w'$ has an
  ancestor, then such an ancestor must have neighbors in $P$, and
  hence it does not belong to $H$.  Altogether, we see that
  $N_H(\{w, w'\})\subseteq V(usPtv)$.  So, replacing $usPtv$ in $H$
  with $uww'v$ returns an even hole that is shorter than $H$, again a
  contradiction to the minimality of $H$.  
  \epc

\vspace{1 ex}

By Claim~\ref{cl:where-is-u,v} and up to symmetry, we may assume that
$u \in V(P_{l-1})$ and $v \notin V(P_{l-1})$.  So, $v$ has a neighbor
$v'$ in $P_{l-1}$ such that $t\in \myBox_{v'}$. Note that $v' \notin H$, 
because by construction $v'$ has some neighbor in~$P$.  Hence, $v' \neq u$
(because $u \in H$). If the path $uP_{l-1}v'$ 
has length at least three, then some vertex in the interior of
$uP_{l-1}v'$ contradicts Claim~\ref{cl:not-all-in-P}.

If $uP_{l-1}v'$ has length two, so $uP_{l-1}v' = uwv'$ for some vertex
$w \in V(P_{l-1})$, then $w$ is of type~0 because $v'$ is not of
type~0.  Hence, $P$ is edgewise partitioned into the private part of
$w$, the part of $\myBox_{v'}$ shared between $w$ and $v'$ and the
left escape of $v$ in $\myBox_{v'}$.  Let $w'$ be the rightmost vertex
of the shared zone $E_{w, v'}$. By Lemma~\ref{lem:length-of-boxes},
$usPw'$ has even length, as $uww'$.  Moreover, by
Claim~\ref{cl:intEdge}, $\{s\} = V(E_{u, w}) \cap V(H)$ and since $w$
has type~0, we see that $N_H(w)\subseteq V(usPw')$.  So, replacing
$usPw'$ in $H$ with $uww'$ returns an even hole with length strictly
less than the length of $H$, a contradiction to the minimality of $H$.
  
Hence, $uP_{l-1}v'$ has length one: $uP_{l-1}v' = uv'$. So, $P$ is the
left escape of $v$ in $\myBox_{v'}$.  By
Lemma~\ref{lem:length-of-boxes}, $P$ has even length.  By
Claim~\ref{cl:intEdge}, $\{s\} = V(E_{u, v'}) \cap V(H)$. Recall that
$v'\notin H$.  If $N_H(v') \subseteq V(usPtv)$, then replacing $usPtv$
in $H$ with $uv'v$ returns an even hole with length strictly less than
the length of $H$, a contradiction to the minimality of $H$.

So, $v'$ has neighbors in $H$ that are not in $usPtv$.  Note that if
$v'$ is of type~2, the ancestor of $v'$ that is different from $v$ is
not in $H$ (because it is adjacent to $t$ and to $v$).  Also, by
Claim~\ref{cl:intEdge}, the neighbors of $v'$ in $E_{u, v'}\sm s$ are not
in $H$. 

We denote by $v''$ the right neighbor of $v'$ in $P_{l-1}$. Note that
$v''$ has type~0, since $v'$ has type~1 or~2.  Let $s'$ and $t'$ be
vertices such that $t'P_ls'$ is the right escape of $v$ in
$\myBox_{v'}$, $t'$ is adjacent to $v$, and $s'$ is adjacent to $v''$.
Note that $s'$ is the leftmost vertex of $E_{v', v''}$ and $t'$ is the
rightmost vertex of the zone $O_{v',v}$ (when $v'$ is of type~1)
or of the rightmost zone $E_{v,w}$ (when $v'$ is of type~2, and $w$ is the other ancestor of~$v'$).

Let us see which vertex can be a neighbor of $v'$ in $H\sm usPtv$. We
already know it cannot be an ancestor of $v'$ or be in
$E_{u, v'}\sm s$.  Suppose it is~$v''$. Then, $H$ must contain two
edges incident to $v''$, and none of them can be an internal edge by
Claim~\ref{cl:intEdge}.  Note that $s'v''$ must be an edge of $H$, for
otherwise, the two only available edges are $v''v'''$ and $v''s''$
(where $v'''$ is the right neighbor of $v''$ in $P_{l-1}$ and $s''$ is
the rightmost neighbor of $v''$ in $P_l$), and this yields a contradiction
because $v'''s''\in E(G)$.  Since $s'v''\in E(H)$, $H$ goes through
the path $R = usPtvt'P_ls'v''$. This path has even length, and contains
all vertices of $N_H(v')$.  So, we may replace $R$ by $uv'v''$ in $H$,
to obtain an even hole that contradicts the minimality of $H$. Now we
know that $v''\notin V(H)$.

Since $v'$ has a neighbor in $H\sm usPtv$, and since this neighbor is
not an ancestor of $v'$, is not $v''$, and is not in $E_{u, v'}$, it must be
in $\myBox_{v'}\sm ( V(P) \cup E_{u, v'} )$.  
By Claim~\ref{cl:intEdge}, the only way that $H$ can contain some vertex of 
$\myBox_{v'} \sm ( V(P) \cup E_{u, v'})$ is if $H$ goes through the edge $vt'$, 
in particular through the right escape of $v$ in $\myBox_{v'}$.
Let $t''$ be the rightmost vertex of $E_{v', v''}$. 
Hence, $H$ must go through the path $S = usPtvt'P_lt''$ (see Figure~\ref{fig:proof-Th3.10}).
This path has even length, and contains
all vertices of $N_H(v')$.  So, we may replace $S$ by $uv't''$ in $H$,
to obtain an even hole that contradicts the minimality of $H$. 

\begin{figure}[ht]
	\centering \includegraphics[width=10cm]{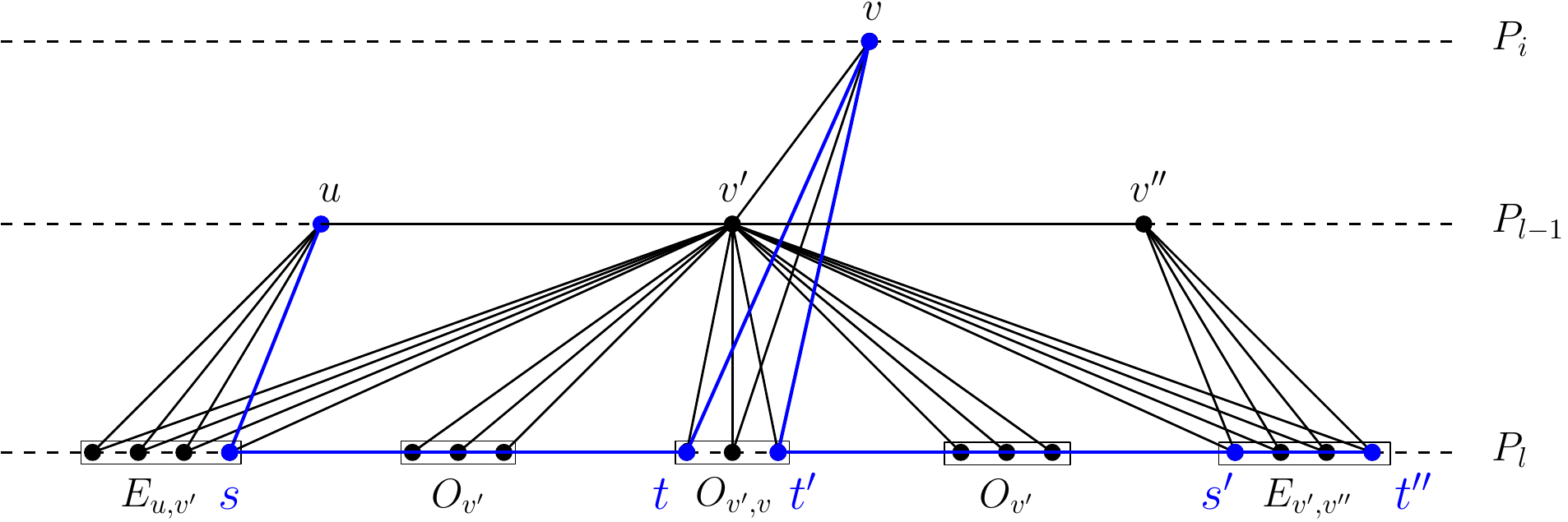}
	\vspace{1cm}
	
	\centering \includegraphics[width=10cm]{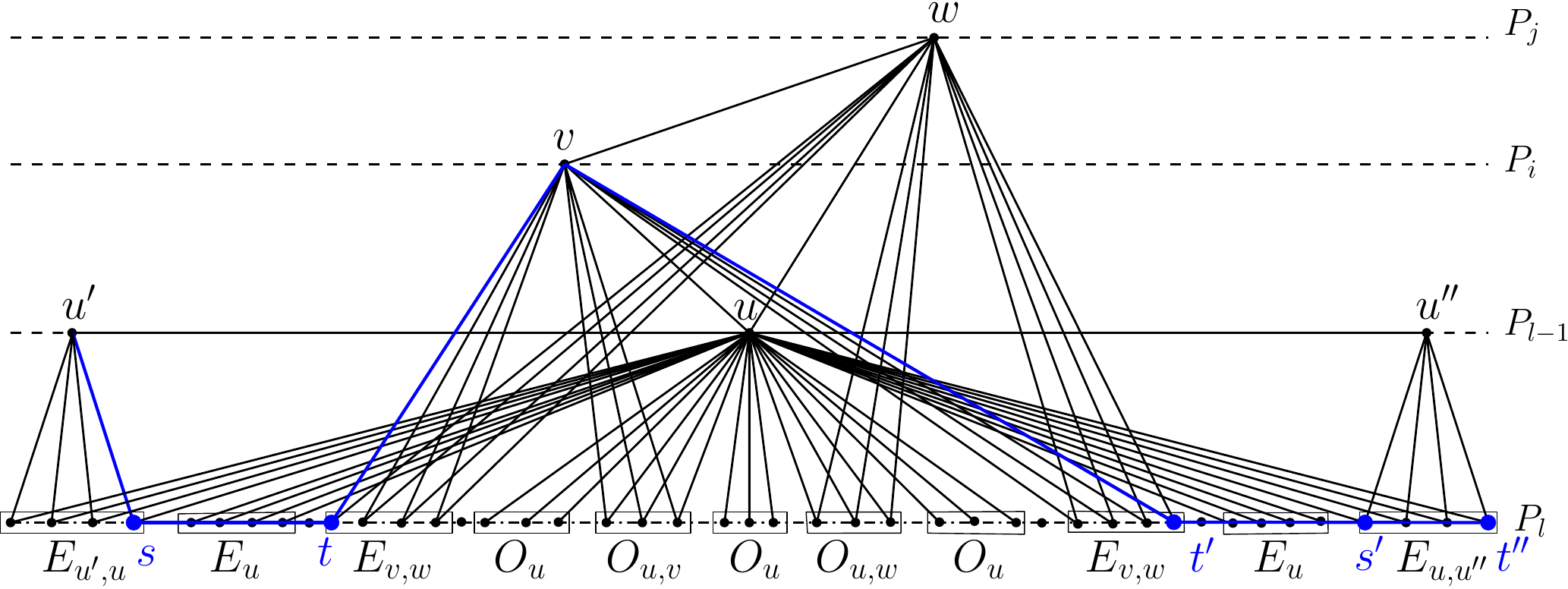}
	\caption{The proof of Theorem~\ref{th:layeredWheel-EHF}: 
	in blue is the path $S = usPtvt'P_lt''$, when $v'$ is of type~1 (top) and when $v'$ is of type~2 (bottom)}
	\label{fig:proof-Th3.10}
\end{figure}
\end{proof}

\medskip


Let us now prove that every ehf-layered-wheel is pyramid-free.

\begin{theorem}
\label{th:layeredWheel-EHF_noPyramid}
For every integer $l \geq 1$, $k \geq 4$, every
$(l, k)$-ehf-layered-wheel $G_{l,k}$ is pyramid-free.
\end{theorem}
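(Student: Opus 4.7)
The plan is to mimic the strategy used in the proof of Theorem~\ref{th:layeredWheel-EHF}. Suppose for contradiction that some $(l,k)$-ehf-layered-wheel contains a pyramid, and take $l$ minimal with this property. Inside $G_{l,k}$, let $\Sigma$ be a pyramid with the fewest vertices, with apex $a$, base triangle $b_1b_2b_3$, and paths $Q_i = a \dots b_i$ for $i \in \{1,2,3\}$. The first observation I would record is structural: every triangle in $G_{l,k}$ has the form $\{u,v,w\}$ where $u$ is type~2 and $v,w$ are its two adjacent ancestors. This follows from Construction~\ref{cons:G2(l,k)} because all edges of $G_{l,k}$ are either along some layer-path $P_i$ or go from a vertex to one of its ancestors, so any triangle must use two ancestor edges. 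Hence, up to relabelling, I may assume $b_1$ is the type~2 vertex and $b_2,b_3$ are its ancestors. Second, by Theorem~\ref{th:layeredWheel-EHF}, $G_{l,k}$ is even-hole-free, so each of the three holes $a Q_i b_i b_j Q_j a$ of $\Sigma$ is odd; this forces $Q_1,Q_2,Q_3$ to pairwise have the same parity.

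By minimality of $l$, some vertex of $\Sigma$ lies in $P_l$. I would split into two cases depending on whether $b_1 \in V(P_l)$. In the first case, $b_1 \in P_l$ sits inside a zone $E_{b_2,b_3}$ or $O_{b_2,b_3}$ of some box $\myBox_u$, and $\Sigma$ uses the two triangle edges $b_1b_2, b_1b_3$ plus exactly one of the two $P_l$-edges at $b_1$, which launches $Q_1$. I then take the longest prefix $b_1\dots s$ of $Q_1$ contained in $P_l$ and let $s'$ be the next vertex of $Q_1$, so $s'$ is an ancestor of $s$ lying outside $P_l$. Using the analogues of Claim~\ref{cl:intEdge}, the parity information in Lemma~\ref{lem:length-of-boxes} (shared parts odd, private parts and escapes even), and the parity restriction on $Q_1,Q_2,Q_3$, I aim to substitute a strictly shorter ancestor-detour for the prefix $b_1\dots s s'$, producing a pyramid on the same base with fewer vertices, contradicting the minimality of $\Sigma$.

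In the second case, $b_1 \notin V(P_l)$, so $V(\Sigma)\cap V(P_l)$ consists only of internal vertices of the $Q_i$'s. I would then pick a maximal subpath $P = s\dots t$ of $\Sigma$ lying in $P_l$ and let $u,v$ be the ancestors of $s,t$ that belong to $\Sigma$. Arguments parallel to Claims~\ref{cl:not-all-in-P}, \ref{cl:where-is-u,v} and~\ref{cl:intEdge} apply almost verbatim; in each subcase (whether $u = v$, $uv \in E$, $uP_{l-1}v$ has length $2$ or $3$, or exactly one of $u,v$ lies in $P_{l-1}$), the same shortcut and parity arguments as in the proof of Theorem~\ref{th:layeredWheel-EHF} produce either a strictly shorter pyramid on the same triangle (contradicting minimality of $\Sigma$) or an even hole in $G_{l,k}$ (contradicting Theorem~\ref{th:layeredWheel-EHF}).

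The main obstacle is the triangle $b_1b_2b_3$ itself: it is the feature that distinguishes pyramids from thetas, and any shortcut I perform must leave the triangle intact and must not introduce a chord or a forbidden edge between two of the $Q_i$'s. The most delicate bookkeeping is in Case~1, when the type-2 vertex $b_1$ lies in $P_l$ and the chosen shortcut through an ancestor risks colliding with $b_2$ or $b_3$, or risks merging two of the $Q_i$'s. Once these adjacency constraints are handled via the ``no internal edge'' phenomenon of Claim~\ref{cl:intEdge}, the parity calculations of Lemma~\ref{lem:length-of-boxes} do the rest of the work, exactly as in the even-hole proof.
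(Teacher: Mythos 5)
Your overall plan (minimal $l$, a minimum-vertex pyramid, maximal subpaths of the pyramid inside $P_l$, shortcuts through ancestors) is the right one, and your Case~2 does track the second half of the paper's argument. But there are two genuine gaps. First, your dichotomy on whether $b_1\in V(P_l)$ does not cover the apex: if $b_1\notin V(P_l)$ the apex $a$ may still lie in $P_l$, so your assertion in Case~2 that $V(\Sigma)\cap V(P_l)$ consists only of degree-2 internal vertices of the $Q_i$'s is unjustified. The paper devotes a separate claim to proving that the apex is not in $P_l$ (a vertex of $P_l$ of degree~3 in the pyramid with two ancestors in the pyramid would create a triangle at the apex; with one ancestor $y$ in the pyramid, the maximal subpath of $P_l$ in the pyramid through it closes up with $y$ into a cycle with a unique chord and no triangle, or into a hole whose only candidate apex has its two hole-neighbors non-adjacent), and that claim is then invoked repeatedly in everything that follows. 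Without it, neither your Case~1 nor your Case~2 goes through.

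Second, Case~1 ($b_1\in V(P_l)$) is where essentially all of the difficulty of this theorem lives, and ``substitute a strictly shorter ancestor-detour \dots the parity calculations do the rest'' does not engage with it. The paper's treatment of this case is not parity-based at all: it first shows a triangle vertex of $P_l$ cannot lie in the interior of a zone (else the pyramid contains a diamond), then rules out each possible zone end for $b_1$ by exhibiting a hole of the pyramid through one ancestor that fails to contain the apex, using the facts that every hole of a pyramid contains the apex and two triangle vertices and that the apex is not in $P_l$. Only in the final subcase (when $b_1$ is the extreme vertex of an extreme $E_{v,w}$ zone) is a shortening substitution performed, and there it is valid only if $u$ has no further neighbor in the pyramid; excluding such a neighbor requires an additional careful analysis along the right escape and the zone $E_{u,u''}$. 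Your parity observation (all three $Q_i$ have the same parity, by Theorem~\ref{th:layeredWheel-EHF}) is correct but immaterial to the real obstruction, which is verifying that a shortcut introduces no chords or forbidden adjacencies among the three paths and the triangle; that is precisely what the missing structural claims are for.
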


\begin{proof}
  \setcounter{claim}{0}
  
  Recall that all layers are viewed as oriented from left to right.
  For a contradiction, suppose that an ehf-layered-wheel $G_{l, k}$
  contains a pyramid $\Pi = 3PC(\tri,x)$. (Here we denote by 
  $\Delta$ the triangle of $\Pi$, and call the apex the only vertex
  of degree~3 in $\Pi \sm \Delta$ which in this case is the vertex~$x$.)
  Suppose that $l$ is minimal,
  and under this assumption that $\Pi$ contains the minimum number of
  vertices among all pyramids in $G_{l,k}$.  Clearly $l \geq 3$, and
  layer $P_l$ contains some vertex of $\Pi$, for otherwise
  $G_{l, k}[P_0 \cup \cdots \cup P_{l-1}]$ would be a counterexample.
  
  The next claim is trivially correct, so we omit the proof.
  
  \begin{claim}\label{cl:hole-inPyramid}
    Any hole in $\Pi$ contains the apex and two vertices of $\tri$.
  \end{claim}

  \begin{claim}{\label{cl:DeltaInZone}}
    If a vertex of $\tri$ is in $P_l$, then it is not in the
    interior of some zone.
  \end{claim}

  \bpc 
  Suppose that some vertex $a$ of $\tri$ is in $P_l$ and is in
  the interior of some zone $Z$. Then $a$ is of type~2. If $Z= E_{u', u}$ for some
  $u', u\in P_{l-1}$, then $\tri = auu'$, and we see that the left
  or the right neighbor of $a$ in $P_l$ is in $\Pi$. Let
  $Q = a \dots b$ be the subpath of $P_l$ that contains $a$, that is
  included in $\Pi$, and that is maximal with respect to these
  properties.  We see that $b$ is adjacent to $u$ and $u'$, so that
  $\Pi$ contains a diamond, a contradiction.  The proof is the same
  for all other kinds of zones (namely $E_{u, u''}$, $E_{v,w}$, $O_{u, v}$,
  or $O_{u, w}$). 
   \epc

  \begin{claim}{\label{cl:apexInPl}}
    The apex $x$ is not in $P_l$.
  \end{claim}

  \bpc Let us see that $x\in P_l$ yields a contradiction.  Since $x$
  has degree~3 in $\Pi$, it is a vertex of type~1 or~2, so it belongs
  to some zone.

  Suppose first that $x$ is in the interior of some zone $Z$.  If
  $Z = E_{u, u'}$ for some $u, u'\in P_{l-1}$, then since $x$ has
  degree~3 in $\Pi$ and is not in a triangle of $\Pi$, we see that the
  two neighbors of $x$ in $P_l$ are in $\Pi$. Also, exactly one
  ancestor $y$ of $x$ must be in $\Pi$.  Let $Q$ be the subpath of
  $P_l$ that contains $x$, that is included in $\Pi$, and that is
  maximal with respect to these properties.  We see that the ends of
  $Q$ are adjacent to $y$, so that $Q$ and $y$ form a cycle with a
  unique chord in a pyramid, while not containing a triangle, a
  contradiction. When $Z$ is another zone, say $E_{u, u''}$, $O_u$,
  $O_{u, v}$, etc, the proof is exactly the same.

  Suppose now that $x$ is an end of some zone $Z$. Again, the two
  neighbors of $x$ in $P_l$ and an ancestor $u$ of $x$ are in $\Pi$.
  So, $\Pi$ contains the path $Q$ from $x$ to the vertex $y$ with
  ancestor $u$ that is next to $x$ along $Z$.  Note that $y$ is in the
  interior of $Z$.  So, $Q$ and $u$ form a hole of $\Pi$.  Apart from
  $x$, $y$, and $u$, every vertex of $H$ has degree~2, so $uy$ is an
  edge of $\tri$, a contradiction to Claim~\ref{cl:DeltaInZone}.  \epc

  \begin{claim}
    \label{cl:notInBox}
    If $u\in P_{l-1}$ has type~0 or~1 and is in $\Pi$, then no internal
    vertex of $\myBox_u$ is in $\Pi$.
  \end{claim}

  \bpc Suppose $a\in \Pi$ is an internal vertex of $\myBox_u$.  Let
  $Q$ be the subpath of $P_l$ that contains $a$, is included in $\Pi$,
  and maximal with respect to this property. Since $a$ is an internal
  vertex of $\myBox_u$ and $u$ has type~0 or~1, $Q$ and $u$ form a
  hole $H$, that must contain the apex. Since by
  Claim~\ref{cl:apexInPl}, the apex is not in $P_l$, it must be $u$,
  and since every internal vertex of $Q$ has degree 2, the two
  neighbors of $u$ in $H$ are in $\tri$, a contradiction since they
  are non-adjacent.
 
  \epc


  
  \begin{claim}{\label{cl:triPl}}
    No vertex of $\tri$ is in $P_l$.
  \end{claim}

  \bpc Suppose for a contradiction that $a$ is a vertex of $\tri$ in
  $P_l$.  So $a$ has type~2, and in particular, it is not an end of
  $P_l$. As every internal vertex of $P_l$, $a$ is in the
  interior of some box $\myBox_u$.   If $u$ is of type~0 or~1, it must
  be part of $\tri$, so $a$ contradicts Claim~\ref{cl:notInBox}.  Hence,
  $u$ is of type~2.  

  We denote by $P=a\dots p$ a subpath of $\Pi$ included in $\myBox_u$
  and maximal with this property. We will now analyze every possible 
  zone where $a$ belongs to, and we will see that each of the cases
  yields a contradiction.
  
  Suppose first that $a$ is in a shared zone $Z$. If $Z=E_{u', u}$,
  and therefore, $\tri = au'u$, then by Claim~\ref{cl:DeltaInZone},
  $a$ is the rightmost vertex of $E_{u', u}$ (since it is in the
  interior of $\myBox_u$).  Since $u'$ is of type 0 (because $u$ is of
  type 2), Claim~\ref{cl:notInBox} applied to $\myBox_{u'}$ implies
  that $a$ is the only vertex of $\Pi$ in $E_{u', u}$, so $p$
  must be the leftmost vertex of the zone $E_u$ that is next to
  $E_{u', u}$.  So $P$ and $u$ form a hole $H$ of $\Pi$, and since $u$
  is in $\tri$, the apex $x$ must be in $P_l$, a contradiction to
  Claim~\ref{cl:apexInPl}. The proof is similar when $Z= E_{u, u''}$. 

  If $Z=O_{u, v}$, then $\tri=auv$, and by Claim~\ref{cl:DeltaInZone},
  $a$ is either the leftmost or the rightmost vertex of $O_{u, v}$.  If $a$
  is the leftmost vertex of $O_{u, v}$, then $p$ is either the
  rightmost vertex of the zone $O_u$ (that is on the left of $O_{u,v}$) 
  or $p$ is the vertex of type~2 next to $a$
  along $O_{u, v}$. In either case, $P$ and $u$ form a hole $H$ of $\Pi$, and since
  $u$ is in the triangle of $\Pi$, the apex $x$ must be in $P_l$, a contradiction
  to Claim~\ref{cl:apexInPl}.  If $a$ is the rightmost vertex of
  $O_{u, v}$, the proof is similar.
  By symmetry, the case when $Z= O_{u, w}$ yields the same contradiction.
  
  When $a$ is the rightmost vertex of the leftmost zone $E_{v, w}$ (that is between
  $E_u$ and $O_u$ when oriented from left), we have $\Delta = avw$
  and so $u \notin \Pi$. The proof is again
  the same, with a hole $H$ that goes through $v$. The case when $a$ 
  is the leftmost vertex of the rightmost zone $E_{v,w}$ (that is between
  $O_u$ and $E_u$ when oriented from left) can be done in the similar way.

  We are left with the case when $a$ is the leftmost vertex of the
  leftmost zone $E_{v, w}$, or the rightmost vertex of the rightmost zone $E_{v, w}$.
  These two cases are symmetric, so we may assume that
  $a$ is the leftmost vertex of the leftmost zone $E_{v, w}$.
  
  It then follows that $\tri= avw$.  
  Note that $u \notin \Pi$ because a pyramid has only
  one triangle.  If $P$ goes in the interior of the zone $E_{v, w}$,
  then $\Pi$ contains a diamond, a contradiction. So, $P$ goes through
  the zone $E_{u}$ that is left to $E_{v, w}$ and contains the rightmost 
  vertex of $E_{u', u}$. Furthermore, there are two cases: $P$ contains
  the zone $E_{u', u}$ (so $p$ is the leftmost vertex of $E_{u', u}$
  and $u'\notin \Pi$), or $P$ contains only the rightmost vertex or
  $E_{u', u}$ (so $p$ is the rightmost vertex of $E_{u', u}$ and
  $u'\in \Pi$).  In the first case, we remove $P \sm p$ from $\Pi$ and put instead
  the edge $up$; in the second case, we remove $P$ from $\Pi$ completely
  and put the edge $uu'$.
    We obtain a pyramid (with triangle $uvw$)
  that is of smaller size than $\Pi$ --- a contradiction, unless $u$ has some
  neighbor in $\Pi \sm (P \cup \{u',v,w\})$. 
  Hence, we now suppose such a neighbor $z$ exists. 

  Let $q$ be the leftmost vertex of the leftmost zone $O_u$ (that is first met when traversing the 
  layer from left to right), and $r$ be the
  rightmost vertex of $E_{u, u''}$.  Consider the path $Q = q P_l r$. 
  Observe that $z$ is in $Q\cup \{u''\}$ because $Q\cup \{u''\}$
  contains all possible neighbors of $u$ in $\Pi\sm (P \cup \{u',v,w\})$.

  Suppose that some vertex of $Q$ is in $\Pi$.  Let $z'$ be the vertex
  of $\Pi$ in $Q$ that is the closest to $q$ along $Q$.  Note that by
  Claim~\ref{cl:apexInPl}, $z'$ has degree~2 in $\Pi$.  Since $z'$ is the
  closest vertex to $q$, it has a neighbor in $\Pi \sm Q$. In
  particular, $z'$ is a type~1 or type~2 vertex, and exactly one of its
  ancestor is in $\Pi$.  Since $u \notin \Pi$, such an ancestor is $v$
  or $w$, or possibly $u''$ if $u''\in \Pi$ (and only one of them).
  If $z' \in O_{u,v}$, or $z' \in O_{u,w}$, or $z' \in E_{v,w}$, 
  then there exists a vertex
  $z'' \in Q$ such that $vz'P_lz''v$, or $wz'P_lz''w$, or $vz'P_lz''wv$
  is a hole of $\Pi$, which in either case contradicts
  Claim~\ref{cl:apexInPl}.  So $z' \in E_{u,u''}$ and the ancestor of
  $z'$ in $\Pi$ must be $u''$ (in particular $u''\in \Pi$). But then,
  the right neighbor of $z'$ in $P_l$ is an internal vertex of
  $\myBox_{u''}$ that belongs to $\Pi$, a contradiction to
  Claim~\ref{cl:notInBox}.  Therefore, $\Pi \cap Q = \emptyset$.
  
  This means that $z = u''$.  Note that the neighbors of $u''$ in $\Pi$
  cannot contain $u$ (because $u\notin\Pi$), cannot be in $E_{u, u''}$
  (because $E_{u, u''}$ is subpath of $Q$), cannot be in the interior
  of $\myBox_{u''}$ (because $u''$ has type~0 and by
  Claim~\ref{cl:notInBox}), so they are precisely the right neighbor
  $u'''$ of $u''$ in $P_{l-1}$ and the rightmost vertex $b$ of
  $E_{u'', u'''}$.  But then, $u''u'''b$ is a triangle in $\Pi$, a
  contradiction. \epc

%



  \vspace{1ex} 
  
  The rest of the proof is quite similar to the proof of
  Theorem~\ref{th:layeredWheel-EHF}, that ehf-layered-wheel contains no even hole.

  Let $P = s \dots t$ be a subpath of $\Pi$ in $P_l$ such that $P$ is
  inclusion-wise maximal (and $s,t$ appear in this order from left to
  right).  By Claims~\ref{cl:apexInPl} and~\ref{cl:triPl}, every
  vertex of $P$ has degree~2 in $\Pi$.  Moreover by the maximality of
  $P$, each of $s$ and $t$ has an ancestor which is also in $\Pi$.
  Note that $s \neq t$, for otherwise $s$ would be of type~2, and
  together with its ancestors, it forms a triangle, which contradicts
  Claim~\ref{cl:triPl}.  Let $u$ and $v$ be the ancestors
  of $s$ and $t$ respectively, such that $u,v \in V(\Pi)$.  By
  Claims~\ref{cl:hole-inPyramid},~\ref{cl:apexInPl},
  and~\ref{cl:triPl}, $u \neq v$ and $uv \notin E(G)$.

  \vspace{1 ex}
  
  \begin{claim}
    \label{cl:not-all-in-P-pyramid}
    For every vertex $p \in V(P_{l-1})$, $N(p) \cap V({P_l}) \not\seq V(P)$.
  \end{claim}
  
  \bpc 
    Suppose that $p \in V(P_{l-1})$ and
    $N(p) \cap V({P_l}) \seq V(P)$.  So, $p\notin V(\Pi)$.  Note that $p$
    is an internal vertex of $P_{l-1}$, for otherwise, $s$ or $t$ is an
    end of $P_l$ and has degree~2, while having two neighbors in
    $V(\Pi) \cap V(G_{l, k}\sm p)$, a contradiction.

    By~\ref{axi:B5}, ancestors of $p$ (if any) and the neighbors of $p$
    in $P_{l-1}$ must also have neighbors in $P$. Thus, all of such
    vertices do not belong to $\Pi$ because $P$ is a subpath of~$\Pi$. Hence,
    replacing $\myBox_p = p' \dots p''$ in $\Pi$ with $p'pp''$ yields a 
    pyramid with strictly less vertices than $\Pi$, a contradiction to the
    minimality of~$\Pi$. 
  \epc

\vspace{1ex}
  
Let $a$ be a vertex in $P_{i}$ for some $0\leq i<l$, and $p$ be a
neighbor of $a$ in $P_{l}$.  We say that $ap$ is an \emph{internal
  edge} if one of the following holds:
  
  \begin{itemize}
    \item $i=l-1$ and $p$ is an internal vertex of $\myBox_a$.
    \item $i<l-1$, $a$ is an ancestor of some $a'\in V(P_{l-1})$, $a'$ has
    type~1 or~2, $p$ is in $\myBox_{a'}$ and $p$ is neither the
    leftmost neighbor of $a$ in $\myBox_{a'}$ nor the rightmost
    neighbor of $a$ in $\myBox_{a'}$.
  \end{itemize}
  
  \begin{claim}{\label{cl:intEdge-pyramid}}
    No internal edge is an edge of $\Pi$. 
  \end{claim}
  
  \bpc Suppose that $p\in P_l$ is the end of an internal edge $e$ that
  is also an edge of $\Pi$.  If the other end of $e$ is in $P_{l-1}$,
  we set $e=pu$ and observe that $p$ is in the interior of $\myBox_u$.
  Otherwise, the other end of $e$ is in $P_i$, with $i<l-1$, we set
  $e=px$ and observe that $x$ has a neighbor $u$ in $P_{l-1}$. Again,
  $p$ is an internal vertex of $\myBox_u$.  Observe that $x$ is either
  $v$ or $w$ as represented on
  Figure~\ref{fig:ehf-layered-wheel}.

  By Claims~\ref{cl:apexInPl}
  and~\ref{cl:triPl}, $p$ has degree~2 in $\Pi$, so $p$ has a unique
  neighbor in $\Pi \cap P_l$.    Let $P = p\dots p'$ be the subpath of
  $P_l$ included in $\Pi$, containing $p$, and maximal with
  respect to this property. 

  It can be checked in Figure~\ref{fig:ehf-layered-wheel} that
  $P$ together with $u$, $v$, $w$, $uv$, $uw$, or~$vw$ forms a hole,
  that contains the apex and two vertices of $\tri$ (by
  Claim~\ref{cl:hole-inPyramid}), a contradiction to
  Claims~\ref{cl:apexInPl} and~\ref{cl:triPl}.  \epc

\vspace{1 ex}
  
  \begin{claim}
    \label{cl:where-is-u,vPyr}
    Exactly one of $u$ and $v$ is in $P_{l-1}$. 
  \end{claim}
  
  \bpc Suppose that both $u$ and $v$ are not in $P_{l-1}$. Since $u$
  and $v$ have neighbors in $P$, each of them has a neighbor $u'$ and
  $v'$ respectively in $P_{l-1}$, such that $s \in \myBox_{u'}$ and
  $t \in \myBox_{v'}$.  

  If $u' = v'$, then $u'$ is a type~2 vertex in~$P_{l-1}$, with ancestors 
  $u$ and $v$, hence $u$ and $v$ are adjacent.  
  It then follows that
  $usPtvu$ is a hole of~$\Sigma$, so it must contains the apex 
  and two vertices of~$\Delta$, contradicting Claim~\ref{cl:apexInPl} or
  Claim~\ref{cl:triPl}, since $u$ and $v$ are the only vertices of the hole
  that are not in~$P_l$.  
  
  Since $u'$ and $v'$ are vertices with
  ancestors, by construction, the interior of $u'P_{l-1}v'$ contains a
  vertex $w$ of type~0. It yields that $N_{P_l}(w)$ is all contained
  in $P$, a contradiction to Claim~\ref{cl:not-all-in-P-pyramid}.

  Suppose now that both $u$ and $v$ are in $P_{l-1}$.  By
  Claim~\ref{cl:not-all-in-P-pyramid}, no vertex of $P_{l-1}$ has all its
  neighbors in $P$.  So the interior of $uP_{l-1}v$ contains at most
  two vertices.

  If $u=v$, then $usPtu$ is a hole of $\Pi$. Since $u$ is 
  the only vertex in the hole that is not in $P_l$,
  by Claim~\ref{cl:hole-inPyramid}
  $P$ contains the apex or a vertex of $\tri$, a contradiction to
  Claims~\ref{cl:apexInPl} or~\ref{cl:triPl}.
  Similarly if $uv\in E(G)$, then $usPtvu$ is a hole of $\Pi$,
  this again yields a contradiction.

  If the interior of $uP_{l-1}v$ contains a single vertex, then let
  $w$ be this vertex.  Let $w_1$ (resp.\ $w_2$) be the neighbor of
  $w$ in $P$ that is closest to $s$ (resp.\ $t$).  It follows by
  construction, that $s=w_1, t=w_2$ (because both $u$ and $v$ are
  adjacent to $w$ in $P_{l-1}$). By Claim~\ref{cl:intEdge-pyramid},
  $\{s\} = V(E_{u, w}) \cap V(\Pi)$ and
  $\{t\} = V(E_{w, v}) \cap V(\Pi)$.  Also, if $w$ has an ancestor,
  then such an ancestor must have neighbors in $P$, and hence it does
  not belong to $\Pi$.  Altogether, we see that
  $N_\Pi(w)\subseteq V(usPtv)$.  So, replacing $usPtv$ in $\Pi$ with
  $uwv$ returns a pyramid with less vertices than $\Pi$, a
  contradiction to the minimality of $\Pi$.

  So the interior of $uP_{l-1}v$ contains two vertices.  We let
  $uP_{l-1}v = uww'v$, and $w_1$ (resp.\ $w'_2$) be the neighbor of
  $w$ (resp.\ $w'$) in $P$ that is closest to $s$ (resp.~$t$).
  Similar as in the previous case, we know that $s=w_1, t=w'_2$; and by
  Claim~\ref{cl:intEdge-pyramid}, $\{s\} = V(E_{u, w}) \cap V(\Pi)$
  and $\{t\} = V(E_{w', v}) \cap V(\Pi)$.  Also, if $w$ or $w'$ has an
  ancestor, then such an ancestor must have neighbors in $P$, and
  hence it does not belong to $\Pi$.  Altogether, we see that
  $N_\Pi(\{w, w'\})\subseteq V(usPtv)$.  So, replacing $usPtv$ in
  $\Pi$ with $uww'v$ returns a pyramid with less vertices than $\Pi$,
  again a contradiction to the minimality of $\Pi$.  \epc

\vspace{1 ex}

By Claim~\ref{cl:where-is-u,vPyr} and up to symmetry, we may assume that
$u \in V(P_{l-1})$ and $v \notin V(P_{l-1})$.  So, $v$ has a neighbor
$v'$ in $P_{l-1}$ such that $t\in \myBox_{v'}$. Note that $v' \neq u$,
for otherwise $usPtvu$ is a hole of~$\Sigma$, so it contains the apex
and two vertices of~$\Delta$, a contradiction to Claim~\ref{cl:apexInPl}
or Claim~\ref{cl:triPl} (because $u$ and $v$ are the only vertices of the
hole that are not in~$P_l$). 
Furthermore, note that the path
$uP_{l-1}v'$ has length at most two, for otherwise some vertex in the
interior of $uP_{l-1}v'$ contradicts
Claim~\ref{cl:not-all-in-P-pyramid}.

Suppose that $uP_{l-1}v'$ has length two, so $uP_{l-1}v' = uwv'$ for
some vertex $w \in V(P_{l-1})$. Then $w$ is of type~0 because $v'$ is
not of type~0.  Let $w'$ be the rightmost vertex of the shared zone
$E_{w, v'}$. By Claim~\ref{cl:intEdge-pyramid},
$\{s\} = V(E_{u, w}) \cap V(\Pi)$ and since $w$ has type~0, we see
that $N_\Pi(w)\subseteq V(usPw')$.  So, replacing $usPw'$ in $\Pi$
with $uww'$ returns a pyramid with less vertices than $\Pi$, a
contradiction to the minimality of $\Pi$.
  
Hence, $uP_{l-1}v'$ has length one, i.e.\ $uP_{l-1}v' = uv'$.  By
Claim~\ref{cl:intEdge-pyramid}, $\{s\} = V(E_{u, v'}) \cap V(\Pi)$.
Observe that $P$ is the left escape of $v$ in $\myBox_{v'}$.  So, $P$
goes through the zone $O_{v'}$ (when $v'$ has type~1) or through the
zone $E_{v'}$ (when $v'$ has type~2).  In particular $v'\notin \Pi$.

If $N_\Pi(v') \subseteq V(usPtv)$, then replacing $usPtv$ in $\Pi$
with $uv'v$ returns a pyramid with less vertices than $\Pi$, a
contradiction to the minimality of $\Pi$.  So, $v'$ has neighbors in
$\Pi$ that are not in $usPtv$.  Note that if $v'$ is of type~2, the
ancestor of $v'$ that is different from $v$ is not in $\Pi$ (because
it is adjacent to $t$ and to $v$, but $t \notin \tri$ by
Claim~\ref{cl:triPl}).

We denote by $v''$ the right neighbor of $v'$ in $P_{l-1}$. Note that
$v''$ has type~0, since $v'$ has type~1 or~2.  Let $s'$ and $t'$ be
vertices such that $t'P_ls'$ is the right escape of $v$ in
$\myBox_{v'}$, $t'$ is adjacent to $v$ and $s'$ is adjacent to $v''$.
Note that $s'$ is the leftmost vertex of $E_{v', v''}$ and $t'$ is the
rightmost vertex of the zone $O_{v',v}$ (when $v'$ is of type~1)
or of the rightmost zone $E_{v,w}$ (when $v'$ is of type~2, and $w$ is the other ancestor of~$v'$).

Let us see which vertex can be a neighbor of $v'$ in $\Pi \sm
usPtv$. We already know it cannot be an ancestor of $v'$ or be a
vertex of $E_{u, v'}\sm s$.  Suppose it is $v''$. Then, $\Pi$ must
contain two edges incident to $v''$, and none of them can be an
internal edge by Claim~\ref{cl:intEdge-pyramid}.  Note that $s'v''$
must be an edge of $\Pi$, for otherwise, the two only available edges are
$v''v'''$ and $v''s''$ (where $v'''$ is the right neighbor of $v''$ in
$P_{l-1}$ and $s''$ is the rightmost neighbor of $v''$ in $P_l$), and
this is a contradiction because $v'''s''\in E(G)$.  Since
$s'v''\in E(\Pi)$, $\Pi$ goes through the path
$R=usPtvt'P_ls'v''$. This path contains all vertices of $N_\Pi(v')$.
Note that $v\notin\tri$, because if so, one of $t$ or $t'$ should be
in $\tri$, a contradiction to Claim~\ref{cl:triPl}.  But $v$ can be
the apex.  If $v$ is not the apex, we may replace $R$ by $uv'v''$ in
$\Pi$, to obtain a pyramid that contradicts the minimality of $\Pi$.
If $v$ is the apex, then we may replace $R\sm v$ by $uv'v''$ in $\Pi$,
to obtain a pyramid with apex $v'$ that contradicts the minimality of
$\Pi$.  Now we know that $v''\notin V(\Pi)$.

Since $v'$ has a neighbor in $\Pi \sm usPtv$, and since this neighbor
is not an ancestor of $v'$, is not $v''$, and is not in $E_{u, v'}$, it must
be in $\myBox_{v'} \sm ( V(P) \cup E_{u, v'})$.  By
Claim~\ref{cl:intEdge-pyramid}, the only way that $\Pi$ can contain
some vertex of $\myBox_{v'} \sm ( V(P) \cup E_{u, v'})$ is that $\Pi$
goes through the edge $vt'$, in particular through the right escape of
$v$ in $\myBox_{v'}$ and through the zone $E_{v', v''}$.  Let $t''$ be
the rightmost vertex of $E_{v', v''}$.  Hence, $\Pi$ must go through
the path $S = usPtvt'P_lt''$.  This path contains all vertices of
$N_\Pi(v')$. Note that $v\notin\tri$, because if so, one of $t$ or $t'$ should be
in $\tri$, a contradiction to Claim~\ref{cl:triPl}.  But $v$ can be
the apex. If $v$ is not the apex, we may replace $S$ by $uv't''$
in $\Pi$, to obtain a pyramid that contradicts the minimality of
$\Pi$.  If $v$ is the apex, then we may replace $S\sm v$ by $uv't''$
in $\Pi$, to obtain a pyramid with apex $v'$ that contradicts the
minimality of $\Pi$.
\end{proof}

\subsection*{Treewidth and cliquewidth}
\label{sec:lower-bound}

For any $l \geq 0$, ttf-layered-wheels and ehf-layered-wheels on $l+1$
layers contain $K_{l+1}$ as a minor. To see this, note that each vertex in layer
$P_i$, $i<l$, has neighbors in all layers $i+1,\dots, l$ (see
Lemma~\ref{lem:neighborhood-ttf-layered-wheel} and
Lemma~\ref{lem:neighborhood-ehf-layered-wheel}).  Hence, by contracting
each layer into a single vertex, a complete graph on $l+1$ vertices is
obtained.  Since when $H$ is a minor of $G$ we have
$\tw(H) \leq \tw(G)$ and since for $l \geq 1$, a complete graph on $l$
vertices has treewidth $l-1$, we obtain the following.

\begin{theorem} \label{th:largeTW}
  For any $l \geq 0$, ttf-layered-wheels and ehf-layered-wheels on
  $l+1$ layers have treewidth at least~$l$.
\end{theorem}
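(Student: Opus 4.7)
The plan is to exhibit $K_{l+1}$ as a minor of any layered wheel on $l+1$ layers, and then invoke the standard facts that treewidth is minor-monotone and that $\tw(K_{l+1}) = l$.

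First I would recall the branch-set description of a minor: to realize $K_{l+1}$ as a minor of $G_{l,k}$, it suffices to partition (a subset of) $V(G_{l,k})$ into $l+1$ connected branch sets $B_0,\dots,B_l$ such that for every pair $i\ne j$ there is an edge of $G_{l,k}$ between $B_i$ and $B_j$. I would simply take $B_i = V(P_i)$ for each $0 \leq i \leq l$. Each $B_i$ is connected because $P_i$ is a path by \ref{axi:A1} (respectively~\ref{axi:B1}).

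Next, I would verify the cross-edge condition. Fix $i<j$. Lemma~\ref{lem:neighborhood-ttf-layered-wheel} (or Lemma~\ref{lem:neighborhood-ehf-layered-wheel} in the ehf case) guarantees that every vertex of $P_i$ has at least $3^{j-i}\geq 1$ neighbors in $P_j$, so in particular there is an edge of $G_{l,k}$ between $B_i$ and $B_j$. Thus contracting each branch set to a single vertex yields a graph on $l+1$ vertices in which every pair is joined by an edge, i.e.\ $K_{l+1}$. Hence $K_{l+1}$ is a minor of $G_{l,k}$.

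Finally, I would conclude by standard facts: treewidth is monotone under taking minors, so $\tw(G_{l,k}) \geq \tw(K_{l+1})$; and a classical computation gives $\tw(K_{l+1}) = l$ (a tree decomposition with a single bag containing all vertices witnesses $\tw(K_{l+1})\leq l$, while the matching lower bound follows from the fact that any tree decomposition must have a bag containing every clique). Combining these inequalities gives $\tw(G_{l,k}) \geq l$, which proves the theorem for both the ttf and ehf layered wheels simultaneously. There is no real obstacle here: the only content beyond the two neighborhood lemmas already established is the invocation of two textbook results about treewidth.
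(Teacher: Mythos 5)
Your proof is correct and follows essentially the same route as the paper: the authors also contract each layer $P_i$ to a single vertex, invoke Lemmas~\ref{lem:neighborhood-ttf-layered-wheel} and~\ref{lem:neighborhood-ehf-layered-wheel} to get an edge between every pair of layers, and conclude via minor-monotonicity of treewidth and $\tw(K_{l+1})=l$. Your write-up is merely a bit more explicit about the branch-set formalism.
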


\nocite{DBLP:conf/wg/2000}

Gurski and Wanke~\cite{GurskiW00} proved that the
treewidth is in some sense equivalent to the cliquewidth when some
complete bipartite graph is excluded as a subgraph.  Let us state and
apply this formally (thanks to Sang-il Oum for pointing this out to us).

\begin{theorem}[Gurski and Wanke~\cite{GurskiW00}] \label{th:gurski}
  If a graph $G$ contains no $K_{3, 3}$ as a subgraph, then
  $\tw(G) \leq 6\cw(G)-1$.
\end{theorem}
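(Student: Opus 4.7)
My plan is to prove the theorem directly from the $k$-expression definition of cliquewidth. Let $k = \cw(G)$ and fix a $k$-expression $\phi$ that evaluates to $G$, together with its parse tree $T$. Each leaf of $T$ introduces a single vertex with some label in $\{1, \dots, k\}$; each internal node performs either a disjoint union, a relabeling $\rho_{i \to j}$, or a join $\eta_{i, j}$ that inserts all missing edges between the current label-$i$ and label-$j$ vertices. For each node $t$, write $G_t$ for the graph produced at $t$ and $\lambda_t \colon V(G_t) \to \{1, \dots, k\}$ for its labeling. A useful first observation is that two vertices of $V(G_t)$ that share a label at $t$ continue to share a label at every ancestor of $t$ and therefore participate in exactly the same joins performed above $t$; in particular they have the same set of neighbors in $V(G) \sm V(G_t)$.

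The plan is then to turn $T$ into a tree decomposition with the same underlying tree. For each node $t$ and each label $i \in \{1, \dots, k\}$, I would select a set $R_i(t) \seq \lambda_t^{-1}(i)$ of ``representatives'' and set the bag $X_t = \bigcup_{i=1}^{k} R_i(t)$. Representatives are maintained bottom-up: at a leaf, the introduced vertex is its own sole representative; at a relabel or join node, representatives are inherited from the child (with labels updated accordingly); at a disjoint union node, the representative classes from the two children are unioned and then pruned when they exceed a target threshold. The invariant to enforce is that every edge of $G$ lies in some bag and that the set of nodes whose bag contains a fixed vertex induces a connected subtree of $T$.

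The crux is the extremal step, bounding $|R_i(t)|$ by $6$. By the observation above, the set $W_i(t)$ of vertices in $V(G) \sm V(G_t)$ that are made adjacent to some vertex of $\lambda_t^{-1}(i)$ by a future join is common to all vertices of $\lambda_t^{-1}(i)$, and every such pair is an edge of $G$. If $|\lambda_t^{-1}(i)| \geq 3$ and $|W_i(t)| \geq 3$ simultaneously, then $G$ would contain $K_{3, 3}$ as a subgraph, contrary to the hypothesis. So at most one of these two sets can have size at least~$3$, and only vertices of $\lambda_t^{-1}(i)$ that still witness an edge to $W_i(t)$ need to remain in $R_i(t)$. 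A careful accounting of how representatives are passed through disjoint-union and prune steps will yield the bound $|R_i(t)| \leq 6$, hence $|X_t| \leq 6k$ and $\tw(G) \leq 6\cw(G) - 1$.

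The hard part will be this accounting. A single vertex may belong to many label classes over its life (as it is relabeled along the path to the root of $T$), participate in many different joins, and be pruned against many sibling subtrees. Pinning the constant down to exactly~$6$ requires tracking, for each label class currently present in $G_t$, the full pattern of upcoming joins and their witnesses simultaneously, and then balancing pruning decisions at disjoint-union nodes against the global $K_{3,3}$-free constraint. Once the invariant $|R_i(t)| \leq 6$ is established, verifying the three axioms (T1)--(T3) of a tree decomposition from the recursive construction is routine.
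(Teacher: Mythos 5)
The paper does not actually prove Theorem~\ref{th:gurski}: it is quoted from Gurski and Wanke~\cite{GurskiW00}, so there is no internal proof to compare against. Your plan does follow the broad strategy of the known proof --- convert the parse tree of a $k$-expression into a tree decomposition and use the absence of $K_{3,3}$ to control label classes --- and your two key observations are sound: vertices sharing a label at a node $t$ keep sharing a label at every ancestor of $t$, hence acquire identical neighbours outside $V(G_t)$; and if both $\lambda_t^{-1}(i)$ and its common future neighbourhood $W_i(t)$ had size at least $3$, these two disjoint sets would span a $K_{3,3}$ subgraph.

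However, as written the proposal has a genuine gap: the entire quantitative content of the theorem is deferred to ``a careful accounting'' that you do not carry out, and the step you call routine is exactly where the construction threatens to fail. Suppose $|W_i(t)|\le 2$ and you prune $\lambda_t^{-1}(i)$ down to a few representatives. Every pruned vertex $u$ is nevertheless joined, at some ancestor node, to each vertex of $W_i(t)$; axiom (T2) demands a bag containing both ends of each such edge, and axiom (T3) then forces $u$ back into a connected chain of bags reaching from its leaf up to that ancestor --- precisely what pruning was meant to avoid. (Your phrase ``only vertices that still witness an edge to $W_i(t)$ need to remain'' does not help: by your own shared-label observation, either all vertices of the class witness such an edge or none do.) Resolving this, for instance by keeping each vertex alive in all bags from its leaf up to the last join affecting its class and then using the $K_{3,3}$-free hypothesis to bound how many vertices per label can be simultaneously alive, is the actual proof; it is also where the constant $6=3(n-1)$ with $n=3$ arises in Gurski and Wanke's argument. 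Until that accounting is done, the bound $|X_t|\le 6k$ is an unsupported assertion rather than a conclusion.
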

  
\begin{lemma} \label{lem:noK33}
  A layered wheel (ttf or ehf) contains no $K_{3, 3}$ as a subgraph.
\end{lemma}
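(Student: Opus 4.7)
The plan is to suppose for contradiction that some layered wheel $G$ (either ttf or ehf) contains a $K_{3,3}$ as a subgraph with bipartition $(A,B)$ of size three each, and to derive a contradiction from the distance constraint on type-$\geq 1$ vertices along each layer, namely axioms~\ref{axi:A6} and~\ref{axi:B7}.

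First I would let $l^*$ be the largest layer index containing a vertex of $A \cup B$, and pick any $v \in (A \cup B) \cap P_{l^*}$. The key observation is that the three $K_{3,3}$-neighbors of $v$, which all lie in $A \cup B$, are contained in $P_0 \cup \cdots \cup P_{l^*}$ by maximality of $l^*$. I would then count the graph-neighbors of $v$ available in these layers: at most two path-neighbors in $P_{l^*}$ plus at most one ancestor in the ttf case (respectively two in the ehf case). Since the three $K_{3,3}$-neighbors of $v$ must fit among these, $v$ is forced to be of type at least~$1$, and moreover at least one of them must be a path-neighbor of $v$ in $P_{l^*}$. Call such a path-neighbor $u$.

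The same counting, now applied to $u$, which also has degree three in $K_{3,3}$ and whose $K_{3,3}$-neighbors also lie in $P_0 \cup \cdots \cup P_{l^*}$, forces $u$ to be of type at least~$1$ as well. Thus $v$ and $u$ are two type-$\geq 1$ vertices of $P_{l^*}$ at distance exactly $1$, so in particular consecutive in the sense that no type-$\geq 1$ vertex lies strictly between them. But axiom~\ref{axi:A6} (in the ttf case) and axiom~\ref{axi:B7} (in the ehf case) assert precisely that any two such consecutive type-$\geq 1$ vertices on $P_l$ must lie at distance at least $k - 2 \geq 2$, yielding the desired contradiction.

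The only subtlety to check is the boundary case where $v$ lies at an end of $P_{l^*}$ and hence has only one path-neighbor. In the ttf case this cannot happen, since $v$ would then have at most $1 + 1 = 2$ graph-neighbors in layers $\leq l^*$, insufficient for the three $K_{3,3}$-neighbors. In the ehf case $v$ must then be of type~$2$ with its single path-neighbor and both ancestors all belonging to $X$, so the unique path-neighbor still plays the role of $u$ and the argument closes in the same way. I do not anticipate any real obstacle: the whole argument hinges on the local distance constraint given by~\ref{axi:A6}/\ref{axi:B7} together with the elementary counting step that forces the existence of $u$.
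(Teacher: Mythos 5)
Your proof is correct, but it takes a genuinely different route from the paper. The paper's proof is a two-line reduction to the main structural theorems: for the ttf case, a $K_{3,3}$ subgraph yields either an induced theta (if induced) or a triangle (if not), contradicting Theorem~\ref{th:layeredWheel-TTF}; for the ehf case, it yields either a $K_4$ (if some side is a clique) or an induced $C_4$, i.e.\ an even hole, contradicting Theorem~\ref{th:layeredWheel-EHF}. You instead argue directly from the construction axioms: taking the top layer $P_{l^*}$ meeting the $K_{3,3}$, counting the neighbors available below that layer ($\leq 2$ path-neighbors plus $\leq 1$ or $\leq 2$ ancestors) forces two adjacent vertices of $P_{l^*}$ to both have type at least~$1$, contradicting the spacing requirement of~\ref{axi:A6} (resp.~\ref{axi:B7}) that consecutive such vertices lie at distance at least $k-2\geq 2$. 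Your argument is sound (note only that the axioms are stated for the last layer but apply to every layer $P_i$, $i\geq 1$, since each was the last layer at some stage of the inductive construction; that ``$X$'' in your boundary discussion should read $A\cup B$; and that the ehf boundary case is in fact vacuous, since ends of layers have type~$1$ and hence at most two neighbors in layers $\leq l^*$). The trade-off: the paper's proof is essentially free once Theorems~\ref{th:layeredWheel-TTF} and~\ref{th:layeredWheel-EHF} are in hand, while yours is self-contained and elementary, depending only on the local spacing axioms and not on the (much harder) theta-freeness and even-hole-freeness results --- so it would survive even in variants of the construction where those global theorems fail.
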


\begin{proof}
  Suppose that a ttf-layered-wheel $G$ contains $K_{3, 3}$ as a subgraph. Then,
  either it contains a theta (if $K_{3, 3}$ is an induced subgraph of
  $G$) or it contains a triangle (if $K_{3, 3}$ is not an induced subgraph of
  $G$). In both cases, there is contradiction.

  Suppose that an ehf-layered-wheel $G$ contains $K_{3, 3}$ as a
  subgraph. If one side of the $K_{3, 3}$ is a clique, then $G$
  contains a $K_4$.  Otherwise, each side of $K_{3, 3}$ contains a
  non-edge, so $G$ contains $K_{2,2}$, that is isomorphic to a $C_4$.
  In both cases, there is contradiction.
\end{proof}

\begin{theorem} \label{th:largeCW}
  For any integers $l\geq 2$, $k \geq 4$, the cliquewidth of a layered wheel
  $G_{l,k}$ is at least~$\frac{l+1}{6}$.
\end{theorem}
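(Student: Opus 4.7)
The plan is to simply chain together three results already established earlier in the paper: the treewidth lower bound (Theorem~\ref{th:largeTW}), the absence of a $K_{3,3}$ subgraph (Lemma~\ref{lem:noK33}), and the Gurski–Wanke inequality (Theorem~\ref{th:gurski}). There is no real obstacle here; the statement is essentially a corollary.

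Concretely, I would first invoke Theorem~\ref{th:largeTW} to get $\tw(G_{l,k}) \geq l$. Next, I would apply Lemma~\ref{lem:noK33} to conclude that $G_{l,k}$ contains no $K_{3,3}$ as a subgraph, which makes Theorem~\ref{th:gurski} applicable and yields $\tw(G_{l,k}) \leq 6\,\cw(G_{l,k}) - 1$. Combining these two inequalities gives
\[
  l \;\leq\; \tw(G_{l,k}) \;\leq\; 6\,\cw(G_{l,k}) - 1,
\]
so that $\cw(G_{l,k}) \geq \tfrac{l+1}{6}$, which is exactly the claimed lower bound.

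The only thing worth double-checking is that the hypothesis $l \geq 2$ (and $k \geq 4$) in the theorem statement is compatible with the hypotheses of the three results used: Theorem~\ref{th:largeTW} needs $l \geq 0$, Lemma~\ref{lem:noK33} applies to every layered wheel (ttf or ehf), and Theorem~\ref{th:gurski} is stated for arbitrary graphs, so all hypotheses are met. The assumption $l \geq 2$ is not strictly needed for the inequality itself but ensures that the bound is non-trivial (giving $\cw \geq 1/2$, hence $\cw \geq 1$ after integrality, which becomes informative only once $l$ grows). Since the chain of inequalities is immediate, I expect the written proof to be at most a few lines.
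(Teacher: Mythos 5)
Your proof is correct and is exactly the paper's argument: the paper's proof of Theorem~\ref{th:largeCW} is a one-line citation of Lemma~\ref{lem:noK33} together with Theorems~\ref{th:gurski} and~\ref{th:largeTW}, combined precisely as you do. Nothing further is needed.
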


\begin{proof}
  Follows from Lemma~\ref{lem:noK33} and Theorems~\ref{th:gurski}
  and~\ref{th:largeTW}.  
\end{proof}

\subsection*{Observations and open questions}

It should be pointed out that by carefully subdividing, one may obtain
bipartite ttf-layered-wheels on any number $l$ of layers.  This is
easy to prove by induction on $l$. We just sketch the main step of the
proof: when building the last layer, assuming that the previous layers
induce a bipartite graph, only the vertices with ancestors are
assigned to one side of the bipartition (and only to one side, since a
vertex has at most one ancestor in a ttf-layered-wheels). The parity
of the paths linking vertices with ancestors can be adjusted to
produce a bipartite graph.

It is easy to see that every prism, every theta, and every even wheel
contains an even hole.  Therefore, by
Theorem~\ref{th:layeredWheel-EHF} and Theorem~\ref{th:layeredWheel-EHF_noPyramid}, 
ehf-layered-wheels are (prism,
pyramid, theta, even wheel)-free, which is not obvious from their definitions.
Note that ehf-layered-wheels contain diamonds (recall Conjecture~\ref{conj:diamond}
we proposed in Section~\ref{sec:introduction}).

However, we note that it is possible to modify Construction~\ref{cons:G2(l,k)} 
in such a way that we obtain a layered wheel that is even-hole-free
but contains a pyramid. Such a construction might be of interest 
to see what amount of structure one can get in a even-hole-free 
graphs by studying how the graph attaches to a pyramid. 
The construction is done by modifying axiom~\ref{axi:B5}
where the two zones $E_u$'s are obliterated. More specifically, 
if $u$ is of type~2 (so it is an internal vertex
    of $P_{l-1}$), then let $v \in P_i$ and $w \in P_{j}$, $j \leq i$
    be its ancestors.  In this case,
    $\myBox_u$ is made of only 9 zones, namely $E_{u', u}$, $E_{v, w}$, $O_u$,
    $O_{u, v}$, $O_{u}$, $O_{u, w}$, $O_u$, $E_{v, w}$, and
    $E_{u, u''}$ (see Figure~\ref{fig:ehf-layered-wheel-withPyramid}). 
The fact that this modified construction keeps the property of the layered wheel
being even-hole-free can be proven similarly as Theorem~\ref{th:layeredWheel-EHF}.
Notice that Lemma~\ref{lem:length-of-boxes} also remains true for this modified construction.
We remark that a corresponding wheel that is even-hole-free 
(similar to the one in Figure~\ref{fig:ehf-LW-wheel})
exists considering this modified pattern of zones.

\begin{figure}[ht]
	\begin{center}
    	\includegraphics[width=10cm]{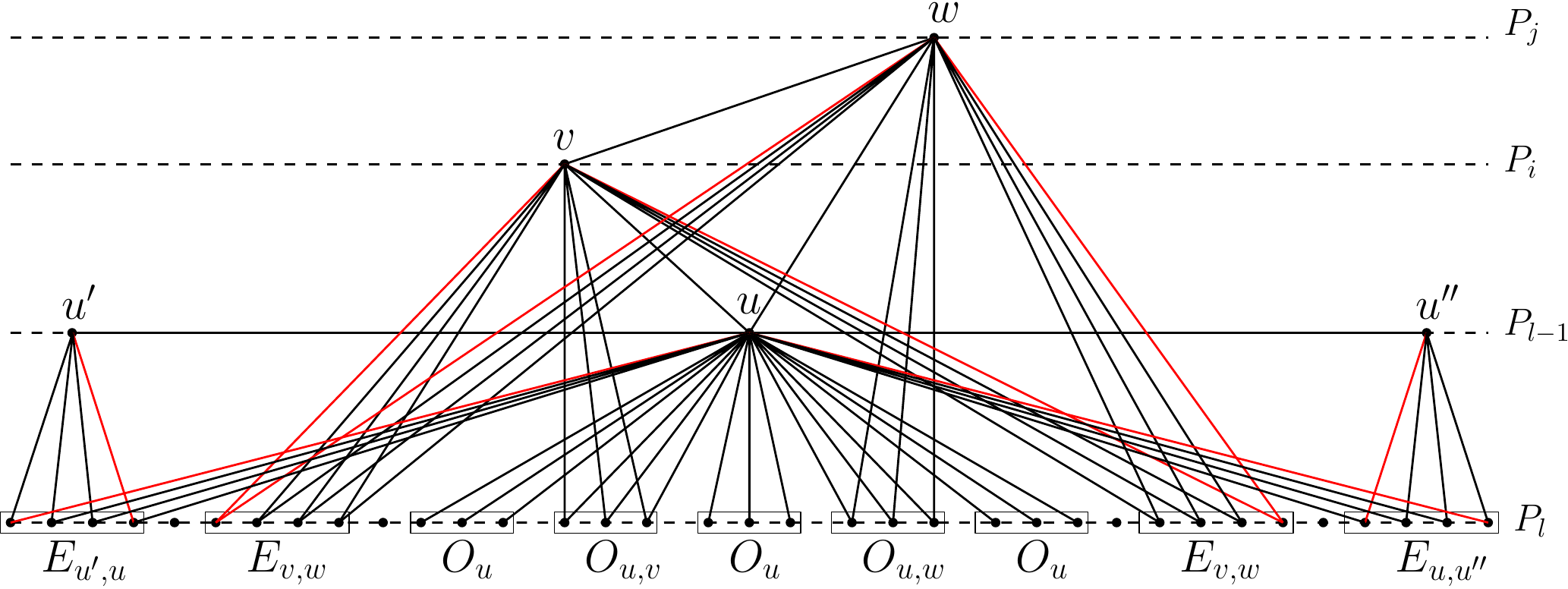}
  	\end{center}
  	\caption{A modified construction of ehf-layered-wheel $G_{l,k}$ which contain pyramids
  	(dashed lines between two vertices in $P_l$ represent paths of odd length)} 
  	\label{fig:ehf-layered-wheel-withPyramid}
\end{figure}

\begin{figure}[ht]
	\begin{center}
    	\includegraphics[width=10cm]{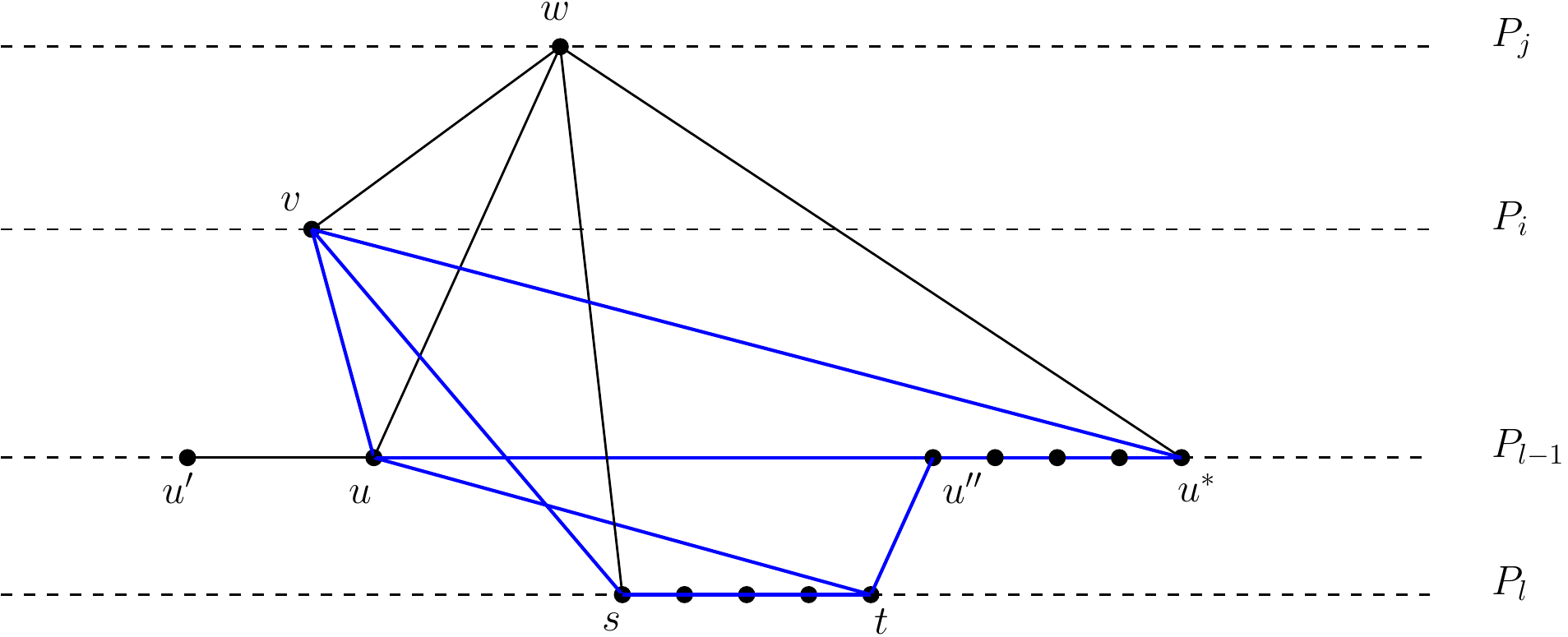}
  	\end{center}
  	\caption{A pyramid (in blue) that is contained in 
  	a modified ehf-layered-wheel $G_{l,k}$, for some integers $l,k$
  	\label{fig:pyamid-LW}}
\end{figure}

An example of a pyramid that may be found in such a modified
ehf-layered-wheel is given in Figure~\ref{fig:pyamid-LW}.  In the
figure, $u \in P_{l-1}$ is a type~2 vertex with ancestors $v \in P_i$
and $w \in P_j$, $j < i$, $u^{*}$ is a common neighbor of $v$ and
$w$ in $P_{l-1}$ such that $u$ and $u^{*}$ are consecutive common
neighbors of $v$ and~$w$ in some $vw$-zone in $P_{l-1}$, $s$ is the
rightmost vertex of a zone labeled $E_{v,w} \subseteq \myBox_u$ in
$P_l$; and $t$ is the leftmost vertex of the zone labeled
$E_{u,u''} \subseteq \myBox_u$ in $P_l$ where $u''\neq u'$ is adjacent to $u$
in $P_{l-1}$. The pyramid has triangle $utu''$ and apex $v$.

\section{Lower bound on rankwidth}
\label{sec:lower-bound-rankwidth}

In this section, we prove that there exist ttf-layered-wheels and
ehf-layered-wheels with arbitrarily large rankwidth.  This follows
directly from Theorem~\ref{th:largeCW} and Lemma~\ref{lem:width-comparation}, but
by a direct computation, we provide a better bound. Let us first
present some useful notion and definition about rankwidth.

For a set $X$, let $2^X$ denote the set of all subsets of $X$. For
sets $R$ and $C$, an $(R,C)$-matrix is a matrix where the rows are
indexed by elements in $R$ and columns are indexed by elements in $C$. For
an $(R,C)$-matrix $M$, if $X \subsetneq R$ and $Y \subsetneq C$, we
let $M[X,Y]$ be the submatrix of $M$ where the rows and the columns
are indexed by $X$ and $Y$ respectively. For a graph $G = (V,E)$, let
$A_G$ denote the adjacency matrix of $G$ over the binary field (i.e.,
$A_G$ is the $(V,V)$-matrix, where an entry is 1 if the column-vertex
is adjacent to the row-vertex, and 0 otherwise). The \emph{cutrank
  function} of $G$ is the function
$\textrm{cutrk}_G : 2^V \rightarrow \mathbb{N}$, given by
$$\textrm{cutrk}_G(X) = \rank(A_G [X, V \sm X]),$$ where the
rank is taken over the binary field.

A tree is a connected, acyclic graph. 
A \emph{leaf} of a tree is a vertex incident to exactly one edge. 
For a tree $T$, we let $L(T)$ denote the set of all leaves of $T$. 
A tree vertex that is not a leaf is called \emph{internal}. 
A tree is \emph{cubic}, if it has at least two vertices and every internal vertex has degree 3.

A \emph{rank decomposition} of a graph $G$ is a pair $(T, \lambda)$,
where $T$ is a cubic tree and $\lambda : V(G) \rightarrow L(T)$ is a
bijection. If $|V(G)| \leq 1$, then $G$ has no rank decomposition. For
every edge $e \in E(T)$, the connected components of $T \sm e$ induce a
partition $(A_e, B_e)$ of $L(T)$. The \emph{width of an edge} $e$ is
defined as $\textrm{cutrk}_G (\lambda^{-1}(A_e))$. The \emph{width of
  $(T, \lambda)$}, denoted by $\textrm{width}(T,\lambda)$, is the
maximum width over all edges of $T$.  The \emph{rankwidth of $G$},
denoted by $\rw(G)$, is the minimum integer $k$, such that there is a
rank decomposition of $G$ of width $k$. (If $|V(G)| \leq 1$, we let
$\rw(G) = 0$.)  The next lemma follows directly from the definition of
the rankwidth.

\begin{lemma}
\label{lem:rankwidth-of-induced-subgraph}
  Let $G$ be a graph and $H$ be an induced subgraph of $G$. Then
  $\rw(H) \leq \rw(G)$.
\end{lemma}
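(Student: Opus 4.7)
The plan is to turn an optimal rank decomposition of $G$ into a rank decomposition of $H$ whose width is no larger, by restricting attention to the leaves that correspond to $V(H)$. If $|V(H)|\le 1$ then $\rw(H)=0\le \rw(G)$ by convention, so assume $|V(H)|\ge 2$, and let $(T,\lambda)$ be a rank decomposition of $G$ of width $\rw(G)$.

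First I would construct a cubic tree $T'$ with leaf set in bijection with $V(H)$ as follows. Let $L'=\lambda(V(H))\seq L(T)$, and take the minimal subtree $T''$ of $T$ containing $L'$; this is well-defined because $T$ is a tree and $|L'|\ge 2$. Every leaf of $T''$ belongs to $L'$, but $T''$ may contain internal vertices of degree $2$. Suppress each such degree-$2$ vertex (i.e., replace the two incident edges by a single edge joining its two neighbors) to obtain $T'$. Then $L(T')=L'$ and every internal vertex of $T'$ has degree $3$, so $T'$ is cubic. Define $\lambda':V(H)\to L(T')$ as the restriction of $\lambda$ to $V(H)$; this is a bijection.

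Next I would bound the width of $(T',\lambda')$. Let $e'\in E(T')$. By the suppression step, $e'$ corresponds to a path $\pi$ in $T$, all of whose edges induce the same partition $(A_e\cap L',\, B_e\cap L')$ of $L'$. Pick any edge $e$ of $\pi$ and set $X=\lambda^{-1}(A_e)\seq V(G)$ and $X'=(\lambda')^{-1}(A_{e'})=X\cap V(H)\seq V(H)$. Then the matrix $A_H[X',V(H)\sm X']$ is obtained from $A_G[X,V(G)\sm X]$ by selecting the rows indexed by $X'$ and the columns indexed by $V(H)\sm X'$, so it is a submatrix. Since the rank of a submatrix does not exceed the rank of the whole matrix (over any field),
\[
\textrm{cutrk}_H(X')=\rank\bigl(A_H[X',V(H)\sm X']\bigr)\le \rank\bigl(A_G[X,V(G)\sm X]\bigr)\le \rw(G).
\]
Taking the maximum over all edges $e'\in E(T')$ gives $\rw(H)\le \textrm{width}(T',\lambda')\le \rw(G)$.

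The only point that really requires care is the suppression step: one must verify that after deleting the unwanted leaves and suppressing degree-$2$ vertices, $T'$ is indeed cubic (in particular non-empty, which is why the $|V(H)|\le 1$ case is handled separately), and that each edge of $T'$ corresponds to a uniquely determined cut of $L'$ inherited from $T$. Once this bookkeeping is in place, the result reduces to the trivial observation that a bipartite adjacency submatrix has rank at most that of the full bipartite adjacency matrix.
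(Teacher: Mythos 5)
Your proof is correct. The paper does not actually give a proof of this lemma---it simply asserts that it ``follows directly from the definition of the rankwidth''---and your argument (restrict the leaves to $\lambda(V(H))$, take the minimal subtree, suppress degree-$2$ vertices, and observe that each resulting cut-rank is the rank of a submatrix of the corresponding cut matrix for $G$, using that $H$ is induced so that $A_H$ is a principal submatrix of $A_G$) is exactly the standard bookkeeping that justifies that assertion, with the degenerate case $|V(H)|\le 1$ handled correctly.
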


\vspace{1 ex}

A class $\mathcal{C}$ of graphs has \emph{bounded rankwidth} if there
exists a constant $k \in \mathbb{N}$, such that every $G \in \mathcal{C}$
satisfies $\rw(G) \leq k$. If such a constant does not exist, then
$\mathcal{C}$ has \emph{unbounded rankwidth}.  In the following
lemmas, we present some basic properties related to rankwidth.  Let
$T$ be a tree, we call an edge $e \in E(T)$ \emph{balanced}, if the
partition $(A_e, B_e)$ of $L(T)$ satisfies
$\frac{1}{3}|L(T)| \leq |A_e|$ and $\frac{1}{3}|L(T)| \leq |B_e|$. The
following is well-known (we include a proof for the sake of
completeness).


\begin{lemma} 
\label{lem:cubic-tree}
  Every cubic tree has a balanced edge.
\end{lemma}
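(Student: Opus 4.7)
Let $n = |L(T)|$. The plan is to locate an internal ``leaf-centroid'' vertex $v^\ast$ such that each of the three components of $T \setminus v^\ast$ contains at most $n/2$ leaves of $T$; the edge from $v^\ast$ into the heaviest component will then be balanced.

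First I would dispose of the trivial case: if $|V(T)|=2$, both vertices are leaves, so $n=2$ and the unique edge $e$ yields $|A_e|=|B_e|=1 \ge n/3$. Since no tree on three vertices is cubic, from now on I may assume $|V(T)|\ge 4$, which guarantees at least one internal vertex. For any internal vertex $v$, since $T$ is cubic, $T\setminus v$ has exactly three components $T_1(v), T_2(v), T_3(v)$, and since $v$ itself is not a leaf, the leaf-counts $n_i(v):=|L(T)\cap V(T_i(v))|$ satisfy $n_1(v)+n_2(v)+n_3(v)=n$. Set $m(v)=\max_i n_i(v)$.

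The key step is to show that some internal vertex $v^\ast$ satisfies $m(v^\ast)\le n/2$, which I would establish by picking $v^\ast$ to minimize $m$ and arguing by contradiction. Suppose $m(v^\ast)>n/2$; after relabeling assume $n_1(v^\ast)>n/2$, and let $w$ be the neighbor of $v^\ast$ in $T_1(v^\ast)$. First verify that $w$ is internal (otherwise $n_1(v^\ast)=1$, contradicting $n_1(v^\ast)>n/2$ for $n\ge 4$). Then examine the three components of $T \setminus w$: one contains $v^\ast$ and has $n-n_1(v^\ast)<n/2$ leaves; the other two live inside $T_1(v^\ast)\setminus\{w\}$ and partition its $n_1(v^\ast)$ leaves. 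The main obstacle is to argue that each of these latter two components contains at least one leaf of $T$, so that each contributes \emph{strictly} fewer than $n_1(v^\ast)$ leaves. This is where cubicity is essential: if such a component had no leaf of $T$, then each of its vertices would be internal in $T$ (degree $3$), yet any leaf of the component itself would then have degree at most $2$ in $T$, a contradiction (and a single-vertex component would similarly fail the degree count at that lone vertex). Combined with the bound on the $v^\ast$-side, this gives $m(w)<m(v^\ast)$, contradicting the choice of $v^\ast$.

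Finally, with $v^\ast$ in hand and the indices relabeled so that $n_3(v^\ast)=m(v^\ast)\le n/2$, the edge $e^\ast$ joining $v^\ast$ to its neighbor in $T_3(v^\ast)$ partitions the leaves with $|A_{e^\ast}|=n_3(v^\ast)$ and $|B_{e^\ast}|=n-n_3(v^\ast)$. Since $n_3(v^\ast)$ is the maximum of three nonnegative numbers summing to $n$, we have $n_3(v^\ast)\ge n/3$; combined with $n-n_3(v^\ast)\ge n/2\ge n/3$, this shows that $e^\ast$ is balanced, completing the plan.
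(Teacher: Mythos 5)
Your proof is correct, and it takes a genuinely different (though related) route from the paper's. The paper argues directly on edges: among all edges $e=ab$ whose $a$-side contains at least $|L(T)|/3$ leaves, it picks one minimizing $|A_e|$, and shows that if $|A_e|>2|L(T)|/3$ then one of the two sub-branches at $a$ still carries more than $|L(T)|/3$ leaves, contradicting minimality. You instead locate a \emph{leaf-centroid}: an internal vertex $v^\ast$ minimizing the maximum leaf-count over its three branches, for which you prove the stronger intermediate statement that every branch carries at most $n/2$ leaves, and only afterwards convert this into a balanced edge via $\max_i n_i(v^\ast)\geq n/3$. Your route costs slightly more work, in particular the auxiliary observation (correctly reduced to cubicity) that every branch hanging off an internal vertex contains a leaf of $T$; but that is precisely the point the paper's own proof leaves implicit, since there one needs $A''\neq\emptyset$ to conclude $|A'|<|A_e|$ and obtain the minimality contradiction, so your version is in this respect more complete. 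One tiny slip: when ruling out that $w$ is a leaf you invoke ``$n\geq 4$,'' whereas you only established $|V(T)|\geq 4$; what you actually need, and do have (a cubic tree with an internal vertex has at least three leaves, and any tree on at least two vertices has at least two), is $n\geq 2$, so that $1>n/2$ is impossible. This does not affect correctness.
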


\begin{proof} 
  \setcounter{claim}{0} Let $T$ be a cubic tree with $n$ leaves.  We may
  assume that $n \geq 3$, for otherwise, $T$ is a path of length~1,
  and the only edge of $T$ is balanced.

  Let $e=ab$ be an edge of $T$ such that the set of leaves $A_e$ of the 
  connected component of
  $T\sm e$ that contains $a$, satisfies $|A_e|\geq |L(T)|/3$. Suppose 
  that $a$ and $b$ are chosen subject to the minimality of $|A_e|$.
  If $|A_e|\leq 2|L(T)|/3$, then $e$ is balanced. Otherwise,
  $|A_e| > 2|L(T)|/3 \geq 2$ so $a$ has two neighbors $a'$, $a''$
  different from $b$.  Let $A'$ (resp.\ $A''$) be the set of leaves of
  the connected component of $T\sm aa'$ (resp.\ $T\sm aa''$) that contains $a'$ (resp.\
  $a''$). Since $|A_e| > 2|L(T)|/3$ and $A_e= A'\cup A''$, either
  $|A'| > |L(T)|/3$ or $|A''| > |L(T)|/3$.  Hence, one of $A'$ or
  $A''$ contradicts the minimality of $|A_e|$.
\end{proof}

%

\vspace{1 ex}

Let us now introduce a notion that is useful to describe how we can 
represent the structure of layered wheels into a matrix. 
An $n\times n$ matrix $M$ is \emph{fuzzy triangular} if $m_{1, 1} = 1$
and for every $i \in \{2,\dots,n\}$, $m_{i,i} = 1$ and either
$m_{1,i} = m_{2,i} = \cdots = m_{i-1,i} = 0$ or
$m_{i,1} = m_{i,2} = \cdots = m_{i,i-1} = 0$.

\begin{lemma} 
\label{lem:rank}
  Every $n \times n$ fuzzy triangular matrix has rank $n$.
\end{lemma}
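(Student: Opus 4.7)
The plan is to induct on $n$. For $n=1$, we have $M=(1)$, which has rank $1$. For $n \geq 2$, I would focus on the last index. By the fuzzy triangular condition applied at $i=n$, either the first $n-1$ entries of column $n$ all vanish (the \emph{column case}) or the first $n-1$ entries of row $n$ all vanish (the \emph{row case}); in both subcases $m_{n,n}=1$.

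In the column case, column $n$ is the standard basis vector $e_n$. I would use it to tidy up row $n$: for each $j < n$, add $m_{n,j}$ times column $n$ to column $j$. This elementary column operation preserves the rank, and since column $n$ vanishes outside row $n$, it leaves the upper-left $(n-1)\times(n-1)$ submatrix $M'$ untouched while zeroing out every entry of row $n$ other than $m_{n,n}$. The resulting matrix is block-diagonal with blocks $M'$ and $(1)$, so $\rank(M) = \rank(M') + 1$. The row case is handled symmetrically using row operations on row $n = e_n^{\top}$.

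The submatrix $M'$ is itself fuzzy triangular: the defining condition at each $i \leq n-1$ involves only indices $\leq i \leq n-1$, so every required vanishing pattern in $M$ is inherited by $M'$, and the diagonal entries $m_{i,i}=1$ are preserved. By the inductive hypothesis $\rank(M') = n-1$, and hence $\rank(M) = n$ as required.

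There is no real obstacle here; the only thing to observe is that the ``fuzzy'' flexibility (column zeros \emph{or} row zeros) at index $n$ is exactly what permits the last row/column to be unhooked from the rest by a single elementary operation, reducing cleanly to a smaller instance. The argument is field-agnostic, so it applies in particular over the binary field used for the cutrank function.
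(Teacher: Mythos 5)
Your proof is correct and follows essentially the same route as the paper's: induction on $n$, using the dichotomy in the fuzzy triangular condition at index $n$ to detach the last row or column and reduce to the upper-left $(n-1)\times(n-1)$ submatrix, which you correctly observe is again fuzzy triangular. The paper phrases the induction step as a direct linear-independence computation on the rows (resp.\ columns) rather than via elementary column (resp.\ row) operations, but the content is identical; the only nitpick is that over a field of characteristic different from $2$ one should \emph{subtract} $m_{n,j}$ times column $n$ from column $j$ rather than add it, which is immaterial over the binary field actually used here.
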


\begin{proof}
  Let $M$ be an $n\times n$ fuzzy triangular matrix.  We prove by
  induction on~$n$, that $\rank(M) = n$.  For $n = 1$, this trivially
  holds. Suppose that $n\geq 2$.  If
  $m_{1,n} = m_{2,n} = \cdots = m_{n-1,n} = 0$, we show that rows
  $r_1,\dots, r_n$ of $M$ are linearly independent.  Let
  $\lambda_1,\dots, \lambda_n \in [0,1]$ be such that
  $\Sigma_{i=1}^n \lambda_i r_i = \textbf{0}$ (where $\textbf{0}$
  is the zero vector of length $n$).  Since $m_{n,n} = 1$, we have
  $\lambda_n = 0$. This implies that
  $\Sigma_{i=1}^{n-1} \lambda_i r'_i = \textbf{0}$, where $r'_i$ is the
  row obtained from $r_i$ by deleting its last entry.  Since
  $r'_1,\dots, r'_{n-1}$ are the rows of an $(n-1)\times (n-1)$
  fuzzy triangular matrix, they are linearly independent by the
  induction hypothesis, so $\lambda_1 = \cdots = \lambda_{n-1} = 0$.

  We can prove in the same way that, if
  $m_{n,1} = m_{n,2} = \cdots = m_{n,n-1} = 0$, then the set of $n$
  columns of $M$ is linearly independent.  This shows that
  $\rank(M) = n$.
\end{proof}

Let $G$ be a graph and $(X,Y)$ be a partition of $V(G)$. A path $P$ in
$G$ is {\em separated by $(X,Y)$} if $V(P) \cap X$ and $V(P) \cap Y$
are both non-empty. Note that when $P$ is separated by $(X,Y)$, 
there exists a {\em separating edge} $xy$ of $P$ whose end-vertices 
are~$x \in X$ and~$y \in Y$.

\begin{lemma} 
\label{lem:separated-paths}
  Let $(T, \lambda)$ be a rank decomposition of width at most $r$ of a
  layered wheel with layers $P_0,P_1,\dots, P_l$. Let $e$ be an
  edge of $T$, and $(X,Y)$ be the partition of $V(G)$ induced by
  $T\sm e$. Then there are at most $r$ paths among $\{P_0,P_1,\dots, P_l\}$ that
  are separated by $(X,Y)$.
\end{lemma}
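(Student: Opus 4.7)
The plan is to show that if $s$ layers among $P_0,\dots,P_l$ are separated by $(X,Y)$, then $\operatorname{cutrk}_G(X) \geq s$; this forces $s \leq r$ and gives the lemma. The strategy is to exhibit a full-rank $s \times s$ submatrix of $A_G[X, V \setminus X]$ built from one carefully chosen separating edge per separated layer, and then apply Lemma~\ref{lem:rank}.

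First, for each separated layer $P_{i_j}$, I would pick a separating edge $x_j y_j \in E(P_{i_j})$ with $x_j \in X$ and $y_j \in Y$; such an edge exists because a path meeting both parts of a partition of $V(G)$ must traverse at least one cross-edge. The critical structural input is that by axiom~\ref{axi:A6} for a ttf-layered-wheel, or axiom~\ref{axi:B7} for an ehf-layered-wheel, two consecutive vertices of type~$\geq 1$ along a layer are at distance at least $k-2 \geq 2$. So at most one endpoint of $x_j y_j$ has type~$\geq 1$; the other endpoint, call it $z_j \in \{x_j, y_j\}$, has no ancestor at all.

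Next, I would order the separated layers by their index as $i_1 < i_2 < \cdots < i_s$, and form the $s \times s$ binary matrix $M$ with rows indexed by $x_1,\dots,x_s$, columns by $y_1,\dots,y_s$, and $M[x_j, y_k] = 1$ iff $x_j y_k \in E(G)$. The diagonal entries are all $1$ since $x_j y_j \in E(P_{i_j})$. For $j \neq k$ the two endpoints lie in different layers, so an edge between them must be an ancestor-descendant edge; in particular, the endpoint in the shallower layer must be an ancestor of the one in the deeper layer. Thus for any pair with $k < j$ (so $P_{i_k}$ is strictly shallower than $P_{i_j}$): if $z_j = x_j$ then $M[x_j, y_k] = 0$, because $x_j$ has no ancestor; and if $z_j = y_j$ then $M[x_k, y_j] = 0$ for the same reason. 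Hence for every $j \geq 2$, either row~$j$ of $M$ is zero strictly left of the diagonal or column~$j$ is zero strictly above, so $M$ is fuzzy triangular. By Lemma~\ref{lem:rank}, $\operatorname{rank}(M) = s$, and since $M$ is a submatrix of $A_G[X, V\setminus X]$ we conclude $\operatorname{cutrk}_G(X) \geq s$.

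The main structural input is the distance bound on consecutive type-$\geq 1$ vertices, but this is immediate from the axioms of either construction. The only mild subtlety is keeping track of the direction of ancestry: inter-layer edges in a layered wheel always go from an ancestor in a shallower layer to a descendant in a deeper layer, so a type~0 endpoint has no inter-layer neighbors in any strictly shallower layer. I do not foresee a serious obstacle.
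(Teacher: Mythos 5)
Your proposal is correct and follows essentially the same route as the paper: one separating edge per separated layer, the observation (from axioms~\ref{axi:A6} and~\ref{axi:B7}) that two adjacent vertices of a layer cannot both have ancestors, and the conclusion that the resulting submatrix of $A_G[X, V\setminus X]$ is fuzzy triangular so that Lemma~\ref{lem:rank} applies. The only difference is cosmetic (you bound $s$ directly where the paper argues by contradiction with $r+1$ layers), and your explicit justification of the fuzzy-triangular structure is a correct unpacking of the step the paper leaves terse.
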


\begin{proof}
  Suppose for a contradiction that $P_{i_1},\dots, P_{i_{r+1}}$ are
  layers that are all separated by $(X, Y)$, where $1 \leq i_1< \dots <
  i_{r+1} \leq l$.  For each integer $i_j$, consider a separating edge $x_{i_j}y_{i_j}$
  of $P_{i_j}$ such that $x_{i_j}\in X$ and $y_{i_j}\in Y$. Set
  $S_X = \{x_{i_1},\dots, x_{i_{r+1}}\}$ and $S_Y = \{y_{i_1},\dots,
  y_{i_{r+1}}\}$. 

  Consider $M[S_X,S_Y]$, the adjacency matrix whose rows are indexed
  by $S_X$ and columns are indexed by $S_Y$.  The definition of
  layered wheels (see~\ref{axi:A6} and~\ref{axi:B7}) says that when two vertices in a layer are
  adjacent, at most one of them has ancestors. It follows that
  $M[S_X,S_Y]$ is fuzzy triangular.  By Lemma~\ref{lem:rank},
  $M[S_X,S_Y]$ has rank $r+1$, a contradiction, because
  $$ \textrm{width}(T,\lambda) \geq \textrm{cutrk}_{G}(X) = \rank(M[X,Y]) \geq \rank(M[S_X,S_Y]) = r+1.$$
\end{proof}

We need the following lemma in our proof.

\begin{lemma}[See \cite{adlerLMRTV:rwehf}]
  \label{lem:subpaths}
  Let $G$ be a graph and $(T,\lambda)$ be a rank decomposition of~$G$
  whose width is at most $r$.  Let $P$ be an induced path of $G$ and $(X,Y)$
  be the partition of $V(G)$ induced by $T\sm e$ where $e \in E(T)$. Then
  each of $P[X]$ and $P[Y]$ contains at most $r+1$ connected
  components.
\end{lemma}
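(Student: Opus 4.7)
The plan is to prove the contrapositive: if $P[X]$ (or $P[Y]$) has too many connected components, then the cut along $e$ exhibits a submatrix of high rank, violating the width assumption. Specifically, I will show that if $P[X]$ has at least $r+2$ components, then $\textrm{cutrk}_G(X)\geq r+1$, contradicting $\textrm{width}(T,\lambda)\leq r$.

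First, traverse $P$ from one end to the other, and let $C_1,C_2,\ldots,C_k$ denote the connected components of $P[X]$ in the order in which they appear along $P$. For each $i\in\{1,\ldots,k-1\}$, let $x_i$ be the endpoint of $C_i$ that is closer to $C_{i+1}$ along $P$, and let $y_i$ be the successor of $x_i$ on $P$. Since $C_i$ and $C_{i+1}$ are distinct components of $P[X]$, the vertex $y_i$ must lie in $Y$. This gives sets $S_X=\{x_1,\ldots,x_{k-1}\}\subseteq X$ and $S_Y=\{y_1,\ldots,y_{k-1}\}\subseteq Y$.

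Next, I will analyze the submatrix $A_G[S_X,S_Y]$ of the adjacency matrix, with rows indexed by $S_X$ and columns by $S_Y$. For $j=i$, the pair $x_iy_i$ is an edge of $P$, so the diagonal entry is $1$. For $j\neq i$, the vertices $x_j$ and $y_i$ are both on $P$ but are not consecutive on $P$ (since $y_i$ sits strictly between $C_i$ and $C_{i+1}$, while $x_j$ lies in some $C_j$ with $j\neq i$, $j\neq i+1$, or else $x_j$ is an endpoint of $C_i$ or $C_{i+1}$ on the wrong side). Because $P$ is an induced path of $G$, non-consecutive vertices of $P$ are non-adjacent in $G$, so the off-diagonal entries vanish. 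Hence $A_G[S_X,S_Y]$ equals the $(k-1)\times(k-1)$ identity matrix.

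The identity matrix of size $k-1$ has rank $k-1$, so if $k\geq r+2$ we obtain
\[
\textrm{cutrk}_G(X)=\rank(A_G[X,Y])\geq \rank(A_G[S_X,S_Y])=k-1\geq r+1,
\]
contradicting $\textrm{width}(T,\lambda)\leq r$. Therefore $P[X]$ has at most $r+1$ components, and the same argument applied to $(Y,X)$ yields the bound for $P[Y]$. The only delicate point is the bookkeeping to verify that $x_j$ and $y_i$ are never consecutive along $P$ when $j\neq i$, which is immediate from the fact that $y_i$ is a $Y$-vertex strictly between two different $X$-components, while every $x_j$ is an $X$-vertex on the "far side" of one of those components; no real obstacle arises.
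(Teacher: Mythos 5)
The paper does not actually prove this lemma; it imports it from \cite{adlerLMRTV:rwehf}, so there is no in-paper argument to compare against. Your overall strategy is the standard (and correct) one: extract one $X$--$Y$ edge of $P$ per ``gap'' between consecutive components and exhibit a full-rank submatrix of the cut matrix. However, one step as written is false: the submatrix $A_G[S_X,S_Y]$ need \emph{not} be the identity. The off-diagonal entry indexed by $(x_{i+1},y_i)$ can be $1$: if $C_{i+1}$ is a singleton component and the $Y$-stretch between $C_i$ and $C_{i+1}$ consists of the single vertex $y_i$, then $y_i$ and $x_{i+1}$ are consecutive on $P$ and hence adjacent. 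Concretely, take $P=abcde$ with $a,c,e\in X$ and $b,d\in Y$; then $x_1=a$, $y_1=b$, $x_2=c$, $y_2=d$, and $cb\in E(P)$, so the matrix is $\left(\begin{smallmatrix}1&0\\1&1\end{smallmatrix}\right)$. Your closing remark that the bookkeeping is ``immediate'' because each $x_j$ lies on the ``far side'' of its component breaks down exactly when that component has only one vertex and has no far side.

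The repair is immediate and does not change the architecture of your proof: the only possibly nonzero off-diagonal entries are those with $j=i+1$, since for $|j-i|\geq 2$ (and for $j<i$) the vertices $x_j$ and $y_i$ are separated on $P$ by at least one whole component or gap, and $P$ being induced forces non-adjacency. Hence $A_G[S_X,S_Y]$ is lower triangular with all diagonal entries equal to $1$, and such a matrix has full rank $k-1$ over any field, in particular over $\mathrm{GF}(2)$. (This is precisely the phenomenon the paper packages as ``fuzzy triangular'' matrices in Lemma~\ref{lem:rank}, which you could invoke directly.) With that correction, the contradiction $\operatorname{cutrk}_G(X)\geq k-1\geq r+1$ goes through and the lemma follows as you intended.
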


\vspace{1 ex}

Now we are ready to describe layered wheels for which we can prove
that the rankwidth is unbounded.  Let us first define some 
terminology that is used along the proof.  Recall Construction~\ref{cons:G1(l,k)} of
ttf-layered-wheels.  Let $u$ and $v$ be two vertices that are
adjacent in a layer $P_i$ for some $i \in \{1,\dots, l-1\}$, and
they appear in this order (from left to right) along $P_i$.  Let $a$ be the
rightmost vertex of $\myBox_u$ and $b$ be the leftmost vertex of
$\myBox_v$ in $P_{i+1}$.  Let $a'$ (resp.\ $b'$) be the neighbor of
$a$ (resp.\ $b$) in $P_{i+1} \sm \myBox_u$ (resp.\
$P_{i+1} \sm \myBox_v$).  The path $a'P_{i+1}b'$ is called the
\emph{$uv$-bridge}.  An edge $pq$ in $a'P_{i+1}b'$ is called \emph{the
middle edge of the bridge} if the length of the paths $a'P_{i+1}p$
and $qP_{i+1}b'$ are equal.

We have a similar definition for ehf-layered-wheel.  For adjacent
vertices $u$ and $v$ in $P_{i+1}$, the $uv$-bridge in $P_{i+1}$ is the
zone labelled $E_{u,v} \subseteq \myBox_u \cap \myBox_v$ (that we
called in the previous section a shared part).  Observe that in both
layered wheels, every internal vertex of some layer yields two
bridges, and each end of a layers yields one bridge.  We say that a layered wheel is
\emph{special} if every bridge in all layers has odd length (and
therefore admits a middle edge).  The following lemmas are a direct
consequence of Construction~\ref{cons:G1(l,k)} and
Construction~\ref{cons:G2(l,k)}.

\begin{lemma}
\label{lem:special-ttf-LW}
  For every integers $l \geq 1$ and $k \geq 4$, there exists a special
  $(l,k)$-ttf-layered-wheel.
\end{lemma}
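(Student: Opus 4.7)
The plan is to exploit the non-deterministic part of Construction~\ref{cons:G1(l,k)}. As the authors note immediately after the construction, the construction is fully determined except for axiom~\ref{axi:A6}, which prescribes only a lower bound of $k-2$ on the length of each subpath $wP_l w'$ between consecutive type-1 vertices. All other structural choices (the number of neighbors, the ordering of boxes, which vertices are of which type, etc.) are forced.

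I would proceed by induction on $l$, constructing the layers $P_0, P_1, \ldots, P_l$ one at a time. When building layer $P_i$ (for $i \geq 1$), I must assign a length $\geq k-2$ to every subpath between consecutive type-1 vertices of $P_i$. The key observation is that every bridge in $P_i$ is precisely of this form minus its two endpoints: for adjacent $u,v \in P_{i-1}$, if $a$ is the rightmost vertex of $\myBox_u$ and $b$ is the leftmost vertex of $\myBox_v$, then $a$ and $b$ are consecutive type-1 vertices in $P_i$, and the $uv$-bridge is $a' P_i b'$ where $a',b'$ are the vertices adjacent to $a,b$ outside the boxes. Consequently the bridge has length exactly $|aP_i b| - 2$.

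To ensure the bridge is odd, I pick $|aP_i b|$ to be the smallest odd integer that is $\geq k-2$, namely $k-2$ when $k$ is odd and $k-1$ when $k$ is even. In either case the choice respects axiom~\ref{axi:A6}. The remaining subpaths between consecutive type-1 vertices (those lying strictly inside a single box, e.g.\ between $u_1$ and $u_2$, or between $u_3$ and $v_1$) can be assigned any admissible length, such as $k-2$, since they play no role in the bridge condition. Because each length is chosen independently, the inductive step goes through and produces a valid $(l,k)$-ttf-layered-wheel in which every bridge has odd length, i.e.\ a special one.

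There is really no serious obstacle: the lemma is a direct consequence of the flexibility in axiom~\ref{axi:A6}. The only care needed is to check that bridges and in-box subpaths are distinct subpaths whose lengths can be set independently, which is immediate from the definition of $\myBox_u$ in axiom~\ref{axi:A4} together with the serial concatenation of boxes described in axioms~\ref{axi:A4}--\ref{axi:A6}.
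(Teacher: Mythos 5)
Your proposal is correct and takes essentially the same route as the paper, which likewise observes that axiom~\ref{axi:A6} only lower-bounds the subpaths between consecutive type-1 vertices, so each $uv$-bridge can be given any odd length at least $k-4$. Your version merely spells out the bridge-length bookkeeping ($|a'P_ib'| = |aP_ib|-2$) in more detail than the paper does.
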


\begin{proof}
  The result follows because by~\ref{axi:A6} of
  Construction~\ref{cons:G1(l,k)}, the path between $\myBox_u$ 
  and~$\myBox_v$ is of length at least~$k-2$. 
  So for any two adjacent vertices in a layer,
  the $uv$-bridge can have any odd length, at least~$k-4$.
\end{proof}

\begin{lemma}
\label{lem:special-ehf-LW}
  For every integers $l \geq 1$ and $k \geq 4$, any ehf-layered-wheel is special.
\end{lemma}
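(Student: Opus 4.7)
The plan is to simply unfold the definitions and invoke Lemma~\ref{lem:length-of-boxes}. By definition, for two adjacent vertices $u,v \in P_i$ in an ehf-layered-wheel, the $uv$-bridge inside $P_{i+1}$ is precisely the zone $E_{u,v} \subseteq \myBox_u \cap \myBox_v$, i.e., a shared part of $\myBox_u$ (and equivalently of $\myBox_v$) in the terminology of the previous subsection.

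First I would recall that every shared part of a box $\myBox_u$ in an ehf-layered-wheel is a path of odd length: this is exactly the first bullet of Lemma~\ref{lem:length-of-boxes}, which was established by direct counting using~\ref{axi:B7} (shared parts are defined to be zones of type $E_{u',u}$ or $E_{u,u''}$, and \ref{axi:B7} forces their length to be odd). Since every bridge of $P_{i+1}$ is of this form, and since this holds for every layer $i \geq 0$, it follows that every bridge in every layer of the ehf-layered-wheel has odd length, i.e., admits a middle edge. Hence the ehf-layered-wheel is special.

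There is essentially no obstacle here: the statement is an immediate translation of the first item of Lemma~\ref{lem:length-of-boxes} into the bridge terminology introduced just before the statement of Lemma~\ref{lem:special-ehf-LW}. The only thing to double-check is that ``bridge'' and ``shared part'' really coincide on ehf-layered-wheels, which is immediate from the construction (in Construction~\ref{cons:G2(l,k)}, the intersection $\myBox_u \cap \myBox_v$ for consecutive $u,v$ in a layer $P_i$ is exactly the zone $E_{u,v}$, which is by definition a shared part of both boxes). Thus the proof is a single sentence citing Lemma~\ref{lem:length-of-boxes}.
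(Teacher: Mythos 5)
Your proof is correct and is essentially identical to the paper's: the paper also disposes of this lemma in one line by observing that the $uv$-bridge in an ehf-layered-wheel is by definition the shared zone $E_{u,v}$, whose length is odd by the first item of Lemma~\ref{lem:length-of-boxes}. Your additional remark verifying that ``bridge'' and ``shared part'' coincide here is the right (and only) point to check, and it holds by Construction~\ref{cons:G2(l,k)}.
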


\begin{proof}
  The result follows from the fact that shared parts have odd length (see
  Lemma~\ref{lem:length-of-boxes}).
\end{proof}

\vspace{1 ex}

Let $G_{l,k}$ be a layered wheel that is special.  Let $uv$ be an edge of some
layer~$P_i$, where $1\leq i < l$, such that $u$ and $v$ appear in
this order (from left to right) along $P_i$.  Then we denote by $r_ul_v$ the middle
edge of the $uv$-bridge (again, $r_u$ and $l_v$ appear in this order
from left to right).

For any vertex $v \in P_i$, $1 \leq i < l$, the \emph{domain of $v$} (or the
\emph{$v$-domain}), denoted by $\Dom(v)$ is defined as follows:
\begin{itemize}

\item if $v \in V(P_0)$, then $\Dom(v) = V(P_1)$;
\item if $v$ is an internal vertex of $P_i$, then
  $\Dom(v) = V(l_vP_{i+1}r_v)$;
  \item if $v$ is the left end of $P_i$, then $\Dom(v) = V(pP_{i+1}r_v)$, 
  where $p$ is the leftmost vertex of $\myBox_v$; and
  \item if $v$ is the right end of $P_i$, then $\Dom(v) =  V(l_vP_{i+1}q)$, 
  where $q$ is the rightmost vertex of $\myBox_v$.
\end{itemize}

Note that for ttf-layered-wheels, $\myBox_v$ is completely contained
in the $v$-domain, which is not the case for ehf-layered-wheels.  We
are now ready to describe the layered wheels that we need.

\begin{definition}
  For some integer $m$, a special layered wheel $G_{l,k}$ is
  \emph{$m$-uniform}, if for every vertex $v \in V(P_i)$,
  $0 \leq i \leq l-1$, $\Dom(v)$ contains exactly $m$ vertices.
\end{definition}

\noindent Observe that by definition, any $m$-uniform layered wheel is special.
\vspace{1 ex}

\begin{lemma}
  \label{lem:existM}
  For every integers $l \geq 1$ and $k \geq 4$ and $M$, there exists an integer
  $m\geq M$ and a ttf-layered-wheel that is $m$-uniform. 
\end{lemma}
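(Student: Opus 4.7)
The plan is to construct an $m$-uniform $(l,k)$-ttf-layered-wheel explicitly by fixing, at each layer, a common bridge length together with two prescribed box sizes so that every domain contains exactly $m$ vertices.

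First I would fix parameters. Let $t$ be the smallest non-negative integer satisfying $2t+1 \geq k-4$, so that a bridge of length $2t+1$, extended by the two edges to the adjacent type-$1$ vertices of the neighbouring boxes, yields a path of length $2t+3 \geq k-2$ as required by \ref{axi:A6}. Then set $m := \max(M,\, 2t + 8k - 13)$. I will construct a ttf-layered-wheel in which every bridge in every layer has length $2t+1$ and every vertex in $V(P_0) \cup \cdots \cup V(P_{l-1})$ has domain of size exactly $m$.

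Next I would build $P_0, P_1, \ldots, P_l$ by induction on $i$, maintaining the invariants (i) $|V(P_i)| = m^i$ and (ii) $|\Dom(v)| = m$ for every $v \in V(P_0) \cup \cdots \cup V(P_{i-1})$. Base case: $P_0 = \{v_0\}$ and $P_1 := \myBox_{v_0}$ is a path on $m$ vertices, realised by picking two internal subpath lengths $\ell_1, \ell_2 \geq k-2$ with $\ell_1 + \ell_2 = m - 1$, which is possible since $m \geq 2k-3$. For the inductive step from $P_i$ to $P_{i+1}$, I would give every bridge between consecutive boxes of $P_{i+1}$ length exactly $2t+1$, and every box $\myBox_v$ the size $B_e := m - t - 1$ if $v$ is an end of $P_i$ or $B_{\mathrm{int}} := m - 2t - 2$ if $v$ is internal in $P_i$. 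Realising these sizes is possible because a type-$0$ box (two internal subpaths $\geq k-2$) can have any size $\geq 2k - 3$, a type-$1$ box (eight internal subpaths $\geq k-2$) any size $\geq 8k-15$, and the lower bound on $m$ guarantees $B_e, B_{\mathrm{int}} \geq 8k - 15$.

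A direct count closes the step: for an end vertex $v$ of $P_i$, $|\Dom(v)| = B_e + (t+1) = m$; for an internal vertex, $|\Dom(v)| = (t+1) + B_{\mathrm{int}} + (t+1) = m$; and summing gives $|V(P_{i+1})| = m \cdot |V(P_i)| = m^{i+1}$. The main obstacle will be the bookkeeping around internal vertices of $P_i$ being possibly of type~$0$ or of type~$1$ while forced to share the same box size $B_{\mathrm{int}}$: feasibility is then driven by the stricter threshold $8k-15$ coming from type-$1$ boxes, which is precisely what the choice $m \geq 2t + 8k - 13$ absorbs. Once this is in place, axioms \ref{axi:A1}--\ref{axi:A8} hold by inspection, and the construction yields the required $m$-uniform $(l,k)$-ttf-layered-wheel with $m \geq M$.
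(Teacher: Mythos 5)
Your proof is correct and takes essentially the same approach as the paper, which simply observes that the flexibility in the lengths allowed by \ref{axi:A6} can be used to make every domain have exactly $m$ vertices. Your version supplies the explicit parameter choices (bridge length $2t+1$, box sizes $B_e$ and $B_{\mathrm{int}}$, and the threshold $m \geq 2t+8k-13$ forced by type-1 boxes) that the paper leaves implicit, and the arithmetic checks out.
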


\begin{proof}
  We construct an $m$-uniform ttf-layered-wheel $G_{l,k}$ by adjusting the
  length obtained in step~\ref{axi:A6} of Construction~\ref{cons:G1(l,k)}.
\end{proof}

\vspace{1 ex}
\begin{lemma}
\label{lem:existM-2}
  For every integers $l \geq 1$ and $k \geq 4$ and $M$, there exists an
  integer $m\geq M$ and an ehf-layered-wheel that is $m$-uniform.
\end{lemma}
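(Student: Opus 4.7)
The plan is to proceed by induction on $l$, at each step using the flexibility in Construction~\ref{cons:G2(l,k)} to choose lengths producing uniform domains. By Lemma~\ref{lem:special-ehf-LW}, every ehf-layered-wheel is already special, so middle edges of bridges are well defined. For the base case $l=1$, there is only the single domain $\Dom(r) = V(P_1)$, and by~\ref{axi:B2} we can take $|V(P_1)|$ arbitrarily large of the required parity.

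For the inductive step, assume layers $P_0,\ldots,P_{l-1}$ have been built $m'$-uniformly for some $m' \geq M$, and build $P_l$. For an internal $v \in P_{l-1}$ with left neighbor $u$ and right neighbor $w$, the domain size decomposes as $|\Dom(v)| = \tfrac{L_{u,v}-1}{2} + p(v) + \tfrac{L_{v,w}-1}{2} + 1$, where $L_{u,v}, L_{v,w}$ are the lengths of the two bridges incident to $v$ (i.e.\ the shared zones $E_{u,v}$ and $E_{v,w}$) and $p(v)$ is the length of the private part of $\myBox_v$; an analogous formula holds when $v$ is an endpoint of $P_{l-1}$, involving just one bridge. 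By Lemma~\ref{lem:length-of-boxes}, bridges have odd length while private parts have a parity determined by whether $v$ is an endpoint of $P_{l-1}$.

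The central idea is to fix all bridges in $P_l$ to the same odd length $L$, chosen large and of a suitable residue modulo~$4$. This reduces the uniformity requirement to $p(v) = m - L$ for internal $v$ and an analogous equation for endpoint $v$. Since the private part decomposes as a concatenation of zone-interior subpaths and inter-zone subpaths, and each such subpath admits by~\ref{axi:B7} any length of the prescribed parity above $k-2$ and can be independently enlarged in steps of~$2$ without disturbing any other constraint, we can realize any sufficiently large target value of $p(v)$ of the correct parity, regardless of whether $v$ has type~0, 1, or~2 and of its position in $P_{l-1}$. Picking $m$ odd, sufficiently large, and of a residue modulo~$4$ that simultaneously reconciles the parity requirements coming from internal and endpoint vertices yields the desired $m$-uniform ehf-layered-wheel with $m \geq M$.

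The main obstacle is not any hard combinatorics but the bookkeeping needed to confirm that the parity constraints imposed by Lemma~\ref{lem:length-of-boxes} and~\ref{axi:B7}, applied simultaneously to all three types of vertices as well as to the two distinct endpoint configurations of $P_{l-1}$, are mutually compatible with one common value of $m$. Once compatible parities are fixed, the remaining argument is a routine padding: each inter-zone path length in each box of $P_l$ is chosen individually so that the totals $p(v)$ all equal $m-L$, and this is always possible because the minimum value of $p(v)$ for each type/position is bounded above by a constant depending only on $k$.
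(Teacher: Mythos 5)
Your proposal is correct and follows the same route as the paper, whose entire proof is the single sentence that one adjusts the lengths chosen in step~\ref{axi:B7} of Construction~\ref{cons:G2(l,k)}; your parity bookkeeping (equal odd bridge length $L$ in a suitable residue class modulo $4$, private parts padded in steps of $2$ to hit $p(v)=m-L$, and the formula $|\Dom(v)|=\tfrac{L_{u,v}-1}{2}+p(v)+\tfrac{L_{v,w}-1}{2}+1$) is a faithful and correct expansion of that one-liner. The only slip is writing the inductive hypothesis with a possibly different $m'$: since $m$-uniformity requires the same domain size at every layer, you should fix the single value $m$ in advance (possible because the minimum private-part lengths depend only on $k$ and the vertex type, not on the layer index) and build every layer to that target, which is what your final paragraph in effect does.
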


\begin{proof}
  We construct an $m$-uniform ehf-layered-wheel $G_{l,k}$ by adjusting
  the length obtained in step~\ref{axi:B7} of
  Construction~\ref{cons:G2(l,k)}.
\end{proof}


For a vertex $v \in P_i$, $0 \leq i \leq l$ and an integer $0 \leq d \leq l-i$, the
\emph{$v$-domain of depth~$d$}, denoted by $\Dom^d(v)$ is defined as
follows.
\begin{itemize}
  \item $\Dom^0(v) = \{v\}$ and $\Dom^1(v) = \Dom(v)$;
  \item $\Dom^d(v) = \bigcup_{x \in \Dom(v)} \Dom^{d-1}(x)$ for $d \geq 1$.
\end{itemize}

\vspace{1 ex}

\begin{observation}
  For every $v \in P_i$ with $0 \leq i \leq l$, and for any
  $0 \leq d \leq l-i$, we have $\Dom^d(v) \seq V(P_{i+d})$, where the
  equality holds when $i = 0$.
\end{observation}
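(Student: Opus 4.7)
The plan is to handle the containment by a straightforward induction on $d$, and to handle the equality case $i=0$ by first establishing a short partition lemma and then iterating it. The only real obstacle, and it is minor, will be the partition lemma: it is geometrically obvious from the pictures but requires one to verify directly from the construction that the middle edges $r_u l_v$ split each bridge evenly and that the outermost boxes of a layer reach the endpoints of the next layer.

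For the containment $\Dom^d(v) \seq V(P_{i+d})$, I would induct on $d$. The base case $d=0$ is immediate since $\Dom^0(v) = \{v\} \seq V(P_i)$. For $d \geq 1$, each of the four defining clauses places $\Dom(v)$ inside $V(P_{i+1})$, so by the induction hypothesis applied to each $x \in \Dom(v)$, one has $\Dom^{d-1}(x) \seq V(P_{(i+1)+(d-1)}) = V(P_{i+d})$. Taking the union over $x \in \Dom(v)$ yields the containment.

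For the equality at $i=0$, the key lemma I would prove is that for every $0 \leq i \leq l-1$, the family $\{\Dom(v) : v \in V(P_i)\}$ partitions $V(P_{i+1})$. The case $i=0$ is trivial because $\Dom(r) = V(P_1)$ for the unique vertex $r$ of $P_0$. For $i \geq 1$, enumerate $V(P_i) = \{v_0, v_1, \dots, v_n\}$ from left to right. For consecutive $v_{k-1}, v_k$, the middle edge $r_{v_{k-1}} l_{v_k}$ of the $v_{k-1} v_k$-bridge makes $r_{v_{k-1}}$ the rightmost vertex of $\Dom(v_{k-1})$ and $l_{v_k}$ the leftmost vertex of $\Dom(v_k)$, so these two domains are disjoint subpaths of $P_{i+1}$ whose union is the subpath from $l_{v_{k-1}}$ to $r_{v_k}$. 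The leftmost vertex of $\Dom(v_0)$ is the leftmost vertex of $\myBox_{v_0}$, which coincides with the leftmost vertex of $P_{i+1}$ since the boxes are visited in the order of $P_i$; the symmetric statement holds on the right. Hence the domains are pairwise disjoint and together cover $V(P_{i+1})$.

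Finally, for $v = r$ the unique vertex of $P_0$, I would prove $\Dom^d(r) = V(P_d)$ by induction on $d$. The cases $d \in \{0, 1\}$ are immediate from the definitions. For $d \geq 2$, unfolding the recursion gives
\[
\Dom^d(r) \;=\; \bigcup_{x_1 \in \Dom(r)} \bigcup_{x_2 \in \Dom(x_1)} \cdots \bigcup_{x_d \in \Dom(x_{d-1})} \{x_d\},
\]
and the partition lemma, applied inductively one layer at a time, shows that as $(x_1, \dots, x_j)$ ranges over all admissible tuples, the entry $x_j$ ranges over exactly $V(P_j)$. Taking $j = d$ yields $\Dom^d(r) = V(P_d)$, which completes the proof.
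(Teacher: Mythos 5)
Your proof is correct. The paper states this as an Observation without any proof, and the partition lemma you isolate is exactly the paper's Lemma~\ref{lem:domain-LW} (at depth $1$, then iterated), whose proof the paper itself dispatches with ``induction on $d$'' --- so your argument is the intended one, simply written out in full, including the two boundary checks (middle edges splitting bridges, outermost boxes reaching the ends of the next layer) that the paper leaves implicit.
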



\vspace{1 ex}

\begin{lemma}
\label{lem:domain-LW}
For every $0 \leq i \leq l$ and $0 \leq d \leq i$,
$V(P_i) = \bigcup_{v \in P_{i-d}} \Dom^d(v)$.  Moreover, for any
distinct $u,v \in V(P_{i-d})$, $\Dom^d(u) \cap \Dom^d(v) = \emptyset$.
\end{lemma}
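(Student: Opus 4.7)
The plan is to prove the statement by induction on $d$, with the base case $d=0$ being trivial and the main content of the induction residing in the case $d=1$, which follows directly from how the domain $\Dom(v)$ is defined via middle edges of bridges and from the way $P_i$ is built as the concatenation of the boxes $\myBox_v$ for $v \in P_{i-1}$.

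For the base case $d=0$, we have $\Dom^0(v)=\{v\}$, so both the covering and the disjointness are immediate. For $d=1$, I would argue as follows. Consider any layer $P_i$ with $i\geq 1$, and list the vertices of $P_{i-1}$ in order from left to right as $v_1, v_2, \ldots, v_n$. By \ref{axi:A6} in the ttf case (resp.\ by \ref{axi:B7} and the structure of shared parts in the ehf case), between any two consecutive vertices $v_j, v_{j+1}$ of $P_{i-1}$, the $v_jv_{j+1}$-bridge is well-defined and has a middle edge $r_{v_j}l_{v_{j+1}}$. The domain $\Dom(v_j)$, for $v_j$ internal, is exactly the subpath of $P_i$ from $l_{v_j}$ to $r_{v_j}$, and the definitions at the left and right ends of $P_{i-1}$ are chosen to pick up the remaining vertices of $\myBox_{v_1}$ and $\myBox_{v_n}$ respectively. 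Hence the domains $\Dom(v_1), \ldots, \Dom(v_n)$ form consecutive subpaths of $P_i$ whose union is exactly $V(P_i)$ and which are pairwise disjoint (consecutive domains meet along an edge of $P_i$ but not at a common vertex, since the middle edge of the bridge has one endpoint in each).

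For the inductive step, assume $d\geq 2$ and that the lemma holds for $d-1$ (at all appropriate layer indices). Given $i$ and $d$ with $0\leq d\leq i$, unfolding the definition yields
\[
  \bigcup_{v\in P_{i-d}} \Dom^d(v) \;=\; \bigcup_{v\in P_{i-d}} \bigcup_{x\in \Dom(v)} \Dom^{d-1}(x) \;=\; \bigcup_{x\in \bigcup_{v\in P_{i-d}}\Dom(v)} \Dom^{d-1}(x).
\]
By the $d=1$ case applied to layer $P_{i-d+1}$, the inner union equals $V(P_{i-d+1})$, and by the induction hypothesis applied with depth $d-1$ from layer $P_{i-d+1}$ to layer $P_i$, the outer union equals $V(P_i)$, giving the covering. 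For disjointness, suppose $u\neq v$ in $P_{i-d}$ and that some vertex lies in $\Dom^d(u)\cap \Dom^d(v)$; then there exist $x\in\Dom(u)$ and $y\in\Dom(v)$ with $\Dom^{d-1}(x)\cap \Dom^{d-1}(y)\neq\emptyset$. The $d=1$ case gives $\Dom(u)\cap\Dom(v)=\emptyset$, so $x\neq y$, and the induction hypothesis on depth $d-1$ then yields $\Dom^{d-1}(x)\cap\Dom^{d-1}(y)=\emptyset$, a contradiction.

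The only substantive step is the base case $d=1$, which is essentially a bookkeeping check against Constructions~\ref{cons:G1(l,k)} and~\ref{cons:G2(l,k)}: one must verify that the middle edges of the bridges genuinely partition $P_i$ into the subpaths $\Dom(v_1), \ldots, \Dom(v_n)$, with the two ends of $P_{i-1}$ handled by the special-case clauses in the definition of $\Dom$. Once this is in place, the inductive step is a purely formal manipulation of the recursion $\Dom^d(v)=\bigcup_{x\in\Dom(v)}\Dom^{d-1}(x)$.
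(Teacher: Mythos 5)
Your proof is correct and follows the same route as the paper, which simply states that the lemma follows by induction on $d$; you have filled in the details, correctly identifying that the only substantive content is the $d=1$ case, where the middle edges of the bridges (together with the special clauses for the ends of $P_{i-1}$) make the domains tile $V(P_i)$ into consecutive pairwise-disjoint subpaths. The inductive step is, as you say, a purely formal unfolding of the recursion for $\Dom^d$.
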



\begin{proof}
The statement simply follows by induction on $d$. 
\end{proof}

\vspace{1 ex}

\begin{lemma}
\label{lem:length-of-domain}
For some integers $l,k,m$, let $G_{l,k}$ be an $m$-uniform layered
wheel.  For every $0 \leq i \leq l-1$, $v \in P_i$, and
$1 \leq d \leq l-i$, we have $|\Dom^d(v)| = m^{d}$.
\end{lemma}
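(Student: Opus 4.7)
The plan is to prove the claim by induction on the depth $d$, leveraging the uniformity assumption for the base case and the disjointness assertion from Lemma~\ref{lem:domain-LW} for the inductive step.

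For the base case $d=1$, the statement is immediate: by the definition of an $m$-uniform layered wheel, $|\Dom(v)| = m$ for every vertex $v$ in layers $P_0, \ldots, P_{l-1}$, and $\Dom^1(v) = \Dom(v)$ by definition.

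For the inductive step, assume that $|\Dom^{d-1}(x)| = m^{d-1}$ holds for every vertex $x$ in an appropriate layer. Fix $v \in P_i$ with $1 \leq d \leq l-i$. Unfolding the recursive definition, $\Dom^d(v) = \bigcup_{x \in \Dom(v)} \Dom^{d-1}(x)$, where $\Dom(v) \subseteq V(P_{i+1})$ and $|\Dom(v)| = m$ by $m$-uniformity. I would then invoke Lemma~\ref{lem:domain-LW} (applied at layer $P_{i+d}$ with respect to layer $P_{i+1}$) to conclude that the sets $\{\Dom^{d-1}(x)\}_{x \in \Dom(v)}$ are pairwise disjoint, since the elements of $\Dom(v)$ are distinct vertices of $P_{i+1}$. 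Summing gives
\[
|\Dom^d(v)| = \sum_{x \in \Dom(v)} |\Dom^{d-1}(x)| = m \cdot m^{d-1} = m^d,
\]
as desired.

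The argument is essentially a routine size-counting induction; the only substantive ingredients are the constant size $m$ of one-step domains (guaranteed by the $m$-uniform hypothesis) and the disjointness of deeper domains rooted at distinct vertices of a common layer (guaranteed by Lemma~\ref{lem:domain-LW}). There is no real obstacle; the main thing to check carefully is that the induction hypothesis is applied to vertices $x \in P_{i+1}$ with the depth parameter $d-1 \leq l-(i+1)$, which holds since $d \leq l-i$.
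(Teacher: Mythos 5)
Your proof is correct and follows essentially the same route as the paper's (much terser) proof: both rest on the $m$-uniformity for the one-step size and on the disjointness from Lemma~\ref{lem:domain-LW} to turn the union in the recursive definition into a sum, giving $|\Dom^d(v)| = m\cdot|\Dom^{d-1}(v)|$. Your version just spells out the induction and the depth bookkeeping that the paper leaves implicit.
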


\begin{proof}
  The statement simply follows from Lemma~\ref{lem:domain-LW} and the $m$-uniformity: for any
  vertex $v$, $|\Dom^1(v)| = m$ and
  $|\Dom^d(v)| = m \cdot |\Dom^{d-1}(v)| $.
\end{proof}

\vspace{1 ex}

\begin{lemma}
  \label{lem:thesize-of-domains}
  For some integers $l,k,m$, let $G_{l,k}$ be an $m$-uniform layered
  wheel.  Denote by $G_{i,k}$, the subgraph induced by the first $i+1$
  layers $P_0,P_1,\dots, P_i$ of~$G_{l,k}$.  Then
  $ |V(G_{i,k})| < \frac{1}{m-1}|V(P_{i+1})| $ for
  $0 \leq i \leq l-1$.
\end{lemma}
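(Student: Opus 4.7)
The plan is to compute $|V(P_j)|$ exactly in terms of $m$ for every $j$, and then sum a geometric series.

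First I would apply Lemma~\ref{lem:domain-LW} with $i=j$ and $d=j$: since $P_0$ consists of a single vertex $r$, the lemma gives $V(P_j) = \Dom^j(r)$. By Lemma~\ref{lem:length-of-domain} applied to $r \in P_0$, we have $|\Dom^j(r)| = m^j$ for every $1 \leq j \leq l$. Thus $|V(P_j)| = m^j$ for $j \geq 1$, while $|V(P_0)| = 1$.

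Next I would sum these sizes to get $|V(G_{i,k})|$. Since $V(G_{i,k}) = V(P_0) \cup V(P_1) \cup \cdots \cup V(P_i)$ and these vertex sets are disjoint by~\ref{axi:A1} (resp.~\ref{axi:B1}), it follows that
\[
|V(G_{i,k})| \;=\; \sum_{j=0}^{i} |V(P_j)| \;=\; 1 + m + m^2 + \cdots + m^i \;=\; \frac{m^{i+1} - 1}{m-1}.
\]

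Finally, since $m \geq 3$ (every vertex has at least three neighbors in the next layer, so each domain contains at least three vertices, hence $m-1 > 0$), the conclusion follows immediately from $m^{i+1} - 1 < m^{i+1} = |V(P_{i+1})|$, which gives
\[
|V(G_{i,k})| \;=\; \frac{m^{i+1}-1}{m-1} \;<\; \frac{m^{i+1}}{m-1} \;=\; \frac{|V(P_{i+1})|}{m-1}.
\]
There is no real obstacle here; the only subtlety is invoking Lemma~\ref{lem:domain-LW} in the special case $d=i$, so that the union over $P_{i-d} = P_0$ collapses to the single domain $\Dom^i(r)$, allowing the use of Lemma~\ref{lem:length-of-domain} to evaluate $|V(P_j)|$ exactly.
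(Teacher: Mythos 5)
Your proof is correct and follows essentially the same route as the paper: identify $V(P_j)$ with $\Dom^j(r)$ for the root $r\in P_0$, apply Lemma~\ref{lem:length-of-domain} to get $|V(P_j)|=m^j$, sum the geometric series to obtain $|V(G_{i,k})|=\frac{m^{i+1}-1}{m-1}$, and compare with $|V(P_{i+1})|=m^{i+1}$. The only difference is that you spell out the justification that $m-1>0$, which the paper leaves implicit.
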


\begin{proof}
  Recall that $V(P_i) = \Dom^i(r)$ for every $1 \leq i \leq l$, with
  $r \in V(P_0)$.  So by Lemma~\ref{lem:length-of-domain},
  $|V(P_i)| = m^i$.  Moreover,
  $|V(G_{i,k})| = \Sigma_{d=0}^i {|\Dom^d(r)|} =
  \frac{m^{i+1}-1}{m-1}.$ Hence, the result directly follows.
\end{proof}

\vspace{1 ex}


\begin{lemma}
\label{lem:Q_X,Q_Y}
Let $l \geq 2$, $k \geq 4$, and $m \geq 15$ be integers, and
$(T,\lambda)$ be a rank decomposition of an $m$-uniform layered wheel
$G_{l,k}$ of width at most $r$.  Let $e$ be a balanced edge in $T$,
and $(X,Y)$ be the partition of $V(G_{l,k})$ induced by $e$.  
Then $P_l$ is separated by $(X,Y)$, and each of
$X$ and $Y$ contains an induced subpath of $P_l$, namely $P_X$ and
$P_Y$ where:
  $$ |V(P_X)|,|V(P_Y)| \geq \floor*{\frac{|V(P_l)|}{3.5(r+1)}}. $$
\end{lemma}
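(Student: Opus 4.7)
The plan is to exploit the fact that, in an $m$-uniform layered wheel, the last layer $P_l$ contains an overwhelming majority of the vertices (by Lemma~\ref{lem:thesize-of-domains}), so a balanced cut in the rank decomposition is forced to leave a large portion of $P_l$ on each side; then Lemma~\ref{lem:subpaths} caps the number of connected components of each of $P_l[X]$ and $P_l[Y]$ by $r+1$, and a pigeonhole argument produces the desired induced subpaths.

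More concretely, first I would use that $e$ is balanced together with the bijection $\lambda : V(G_{l,k}) \to L(T)$ to conclude that $|X|, |Y| \geq |V(G_{l,k})|/3$. Writing $|X \cap V(P_l)| \geq |X| - |V(G_{l-1,k})|$ and using Lemma~\ref{lem:thesize-of-domains} to bound $|V(G_{l-1,k})| < |V(P_l)|/(m-1)$, I would derive
\[
  |X \cap V(P_l)| \;\geq\; \tfrac{|V(P_l)|}{3} - \tfrac{2|V(G_{l-1,k})|}{3}
  \;>\; \tfrac{m-3}{3(m-1)} \, |V(P_l)|,
\]
and symmetrically for $Y$. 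The hypothesis $m \geq 15$ is precisely tuned to make $\tfrac{m-3}{3(m-1)} \geq \tfrac{1}{3.5}$: at $m = 15$ this ratio is exactly $12/42 = 2/7$, and the function is increasing in $m$. In particular both $|X \cap V(P_l)|$ and $|Y \cap V(P_l)|$ are strictly positive, so $P_l$ is separated by $(X, Y)$.

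Next I would invoke Lemma~\ref{lem:subpaths} with $P = P_l$: since $(T,\lambda)$ has width at most $r$, each of $P_l[X]$ and $P_l[Y]$ has at most $r+1$ connected components, each of which is an induced subpath of $P_l$. By pigeonhole, the largest component of $P_l[X]$ has at least $|X \cap V(P_l)|/(r+1) > |V(P_l)|/(3.5(r+1))$ vertices; since vertex counts are integers, this is at least $\lfloor |V(P_l)|/(3.5(r+1)) \rfloor$. Define $P_X$ to be this largest component and $P_Y$ analogously.

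The only delicate point is the arithmetic fixing the threshold at $m = 15$, which is essentially forced (the inequality $\tfrac{m-3}{3(m-1)} \geq \tfrac{1}{3.5}$ fails for $m \leq 14$), so the condition $m \geq 15$ in the hypothesis cannot be weakened without loosening the $3.5$ in the conclusion. Otherwise the argument is a clean combination of three facts already at hand: balancedness supplies a linear lower bound on $|X|$ and $|Y|$, $m$-uniformity concentrates almost all mass on $P_l$, and Lemma~\ref{lem:subpaths} converts the rankwidth bound into a bound on the number of pieces of $P_l$.
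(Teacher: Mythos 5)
Your proposal is correct and follows essentially the same route as the paper: the same use of Lemma~\ref{lem:thesize-of-domains} to show $P_l$ dominates the vertex count, the same computation $|X\cap V(P_l)| > \frac{m-3}{3(m-1)}|V(P_l)| \geq \frac{2}{7}|V(P_l)|$ exploiting $m\geq 15$, and the same application of Lemma~\ref{lem:subpaths} with pigeonhole to extract the long component. The only (harmless) difference is that you deduce separation of $P_l$ directly from the positivity of both intersections, whereas the paper argues it separately by noting $|V(P_l)| > \frac{14}{15}|V(G_{l,k})|$ would force one side below the balancedness threshold.
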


\begin{proof}

  Let first prove that $P_l$ is separated by $(X,Y)$.
  By Lemma~\ref{lem:thesize-of-domains}, we know
  that $|V(P_l)| > (m-1)|V(G_{l-1,k})|$ where $G_{l-1,k} = G_{l,k} \sm P_l$.
  Since $m-1 \geq 14$, we have $|V(P_l)| > \frac{14}{15}|V(G_{l,k})|$
  Hence, $P_l$ cannot be fully contained in~$X$, for otherwise 
  $|Y| < \frac{1}{15} |V(G_{l,k})|$ that would contradict the fact
  that $(X,Y)$ is a balanced decomposition.
  By the same reason, $P_l$ is not fully contained in~$Y$.
  This proves the first statement.

  For the second statement, we will only prove the existence of $P_X$ 
  (for $P_Y$, the proof is similar).
  Since $e$ is a balanced edge of $T$, we have $|X| \geq \frac{1}{3} |V(G_{l,k})|$.  
  Clearly, $$|V(P_l) \cap X| \geq \frac{1}{3} |V(G_{l,k})| - |V(G_{l-1,k})| = \frac{1}{3} \left( |V(P_l) - 2|V(G_{l-1,k})| \right) .$$ 
  By Lemma~\ref{lem:subpaths}, $X$ contains at most $r+1$ connected components of $P_l$. Hence:
    \begin{align*}
        |V(P_X)| & \geq \frac{|V(P_l) \cap X|}{r+1} \\
          & > \frac{|V(P_l)| - \frac{2}{m-1}|V(P_{l})|}{3(r+1)} \tag{1} \\
          & = \frac{m-3}{3(m-1)(r+1)} \ |V(P_l)| \\
          & \geq \frac{2}{7(r+1)} \ |V(P_l)| \tag{2}
    \end{align*}
  Inequality (1) is obtained from Lemma~\ref{lem:thesize-of-domains},
  and (2) follows because $m \geq 15$.
\end{proof}

\vspace{1 ex}
The following theorem is the main result of this section.


\begin{theorem}
\label{th:lowerbound-rankwidth}
For $l\geq 2$, $k \geq 4$, there exists an integer $m$ such that the
rankwidth of an $m$-uniform layered wheel $G_{l,k}$ is at least~$l$.
\end{theorem}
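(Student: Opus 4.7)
The plan is induction on $l$, with the uniformity parameter $m$ chosen to grow with $l$: take $m \geq \max(m', 7l, 15)$, where $m'$ is the uniformity bound supplied by the inductive hypothesis and $15$ is the threshold required by Lemma~\ref{lem:Q_X,Q_Y}. For the base case $l = 2$, observe that since~\ref{axi:A6}/\ref{axi:B7} allow the bridges between consecutive boxes to be made arbitrarily long, one can arrange $G_{2, k}$ to contain an induced hole of length at least five. Since every graph of rankwidth at most one is distance-hereditary and distance-hereditary graphs contain no hole of length at least five, this forces $\rw(G_{2, k}) \geq 2$.

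For the inductive step, suppose for contradiction that $\rw(G_{l, k}) \leq l - 1$ and fix a rank decomposition $(T, \lambda)$ of width at most $l - 1$. By Lemma~\ref{lem:cubic-tree} there is a balanced edge $e \in E(T)$, inducing a partition $(X, Y)$ of $V(G_{l, k})$. Applying Lemma~\ref{lem:Q_X,Q_Y} with $r = l - 1$ tells us that both $X$ and $Y$ contain induced subpaths of $P_l$ of length at least $|V(P_l)|/(3.5 l) = m^l/(3.5 l) \geq 2 m^{l - 1}$ (using $m \geq 7l$). By Lemmas~\ref{lem:domain-LW} and~\ref{lem:length-of-domain}, the family $\{\Dom^{l-1}(v) : v \in P_1\}$ partitions $V(P_l)$ into $m$ consecutive blocks of equal size $m^{l-1}$, so any contiguous subpath of $P_l$ with at least $2 m^{l-1}$ vertices must contain at least one complete block. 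Hence there exist distinct $v_1, v_2 \in P_1$ with $\Dom^{l-1}(v_1) \subseteq X$ and $\Dom^{l-1}(v_2) \subseteq Y$.

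For $i \in \{1, 2\}$, the induced subgraph $H_i = G_{l, k}[\{v_i\} \cup \bigcup_{d=1}^{l-1} \Dom^d(v_i)]$ inherits the construction axioms and is itself an $m$-uniform $(l-1, k)$-layered wheel (with $v_i$ in the role of the unique root), so by the inductive hypothesis $\rw(H_i) \geq l - 1$. To close the argument, the plan is to show that each of the $l+1$ layers $P_0, \ldots, P_l$ is separated by $(X, Y)$: the layer $P_l$ is visibly separated, and for each $i \in \{0, \ldots, l-1\}$ the disjoint subsets $\Dom^{i-1}(v_1), \Dom^{i-1}(v_2) \subseteq V(P_i)$ sit respectively above $\Dom^{l-1}(v_1) \subseteq X$ and $\Dom^{l-1}(v_2) \subseteq Y$, which should allow one to produce a separating edge inside $P_i$ by tracing a cross-cut edge of $P_l$ upward through the box structure of Constructions~\ref{cons:G1(l,k)}/\ref{cons:G2(l,k)}. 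Once $l$ separated layers are exhibited, Lemma~\ref{lem:separated-paths} together with Lemma~\ref{lem:rank} gives a fuzzy triangular submatrix of $A_G[X, Y]$ of rank $l$, contradicting $\text{cutrk}_G(X) \leq l - 1$.

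The main obstacle is precisely this ancestor-tracing step. When a cross-cut edge of $P_l$ is ``lifted'' to higher layers by following ancestors, its two endpoints may coincide at a common ancestor in some layer before the top is reached, and that layer could then fall entirely on one side of the cut. Making the argument rigorous therefore seems to require a careful case analysis of how the boxes and bridges of the construction interact with $(X,Y)$, and possibly a strengthening of the inductive hypothesis so that it directly guarantees separating edges in every layer of the sub-wheel (rather than merely a lower bound on its rankwidth). The case-analysis flavour is reminiscent of the proofs of Theorems~\ref{th:layeredWheel-TTF} and~\ref{th:layeredWheel-EHF}, and I expect the real technical content of the proof to lie here.
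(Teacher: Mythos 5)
Your overall plan --- induction on $l$, culminating in the claim that $l$ of the layers are separated by $(X,Y)$ --- has a genuine gap exactly at the step you flag, and the gap is not repairable along the lines you suggest. Nothing forces an intermediate layer $P_i$, $i<l$, to be separated by the cut: the partition $(X,Y)$ induced by a tree edge has no locality, so the fact that $\Dom^{l-1}(v_1)\subseteq X$ and $\Dom^{l-1}(v_2)\subseteq Y$ places no constraint on where the vertices of $P_1,\dots,P_{l-1}$ land; they could all lie in $X$. Worse, the statement you are aiming for is provably false for any width-$(l-1)$ decomposition: Lemma~\ref{lem:separated-paths} already guarantees that at most $r\le l-1$ layers are separated by $(X,Y)$, so no case analysis of boxes and bridges can ever exhibit $l$ separated layers. (Two secondary issues: your inductive appeal to $\rw(H_i)\ge l-1$ is never actually used in the final contradiction, and it is not clear that $H_i$ is literally an $m$-uniform $(l-1,k)$-layered wheel, since vertices of $\Dom^d(v_i)$ may have ancestors outside $H_i$; also the inductive hypothesis as stated gives one value $m'$, not all $m\ge m'$.)

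The paper's proof turns the situation around and uses the \emph{non}-separated layers. Since at most $r\le l-1$ of the $l+1$ layers are separated, some layer $P_j$ with $1\le j<l$ lies entirely in, say, $X$. Lemma~\ref{lem:Q_X,Q_Y} supplies a long subpath $P_Y$ of $P_l$ inside $Y$; because the domains $\Dom^{l-j}(v)$ for $v\in V(P_j)$ partition $V(P_l)$ into consecutive blocks of size $m^{l-j}$ (Lemmas~\ref{lem:domain-LW} and~\ref{lem:length-of-domain}), at least $\lfloor |V(P_Y)|/m^{l-j}\rfloor\ge l$ vertices of $P_j$ each have a private neighbour in $P_Y$, which yields an identity submatrix of order $l$ in $A_G[X,Y]$ and hence $\textrm{cutrk}_{G}(X)\ge l>r$, the desired contradiction --- no induction on $l$ is needed. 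Your first steps (balanced edge, Lemma~\ref{lem:Q_X,Q_Y}, locating whole domains on each side of the cut) are compatible with this and could be salvaged, but the all-layers-separated endgame should be discarded in favour of pairing one unseparated layer with the separated top layer.
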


\begin{proof}
  Set $M=15$ and consider an integer $m$ as in Lemma~\ref{lem:existM}
  (or Lemma~\ref{lem:existM-2}), and let $G_{l,k}$ be $m$-uniform.
  
  Suppose for a contradiction that $\rw(G_{l,k}) = r$ for some
  integer $r \leq l-1$.  Let $(T,\lambda)$ be a rank decomposition of
  $G_{l,k}$ of width $r$, and $e$ be a balanced edge of $T$ that
  partition $V(G_{l,k})$ into $(X,Y)$.  Let
  $\mathcal{P} = \{P_0,P_1,\dots, P_l\}$ be the set of layers in the layered wheel, 
  and $\mathcal{S}$ be the set of paths in $\mathcal{P}$ that are separated by $(X,Y)$.
  By Lemma~\ref{lem:separated-paths}, $|\mathcal{S}| \leq r$.
  
  Note that $P_0 \notin \mathcal{S}$ because it contains a single vertex. 
  So, $\mathcal{P} \sm \mathcal{S} \neq \emptyset$.  
  Let $P_j \in \mathcal{P} \sm \mathcal{S}$,
  i.e., the vertices of $P_j$ are completely contained either in $X$ or $Y$.  
  Without loss of generality, we may assume that $V(P_j) \seq X$. 
  
  \vspace{1ex}
  
  \begin{claim}
  \label{cl:value-of-j}
      There exists some $j$ such that $1 \leq j < l$.
  \end{claim}
  
  \bpc
  	Note that $l-r \geq 1$, because $r \leq l-1$.
  	So it is enough to prove that such a $j \geq l-r$ exists.
  	We know that $|\mathcal{S}| \leq r \leq l-1$.
  	If every path $P_j \in \mathcal{P} \sm \mathcal{S}$ has index $j < l-r$,
  	then $|\mathcal{P} \sm \mathcal{S}| \leq l-r$.
  	This implies $|\mathcal{S}| \geq (l+1) - (l-r) = r+1$, a contradiction,
  	so the left inequality of the statement holds
    (the bound is tight when $\mathcal{S} = \bigcup_{l-r+1 \leq i \leq l} \{P_i\}$). 
	Furthermore, by Lemma~\ref{lem:Q_X,Q_Y}, $P_l \in \mathcal{S}$,
	so for every $P_j$ that satisfies the left inequality, 
	we know that $j < l$.  
  \epc 
  
  \vspace{1ex}
  
  Now by Lemma~\ref{lem:Q_X,Q_Y}, there exists a
  subpath $P_Y$ of $P_l$, such that $V(P_Y) \seq Y$ and
  $|V(P_Y)| \geq \floor*{\frac{|V(P_l)|}{3.5(r+1)}}$,  with $|V(P_l)| = m^l$
  (because $|V(P_l)| = \Dom^l(r)$ where $r \in P_0$).
 
  Let $P'$ be the set of vertices in $P_j$ such that
  $N(v) \cap V(P_Y) \neq \emptyset$ for every $v \in P'$.
  Note that the order (left to right) of the domains of $V(P_j)$ in layer $P_l$
  appear as the order of $V(P_j)$ in $P_j$, and
  by Lemma~\ref{lem:domain-LW}, for every
  $v \neq v' \in P_j$, we have $\Dom^{l-j}(v) \cap \Dom^{l-j}(v') = \emptyset$.
  So $P'$ induces a path.
  Moreover, for each vertex $v \in P'$, we can fix a vertex
  $y_v \in V(P_Y) \cap \Dom^{l-j}(v)$, such that $vy_v \in E(G)$.
  Thus for any $v \neq v' \in P'$, we have
  $y_v \neq y_{v'}$, and in particular, $vy_v, v'y_{v'} \in E(G)$ and $v'y_v, vy_{v'} \notin E(G)$.
  Let us denote $S_X = V(P')$ and $S_Y = \{y_v \ | \ v \in S_X\}$.
  Observe that there is a bijection between $S_X$ and $S_Y$, so
  $M[S_X,S_Y]$ is the identity matrix of size $|S_X|$.
  
  Furthermore, by Lemmas~\ref{lem:domain-LW} and~\ref{lem:length-of-domain}, we have
  $|S_X| \geq \floor* {\frac{|V(P_Y)|}{|\Dom^{l-j}(v)|}} = \floor*
  {\frac{|V(P_Y)|}{m^{l-j}}}$.
  By Claim~\ref{cl:value-of-j}, Lemma~\ref{lem:Q_X,Q_Y}, and taking $m \geq 4 l^2$, the following holds.
  $$ |S_X| \geq \floor*{\frac{m^{l}}{3.5(r+1)m^{l-j}}}
    \geq \floor*{\frac{m^{j}}{3.5(r+1)}} 
  	\geq \floor*{\frac{m}{3.5l}}  
  	\geq \floor*{\frac{3.5l^2}{3.5l}}
  	\geq l $$
  which yields a contradiction, because
 $$ r \geq \textrm{width}(T,\lambda) \geq \textrm{cutrk}_{G_{l,k}}(X) = \rank(M[X,Y]) \geq \rank(M[S_X,S_Y]) = |S_X| \geq l. $$
\end{proof}

\section{Upper bound}
\label{sec:upper-bound}

Layered wheels have an exponential number of vertices in terms of
the number of layers $l$. In Section~\ref{sec:layeredWheels}, we have
seen that the treewidth of layered wheels is lower-bounded by $l$.  In
this section, we give an upper bound of the treewidth of layered
wheels.  As mentioned in the introduction, we indeed prove a stronger result:
the so-called \emph{pathwidth} of layered wheels is upper-bounded by
some linear function of $l$.  Since layered wheels $G_{l,k}$ contain
an exponential number of vertices in terms of the number of layers,
this implies that $\tw(G_{l,k}) = \bigO \left(\log |V(G_{l,k})| \right)$.
Beforehand, let us state some useful notions.

\subsection*{Pathwidth}

A \emph{path decomposition} of
a graph $G$ is defined similarly as a tree decomposition except that
the underlying tree is required to be a path. Similarly, the width of the
path decomposition is the size of a largest bag minus one, and the
pathwidth is the minimum width of a path decomposition of $G$.
The \emph{pathwidth} of a graph $G$ is denoted by $\pw(G)$.  
As outlined in the introduction, 
path decomposition is a special case of tree decomposition.
We restated the following lemma that was already mentioned in 
Lemma~\ref{lem:width-comparation}.

\begin{lemma}
\label{lem:tw-leq-pw}
	For any graph $G$, $\tw(G) \leq \pw(G)$.
\end{lemma}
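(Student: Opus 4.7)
The plan is to observe that the inequality is a direct consequence of the definitions, with no combinatorial work required: a path is a tree, so every path decomposition is automatically a tree decomposition, and minimizing over a larger collection can only give a smaller value.

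More explicitly, I would proceed as follows. Let $G$ be a graph and let $(P, \{X_t\}_{t \in V(P)})$ be an optimal path decomposition of $G$, so its width equals $\pw(G)$. Since $P$ is a path, it is in particular a tree. Moreover, the three conditions (T1), (T2), (T3) required of a tree decomposition are exactly the conditions defining a path decomposition (applied with the underlying tree being a path): every vertex appears in some bag, every edge appears in some common bag, and for each vertex $u$ the set of bags containing $u$ forms a connected sub-path, which is in particular a connected sub-tree. Therefore $(P, \{X_t\}_{t \in V(P)})$ is a valid tree decomposition of $G$.

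The width of this tree decomposition is, by definition, $\max_{t \in V(P)} |X_t| - 1 = \pw(G)$. Since $\tw(G)$ is the minimum width over \emph{all} tree decompositions of $G$, we obtain $\tw(G) \leq \pw(G)$, as claimed.

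There is no real obstacle here: the only thing to verify is that the connectedness axiom (T3) is preserved, and this is immediate because a connected subgraph of a path is automatically a connected subgraph viewed as a subtree. Consequently the proof is essentially a single observation and can be stated in just a few lines.
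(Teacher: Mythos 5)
Your proof is correct and matches the paper's justification, which simply notes that pathwidth is a special case of treewidth (every path decomposition is a tree decomposition, so the minimum over all tree decompositions can only be smaller). You have merely spelled out the same one-line observation in more detail.
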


Let $P$ be a path, and $P_1,\dots, P_k$ be subpaths of $P$. The
\emph{interval graph} associated to $P_1,\dots, P_k$ is the graph
whose vertex set is $\{P_1,\dots, P_k\}$ with an edge between any
pair of paths sharing at least one vertex.  So, interval graphs are
intersection graphs of a set of subpaths of a path.

\begin{lemma} [See Theorem 7.14 of~\cite{CyganFKLMPPS15}] 
\label{lem:pw-interval-graph}
Let $G$ be a graph, and $I$ be an interval graph that contains $G$ as
a subgraph (possibly not induced).  Then $\pw(G) \leq \omega(I) - 1$,
where $\omega(I)$ is the size of the maximum clique of $I$.
\end{lemma}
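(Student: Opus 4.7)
The plan is to exhibit an explicit path decomposition of $I$ whose width is $\omega(I)-1$, and then observe that such a decomposition is automatically a path decomposition of any subgraph $G \subseteq I$. Concretely, let $P = v_1 v_2 \dots v_n$ be the underlying path. I would take as the decomposition the sequence of bags $X_{v_1}, X_{v_2}, \dots, X_{v_n}$, where
\[
X_{v} \;=\; \{\, P_i \,:\, v \in V(P_i)\,\}.
\]
I would then verify the three axioms of a path/tree decomposition directly: (T1) each $P_i$ is nonempty so it belongs to $X_v$ for any $v \in V(P_i)$; (T2) for each edge $P_iP_j \in E(I)$ the paths $P_i,P_j$ share some $v \in V(P)$, which puts both into $X_v$; (T3) the set of bags containing $P_i$ corresponds exactly to $V(P_i)$, which is a connected subpath of the index path.

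Next I would bound the width. For each $v \in V(P)$, the set $X_v$ is a clique of $I$ (pairwise intersection at $v$), hence $|X_v| \leq \omega(I)$. Conversely, the Helly property of subpaths of a path guarantees that any clique of $I$ is the set of paths containing some common vertex, so $\max_v |X_v| = \omega(I)$. This gives a path decomposition of $I$ of width $\omega(I)-1$, so $\pw(I) \leq \omega(I)-1$.

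Finally, since $V(G) = V(I)$ and $E(G) \subseteq E(I)$, the same family $(X_{v_1}, \dots, X_{v_n})$ is also a valid path decomposition of $G$: (T1) and (T3) are unchanged, and (T2) for $G$ follows from (T2) for $I$ applied to edges of $G \subseteq I$. Hence $\pw(G) \leq \pw(I) \leq \omega(I)-1$, which is the desired bound.

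The only mildly non-routine step is the Helly property used to identify $\omega(I)$ with $\max_v |X_v|$; everything else is a direct verification of the axioms and monotonicity of path decompositions under taking subgraphs. Since the statement is quoted from a textbook (\cite{CyganFKLMPPS15}), in practice I would simply cite it, but the sketch above is the standard self-contained argument.
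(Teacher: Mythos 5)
Your argument is correct. Note that the paper does not prove this lemma at all: it is quoted from Theorem~7.14 of the cited textbook, so there is no in-paper proof to compare against. Your self-contained argument is the standard one and works: the bags $X_v=\{P_i : v\in V(P_i)\}$ indexed along the underlying path satisfy (T1)--(T3) for $I$, each bag is a clique of $I$ and hence has size at most $\omega(I)$, and intersecting the bags with $V(G)$ gives a path decomposition of any subgraph $G$ of $I$ of no larger width. Two small remarks: the Helly property is not needed for the stated inequality (you only need $|X_v|\le\omega(I)$, not equality with the clique number), and since a subgraph may satisfy $V(G)\subsetneq V(I)$ you should restrict each bag to $V(G)$ rather than assert $V(G)=V(I)$; both are cosmetic and do not affect correctness.
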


Now, for every layered wheel $G_{l,k}$, we describe an interval graph
$I(G_{l,k})$ such that $G_{l,k}$ is a subgraph of $I(G_{l,k})$.  We
define the \emph{scope} of a vertex. This is similar to its domain,
but slightly different (the main difference is that scopes may overlap
while domains do not).  For $v \in V(P_i$), where $0 \leq i \leq l-1$,
the scope of $v$, denoted by $\Scope(v)$, is defined as follows.

\vspace{1ex}

\noindent For a ttf-layered-wheel:
\begin{itemize}
  \item if $v \in P_0$, $\Scope(v) = V(P_1)$;
  \item if $v$ is in the interior of $P_{i}$, then
    $\Scope(v) = V(L) \cup \myBox_v \cup V(R)$, where 
    $L$ is the $uv$-{bridge} and $R$ is the $vw$-{bridge}, 
    $u$ and $w$ are the left and the right neighbors of $v$ in $P_{i}$ respectively;
  \item if $v$ is the left end of $P_i$, then
    $\Scope(v) = \myBox_v \cup V(R)$ where $R$ is the $vw$-{bridge} and
    $w$ is the right neighbor of $v$ in $P_{i}$;
  \item if $v$ is the right end of $P_i$, then
    $\Scope(v) = V(L) \cup \myBox_v$, where $L$ is the $uv$-{bridge} and
    $u$ is the left neighbor of $v$ in $P_{i}$.
\end{itemize}

\noindent For an ehf-layered-wheel:
\begin{itemize}
  \item $\Scope(v) = \myBox_v$ for every $v \in P_i$, $0 \leq i \leq l-1$.
\end{itemize}

For $d \geq 0$, we also define the \emph{depth-$d$ scope} of each vertex in the layered wheel,
which will be denoted by $\Scope^d(v)$.  We
define $\Scope^0(v) =\{v\}$, and
$$ \Scope^d(v) = \bigcup_{x \in \Scope(v)} \Scope^{d-1}(x) \ \ \textrm{for } 1 \leq d \leq l-i. $$

For a layered wheel $G_{l,k}$, we define the interval graph
$I(G_{l,k})$.  For every vertex $v \in G_{l,k}$, define path $P(v)$
associated to $v$ as follows:
\begin{itemize}
\item if $v \in P_l$ is not the right end of $P_l$, then $P(v) = vw$
  where $w$ is the right neighbor of $v$;
  \item if $v$ is the right end of $P_l$, then $P(v) = \{v\}$;
  \item if $v \in P_i$ with $i < l$, then
    $P(v) = P_l \left[ \Scope^{l-i}(v) \right]$.
\end{itemize}

\noindent Note that $P(v)$ is a subpath of $P_l$. The graph
$I(G_{l,k})$ is the interval graph associated to
$\{P(v) \ | \ v \in V(G_{l,k}) \}$.

\vspace{1 ex}

\begin{lemma}
  \label{lem:subgraphI}
  For any layered wheel $G_{l,k}$ and the corresponding interval graph
  $I(G_{l,k})$, $G_{l,k}$ is a subgraph (possibly not induced) of
  $I(G_{l,k})$.
\end{lemma}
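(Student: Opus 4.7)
The plan is to verify, for every edge $uv \in E(G_{l,k})$, that $P(u) \cap P(v) \neq \emptyset$, which is precisely what it means for $uv$ to be represented as an edge of the interval graph $I(G_{l,k})$. The verification rests on two preparatory facts about the iterated scope operator that I would establish first: (i) for $v \in P_i$ and $d \leq l-i$, the set $\Scope^d(v)$ is a subpath of $P_{i+d}$, so that $P(v)$ is well-defined; and (ii) for the same $v$ and $d$, $\Scope^d(v)$ contains every neighbor of $v$ that lies in $P_{i+d}$. I would also record the basic identity $\Scope^{d+e}(v) = \bigcup_{x \in \Scope^{d}(v)} \Scope^{e}(x)$, which implies that $u \in \Scope^{d}(v)$ forces $\Scope^{e}(u) \subseteq \Scope^{d+e}(v)$.

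For (i), I would induct on $d$. The base cases are immediate from the definition of $\Scope(v)$, which is visibly a subpath of $P_{i+1}$ in both the ttf and ehf settings. The key observation for the inductive step is that two adjacent vertices $x, x'$ in the same layer have overlapping depth-one scopes: in the ttf case, the $xx'$-bridge lies in both $\Scope(x)$ and $\Scope(x')$; in the ehf case, the shared zone $E_{x, x'}$ sits in both boxes. Combined with the identity above, this overlap propagates to arbitrary depth. Writing $\Scope(v) = x_1 x_2 \dots x_k$ as a subpath of $P_{i+1}$ and applying the inductive hypothesis to each $x_j$, the union $\Scope^d(v) = \bigcup_j \Scope^{d-1}(x_j)$ is then a chain of subpaths of $P_{i+d}$ whose consecutive members overlap, and is therefore itself a subpath.

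For (ii), another induction on $d$ does the job. The base case $d=1$ is the defining property of the box, namely~\ref{axi:A4} for ttf and~\ref{axi:B4} for ehf: $\myBox_v \subseteq \Scope(v)$ already contains every neighbor of $v$ in $P_{i+1}$. For the inductive step, I would unwind the construction to show that any neighbor $z \in P_{i+d}$ of $v$ must lie in some $\myBox_y$ where $y \in P_{i+d-1}$ has $v$ among its ancestors; this $y$ is then itself a neighbor of $v$ in $P_{i+d-1}$, so by induction $y \in \Scope^{d-1}(v)$, and therefore $z \in \myBox_y \subseteq \Scope(y) \subseteq \Scope^d(v)$. In the ehf case one has to inspect each type of zone in $\myBox_y$ that can contain a neighbor of $v$ (those carrying $v$ in the subscript), but the same ancestor-tracing idea works uniformly.

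With (i), (ii), and the identity in hand, the lemma follows by a short case analysis on the edge $uv$: (A) if $u, v \in P_l$, then $P(u) \cap P(v) \neq \emptyset$ directly from the definition of $P(\cdot)$ on $P_l$; (B) if $u \in P_l$ and $v \in P_i$ with $i < l$, then $u \in \Scope^{l-i}(v) = V(P(v))$ by (ii), and $u \in V(P(u))$ trivially; (C) if $u, v \in P_i$ are adjacent in a layer with $i < l$, then $\Scope^{l-i}(u) \cap \Scope^{l-i}(v) \neq \emptyset$ by the overlap property propagated from (i); (D) if $u \in P_j$ and $v \in P_i$ with $i < j < l$, then $u \in \Scope^{j-i}(v)$ by (ii), whence $\Scope^{l-j}(u) \subseteq \Scope^{l-i}(v)$ by the identity, giving $P(u) \subseteq P(v)$. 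I expect the main obstacle to be (ii): one must faithfully trace, through the nested construction of boxes in both the ttf and (especially) the ehf layered wheels, how an edge between two non-consecutive layers is actually produced, verifying that the ancestor-relations always factor through a $v$-neighbor in the intermediate layer $P_{i+d-1}$.
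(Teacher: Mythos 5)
Your proposal is correct and follows essentially the same route as the paper: reduce to showing $V(P(u))\cap V(P(v))\neq\emptyset$ for every edge $uv$, split into cases according to which layers $u$ and $v$ lie in, and use the overlap of scopes of adjacent vertices together with the nesting $\Scope^{d+e}(v)=\bigcup_{x\in\Scope^d(v)}\Scope^e(x)$. The only difference is one of rigor: you isolate and prove the two facts (scopes of depth $d$ are subpaths of $P_{i+d}$, and they contain all neighbors of $v$ in $P_{i+d}$) that the paper's proof uses implicitly without stating.
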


\begin{proof}
  It is clear by definition that there is a bijection between
  $V(I(G_{l,k}))$ and $V(G_{l,k})$.  We show that
  $E(G_{l,k}) \subseteq E(I(G_{l,k}))$: for any two
  vertices $u, v \in G_{l,k}$, if $uv \in E(G_{l,k})$ then the
  corresponding paths $P(u)$ and $P(v)$ share at least one vertex (i.e.\
  $V(P(u)) \cap V(P(v)) \neq \emptyset$).
  
  For $u,v \in P_l$ where $u$ is on the left of $v$, 
  this property trivially holds, because by definition,
  $P(u)$ and $P(v)$ both contain $v$. 
  If $u \in P_i$ for some $i < l$ and $v \in P_l$,
  then $V(P(v)) \subseteq V(P(u)) = \Scope^{l-i}(u)$.
  The case is similar when $v \in P_i$ for some $i < l$ and $u \in P_l$.
  
  If $u, v \in P_i$ for some $i < l$, then by definition,
  $\Scope(u) \cap \Scope(v) \neq \emptyset$ (they both contain the $uv$-bridge).  Let
  $x \in \Scope(u) \cap \Scope(v)$. Note that for $1 \leq d \leq l-i$,
  $\Scope^d(u)$ and $\Scope^d(v)$ both contain $\Scope^{d-1}(x)$.
  If $u \in P_i$ and $v \in P_j$ where
  $1 \leq i < j < l$, then $\Scope(v) \subseteq \Scope^{j-i+1}(u)$. So
  $\Scope^d(v) \subseteq \Scope^{d+j-i}(u)$ for every $1 \leq d \leq l-j$.
  The case is similar when $u \in P_j$ and $v \in P_i$ where $1 \leq i < j < l$.
  Hence, $V(P(u)) \cap V(P(v)) \neq \emptyset$.
\end{proof}


\vspace{1 ex}
\begin{theorem}
\label{th:bounded-pw}
For every integers $l \geq 2$ and $k \geq 4$, we have $\tw(G_{l, k}) \leq \pw(G_{l,k}) \leq 2l$.
\end{theorem}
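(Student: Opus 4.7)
The inequality $\tw(G_{l,k}) \leq \pw(G_{l,k})$ is Lemma~\ref{lem:tw-leq-pw}, so only the bound $\pw(G_{l,k}) \leq 2l$ requires proof. By Lemma~\ref{lem:subgraphI}, $G_{l,k}$ is a subgraph of the interval graph $I(G_{l,k})$ defined just above, so Lemma~\ref{lem:pw-interval-graph} gives $\pw(G_{l,k}) \leq \omega(I(G_{l,k})) - 1$. Hence it suffices to show that $\omega(I(G_{l,k})) \leq 2l+1$.

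Since $I(G_{l,k})$ is the intersection graph of subpaths of $P_l$, the Helly property for intervals on a path implies that every clique of $I(G_{l,k})$ corresponds to a family of paths $P(v)$ that share a common vertex $x \in V(P_l)$. For such a fixed $x$ and each layer $i$, set $N_i = \{v \in V(P_i) : x \in V(P(v))\}$. We have $|N_0| = 1$, because $V(P_0) = \{r\}$ and $P(r) = P_l$ (the depth-$l$ scope of $r$ covers all of $V(P_l)$), and $|N_l| \leq 2$, because $P(v)$ for $v \in V(P_l)$ is either $\{v\}$ or $\{v,w\}$ with $w$ the right neighbor of $v$, so the only possible $v$'s are $x$ itself and the left neighbor of $x$ in $P_l$ (if it exists).

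The crux is to show that $|N_i| \leq 2$ for every $1 \leq i \leq l-1$; in fact I will prove, by downward induction on $i$, the stronger statement that $N_i$ is either a singleton or a pair of consecutive vertices of $P_i$. The base case $i = l-1$ follows directly from the definition of $\Scope$ at layer $l-1$: scopes of two non-adjacent vertices of $P_{l-1}$ are vertex-disjoint, whereas the scopes of two consecutive vertices $u,v$ share exactly the $uv$-bridge (ttf) or the shared zone $E_{u,v}$ (ehf). So $x$ lies in the scopes of at most two vertices of $P_{l-1}$, which must be consecutive. For the inductive step, observe that $v \in N_i$ iff $V(\Scope(v)) \cap N_{i+1} \neq \emptyset$. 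If $N_{i+1} = \{y\}$, the base-case argument applied to $y$ gives at most two consecutive $v$'s. If $N_{i+1} = \{y_1, y_2\}$ with $y_1, y_2$ consecutive in $P_{i+1}$, set $A_j = \{v \in V(P_i) : y_j \in V(\Scope(v))\}$; each $A_j$ is a singleton or a consecutive pair, and one must show $A_1 \cup A_2$ still has this property. This rests on the observation that two consecutive vertices $y_1, y_2$ of $P_{i+1}$ cannot lie in two distinct shared structures attached to the same box (the left and the right bridge of a box in ttf, or the two shared zones $E_{-,v}$ and $E_{v,-}$ of $\myBox_v$ in ehf), because by~\ref{axi:A6} or~\ref{axi:B7} (and Lemma~\ref{lem:length-of-boxes}) these are separated by a path of length at least $k-2 \geq 2$. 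A short case analysis then shows $A_1 \cup A_2$ has at most two consecutive vertices. Summing over layers, $\omega(I(G_{l,k})) \leq 1 + 2(l-1) + 2 = 2l+1$, which gives the bound.

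The main obstacle is the case analysis in the inductive step: one must verify, for each possible placement of $y_1$ and $y_2$ (both inside a single box, straddling the boundary of two adjacent boxes, both inside a bridge or shared zone, one in a private part and one in a shared zone, etc.) and separately for ttf and ehf, that no third vertex of $P_i$ slips into $A_1 \cup A_2$. The quantitative separation of at least $k-2 \geq 2$ between the shared structures inside each box is the decisive ingredient that rules out all such unwanted third vertices.
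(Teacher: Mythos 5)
Your reduction is exactly the paper's: pass to the interval graph $I(G_{l,k})$ via Lemma~\ref{lem:subgraphI}, apply Lemma~\ref{lem:pw-interval-graph}, and bound $\omega(I(G_{l,k}))\leq 2l+1$ by showing a clique meets each layer $P_1,\dots,P_l$ in at most two vertices (plus the single vertex of $P_0$). Where you diverge is in how that per-layer bound is obtained. The paper proves one clean claim: for non-adjacent $u,v$ in a layer $P_i$, the depth-$d$ scopes $\Scope^d(u)$ and $\Scope^d(v)$ are disjoint for every $d$ (a single induction on $d$, using that the right end of $\Scope(u)$ and the left end of $\Scope(v)$ are themselves non-adjacent vertices of the next layer). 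This immediately makes non-adjacent vertices of $P_i$ non-adjacent in $I(G_{l,k})$, so a clique restricted to $P_i$ is a clique of the path $P_i$ and has at most two vertices --- no Helly property and no tracking of a common point needed. You instead invoke the Helly property to fix a common vertex $x\in V(P_l)$ and run a downward induction over layers on the set $N_i$ of vertices whose interval contains $x$, which forces you to maintain the stronger invariant that $N_i$ is a single vertex or a consecutive pair, and to carry out the case analysis you yourself flag as the main obstacle (where $y_1,y_2$ sit relative to boxes, bridges, and shared zones, separately for ttf and ehf). That analysis does go through --- the decisive separation you cite (private parts of boxes have length at least $k-2\geq 2$, so adjacent vertices of a layer cannot witness two distinct shared structures) is correct and is essentially the same geometric fact the paper uses for its base case $d=1$ --- but it is only sketched, and the paper's formulation shows it can be packaged once as a disjointness statement about scopes and then reused, rather than re-examined in each configuration of $y_1,y_2$. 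In short: same skeleton, correct but heavier and less fully executed flesh on the key counting step.
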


\begin{proof}
  \setcounter{claim}{0} By Lemmas~\ref{lem:tw-leq-pw} (third item),
  \ref{lem:pw-interval-graph} and~\ref{lem:subgraphI}, it is enough to
  show that $\omega(I(G_{l,k})) \leq 2l+1$.  
  
  \vspace{1 ex}
  
  \begin{claim}
  \label{cl:interval-graph-clique-nb}
  Let $u$ and $v$ be non-adjacent vertices in $P_i$ for some
  $1 \leq i \leq l-1$.
  Then for any $1 \leq d \leq l-i$, we have
  $\Scope^d(u) \cap \Scope^d(v) = \emptyset$.
  \end{claim}


  \bpc 
  Let $u$ and $v$ be be non-adjacent vertices in $P_i$, where $1 \leq i \leq l-1$
  and without loss of generality, they appear in this order (from left to right) along $P_i$.
  We prove the statement by induction on~$d$.
  
  For $d = 1$, 
  it follows from the definition that $\Scope^1(u) \cap \Scope^1(v) = \emptyset$ for every possible $i$.
  Suppose for induction that $\Scope^d(u) \cap \Scope^d(v) = \emptyset$ for some $1 \leq d \leq l-i-1$.
  Note that $\Scope^d(u)$ and $\Scope^d(v)$ appear in this order along $P_{i+d}$.
  Moreover, the right end of $\Scope(u)$ and the left end of $\Scope(v)$ are also non-adjacent 
  (because they both are vertices with an ancestor).
  So for any $x \in \Scope(u)$ and $y \in \Scope(v)$, we have $xy \notin E(G_{l,k})$,
  It then follows by construction, that for every $d \geq 2$,
  for any $x \in \Scope^d(u)$ and $y \in \Scope^d(v)$, we have $xy \notin E(G_{l,k})$,
  so the induction hypothesis holds for the pair $x$ and $y$.
  We need to show that $\Scope^{d+1}(u) \cap \Scope^{d+1}(v) = \emptyset$.
  Indeed:
  $$ \Scope^{d+1}(u) \cap \Scope^{d+1}(v) = \bigcup_{x \in \Scope(u)} \Scope^{d}(x) \cap \bigcup_{y \in \Scope(v)} \Scope^{d}(y) = \emptyset, $$
  which completes our induction.
  \epc
  
  \vspace{1 ex}
  
  Let $K$ be a maximum clique in $I(G_{l,k})$.  
  By definition, for every $u,v \in P_l$ that 
  are non-adjacent, we have $V(P(u)) \cap V(P(v)) = \emptyset$.
  So no edge exists between $P_u$ and $P_v$ in $I(G_{l,k})$. 
  Similarly for non-adjacent vertices $u, v \in P_i$ where $1 \leq i \leq l-1$,
  it follows from Claim~\ref{cl:interval-graph-clique-nb}, 
  that $V(P(u)) \cap V(P(v)) = \emptyset$.  
  Therefore, $K$ contains at most two vertices of every layer $P_i$, with
  $1 \leq i \leq l$.  Since $K$ may also contain the unique vertex in
  $P_0$, then $\omega(I(G_{l,k})) \leq 2l+1$ as desired.
\end{proof}

\vspace{1ex}

The following directly follows.

\begin{corollary}
	For any integers $l \geq 2$ and $k \geq 4$, we have 
	$\tw(G_{l,k}) = \bigO \left(\log |V(G_{l,k})| \right)$.
\end{corollary}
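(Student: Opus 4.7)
The plan is to combine the upper bound on pathwidth from Theorem~\ref{th:bounded-pw} with a lower bound on the number of vertices of $G_{l,k}$ expressed in terms of $l$. Specifically, Theorem~\ref{th:bounded-pw} gives $\tw(G_{l,k}) \leq 2l$, so it suffices to show that $l = O(\log |V(G_{l,k})|)$, which will follow from $|V(G_{l,k})|$ being at least exponential in~$l$.

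First I would invoke Lemma~\ref{lem:neighborhood-ttf-layered-wheel} (in the ttf case) or Lemma~\ref{lem:neighborhood-ehf-layered-wheel} (in the ehf case) with $i = 0$ and $j = l$: the unique vertex of $P_0$ has at least~$3^l$ neighbors in $P_l$. Consequently $|V(G_{l,k})| \geq |V(P_l)| \geq 3^l$, hence $l \leq \log_3 |V(G_{l,k})|$.

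Combining this with Theorem~\ref{th:bounded-pw}, we obtain
\[
\tw(G_{l,k}) \leq \pw(G_{l,k}) \leq 2l \leq \frac{2}{\log 3}\,\log |V(G_{l,k})|,
\]
which is exactly the desired $\bigO(\log |V(G_{l,k})|)$ bound.

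There is no real obstacle here: both ingredients have already been established in the paper (the treewidth bound in Theorem~\ref{th:bounded-pw} and the exponential growth of neighborhoods across layers in Lemmas~\ref{lem:neighborhood-ttf-layered-wheel} and~\ref{lem:neighborhood-ehf-layered-wheel}). The only mild subtlety is making sure the argument applies uniformly to both ttf- and ehf-layered wheels, which is immediate since the two neighborhood lemmas give identical conclusions. The proof is therefore a one-line combination of previous results.
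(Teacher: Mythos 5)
Your proposal is correct and follows exactly the paper's own argument: the paper likewise combines Lemmas~\ref{lem:neighborhood-ttf-layered-wheel} and~\ref{lem:neighborhood-ehf-layered-wheel} (giving at least $3^l$ vertices, hence $l = \bigO(\log|V(G_{l,k})|)$) with the bound $\tw(G_{l,k}) \leq 2l$ from Theorem~\ref{th:bounded-pw}. Your version is if anything slightly more explicit about the constants, but the route is identical.
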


\begin{proof}
	By Lemma~\ref{lem:neighborhood-ttf-layered-wheel} and Lemma~\ref{lem:neighborhood-ehf-layered-wheel},
	we know that $G_{l,k}$ contains at least $c \cdot 3^{l}$ vertices for some integer $c \geq 3$. Hence by Theorem~\ref{th:bounded-pw}, we have $\tw(G_{l,k}) \leq 2l \leq c' \cdot \log|V(G_{l,k})| $ for some constant $c' >0$.
\end{proof}

\section{Acknowledgement}

Thanks to \'Edouard Bonnet, Zden\v ek Dvo\v r\'ak, Serguei Norine,
Marcin Pilipczuk, Sang-il Oum, Natacha Portier, St\'ephan Thomass\'e,
Kristina Vu\v skovi\'c, and R\'emi Watrigant for useful discussions.
We are also grateful to two anonymous referees. In particular their remarks
lead us to discover a mistake in Construction~\ref{cons:G2(l,k)} that
is now fixed.

\bibliographystyle{plain}

\end{document}